\begin{document}
\newcommand{\commentout}[1]{}

\newcommand{\nwc}{\newcommand}
\newcommand{\bz}{{\mathbf z}}
\newcommand{\sqk}{\sqrt{\ks}}
\newcommand{\sqkone}{\sqrt{|\ks_1|}}
\newcommand{\sqktwo}{\sqrt{|\ks_2|}}
\newcommand{\invsqkone}{|\ks_1|^{-1/2}}
\newcommand{\invsqktwo}{|\ks_2|^{-1/2}}
\newcommand{\partz}{\frac{\partial}{\partial z}}
\newcommand{\grady}{\nabla_{\by}}
\newcommand{\gradp}{\nabla_{\bp}}
\newcommand{\invf}{\cF^{-1}_2}
\newcommand{\myphi}{\Phi_{(\eta,\rho)}}
\newcommand{\minrg}{|\min{(\rho,\gamma^{-1})}|}
\newcommand{\al}{\alpha}
\newcommand{\xvec}{\vec{\mathbf x}}
\newcommand{\kvec}{{\vec{\mathbf k}}}
\newcommand{\lt}{\left}
\newcommand{\ksq}{\sqrt{\ks}}
\newcommand{\rt}{\right}
\nwc{\bG}{{\bf G}}
\newcommand{\ga}{\gamma}
\newcommand{\vas}{\varepsilon}
\newcommand{\lan}{\left\langle}
\newcommand{\ran}{\right\rangle}
\newcommand{\tvas}{{W_z^\vas}}
\newcommand{\psiep}{{W_z^\vas}}
\newcommand{\wep}{{W^\vas}}
\newcommand{\weptil}{{\tilde{W}^\vas}}
\newcommand{\wepz}{{W_z^\vas}}
\newcommand{\weps}{{W_s^\ep}}
\newcommand{\wepsp}{{W_s^{\ep'}}}
\newcommand{\wepzp}{{W_z^{\vas'}}}
\newcommand{\wepztil}{{\tilde{W}_z^\vas}}
\newcommand{\vvas}{{\tilde{\ml L}_z^\vas}}
\newcommand{\veptil}{{\tilde{\ml L}_z^\vas}}
\newcommand{\vep}{{{ V}_z^\vas}}
\newcommand{\cvc}{{{\ml L}^{\ep*}_z}}
\newcommand{\cvcp}{{{\ml L}^{\ep*'}_z}}
\newcommand{\cvp}{{{\ml L}^{\ep*'}_z}}
\newcommand{\cvtil}{{\tilde{\ml L}^{\ep*}_z}}
\newcommand{\cvtilp}{{\tilde{\ml L}^{\ep*'}_z}}
\newcommand{\vtil}{{\tilde{V}^\ep_z}}
\newcommand{\ktil}{\tilde{K}}
\newcommand{\n}{\nabla}
\newcommand{\tkappa}{\tilde\kappa}
\newcommand{\ks}{{\omega}}
\newcommand{\mbx}{\mb x}
\newcommand{\br}{\mb r}
\nwc{\bH}{{\mb H}}
\newcommand{\bu}{\mathbf u}
\nwc{\bxp}{{{\mathbf x}}}
\nwc{\byp}{{{\mathbf y}}}
\newcommand{\bD}{\mathbf D}
\nwc{\bh}{\mathbf h}
\newcommand{\bB}{\mathbf B}
\newcommand{\bb}{\mathbf b}
\newcommand{\bC}{\mathbf C}
\nwc{\cO}{\mathcal  O}
\newcommand{\bp}{\mathbf p}
\newcommand{\bq}{\mathbf q}
\newcommand{\by}{\mathbf y}
\nwc{\bP}{\mathbf P}
\nwc{\bs}{\mathbf s}
\nwc{\bX}{\mathbf X}
\newcommand{\pdg}{\bp\cdot\nabla}
\newcommand{\pdgx}{\bp\cdot\nabla_\bx}
\newcommand{\one}{1\hspace{-4.4pt}1}
\newcommand{\corr}{r_{\eta,\rho}}
\newcommand{\rinf}{r_{\eta,\infty}}
\newcommand{\rzero}{r_{0,\rho}}
\newcommand{\rzeroinf}{r_{0,\infty}}
\nwc{\om}{\omega}
% theorem-like enviroments:
\nwc{\Gp}{{G_{\rm par}}}
\nwc{\nwt}{\newtheorem}
\nwc{\xp}{{x^{\perp}}}
\nwc{\yp}{{y^{\perp}}}
\nwt{remark}{Remark}
\nwt{definition}{Definition} %def is already defined
\nwc{\bd}{{\mb d}}
\nwc{\ba}{{\mb a}}
\nwc{\mbe}{{\mb e}}
\nwc{\bal}{\begin{align}}
\nwc{\bea}{\begin{eqnarray}}
\nwc{\beq}{\begin{eqnarray}}
\nwc{\bean}{\begin{eqnarray*}}
\nwc{\beqn}{\begin{eqnarray*}}
\nwc{\beqast}{\begin{eqnarray*}}

%\nwc{\ea}{\end{array}}
\nwc{\eal}{\end{align}}
\nwc{\eea}{\end{eqnarray}}
\nwc{\eeq}{\end{eqnarray}}
\nwc{\eean}{\end{eqnarray*}}
\nwc{\eeqn}{\end{eqnarray*}}
\nwc{\eeqast}{\end{eqnarray*}}

\nwc{\ep}{\varepsilon}
\nwc{\eps}{\varepsilon}
\nwc{\ept}{\epsilon}
\nwc{\vrho}{\varrho}
\nwc{\orho}{\bar\varrho}
\nwc{\ou}{\bar u}
\nwc{\vpsi}{\varpsi}
\nwc{\lamb}{\lambda}
\nwc{\Var}{{\rm Var}}

\nwt{cor}{Corollary}
\nwt{proposition}{Proposition}
\nwt{corollary}{Corollary}
\nwt{theorem}{Theorem}
\nwt{summary}{Summary}
\nwt{lemma}{Lemma}
\nwc{\nn}{\nonumber}
%\nwc{\bm}{\boldmath}
\nwc{\mf}{\mathbf}
\nwc{\mb}{\mathbf}
\nwc{\ml}{\mathcal}
\nwc{\bj}{{\mb j}}
\nwc{\bA}{{\mb A}}
\nwc{\IA}{\mathbb{A}} %algebraic
\nwc{\bi}{\mathbf i}
\nwc{\bo}{\mathbf o}
\nwc{\IB}{\mathbb{B}}
\nwc{\IC}{\mathbb{C}} %complex
\nwc{\ID}{\mathbb{D}} %Dedekind
\nwc{\IM}{\mathbb{M}} %Dedekind
\nwc{\IP}{\mathbb{P}} %Dedekind
\nwc{\bI}{\mathbf{I}} %Dedekind
\nwc{\IE}{\mathbb{E}} %Euklides
\nwc{\IF}{\mathbb{F}} %finite field
\nwc{\IG}{\mathbb{G}} %Gauss
\nwc{\IN}{\mathbb{N}} %natural
\nwc{\IQ}{\mathbb{Q}} %rational
\nwc{\IR}{\mathbb{R}} %real
\nwc{\IT}{\mathbb{T}} %torus
\nwc{\IZ}{\mathbb{Z}} 
\nwc{\II}{\mathbb{I}} %integers

\nwc{\cE}{{\ml E}}
\nwc{\cP}{{\ml P}}
\nwc{\cQ}{{\ml Q}}
\nwc{\cL}{{\ml L}}
\nwc{\cX}{{\ml X}}
\nwc{\cW}{{\ml W}}
\nwc{\cZ}{{\ml Z}}
\nwc{\cR}{{\ml R}}
\nwc{\cV}{{\ml V}}
\nwc{\cT}{{\ml T}}
\nwc{\crV}{{\ml L}_{(\delta,\rho)}}
\nwc{\cC}{{\ml C}}
\nwc{\cA}{{\ml A}}
\nwc{\cK}{{\ml K}}
\nwc{\cB}{{\ml B}}
\nwc{\cD}{{\ml D}}
\nwc{\cF}{{\ml F}}
\nwc{\cS}{{\ml S}}
\nwc{\cM}{{\ml M}}
\nwc{\cG}{{\ml G}}
\nwc{\cH}{{\ml H}}
\nwc{\cN}{{\ml N}}
\nwc{\bk}{{\mb k}}
\nwc{\bT}{{\mb T}}
\nwc{\cbz}{\overline{\cB}_z}
\nwc{\supp}{{\hbox{\rm supp}}}
\nwc{\fR}{\mathfrak{R}}
\nwc{\bY}{\mathbf Y}
\newcommand{\mbr}{\mb r}
\nwc{\pft}{\cF^{-1}_2}
\nwc{\bU}{{\mb U}}
\nwc{\bPhi}{{\mb \Phi}}
\nwc{\bPsi}{{\mb \Psi}}

\commentout{
\newenvironment{proof}[1][Proof]{\begin{trivlist}
\item[\hskip \labelsep {\bfseries #1}]}{\end{trivlist}}
\newenvironment{definition}[1][Definition]{\begin{trivlist}
\item[\hskip \labelsep {\bfseries #1}]}{\end{trivlist}}
\newenvironment{example}[1][Example]{\begin{trivlist}
\item[\hskip \labelsep {\bfseries #1}]}{\end{trivlist}}
\newenvironment{remark}[1][Remark]{\begin{trivlist}
\item[\hskip \labelsep {\bfseries #1}]}{\end{trivlist}}
}

\newcommand{\CC}{\mathbb{C}}
\newcommand{\hj}{\hat{J}}
\newcommand{\tj}{\tilde{J}}
\newcommand{\xmax}{x_{\text{max}}}
\newcommand{\xmin}{x_{\text{min}}}
\newcommand{\bx}{{x}}
\newcommand{\be}{\bar{e}}
\newcommand{\emax}{c_{\text{max}}}
\nwc{\red}{\color{red}}
\nwc{\blue}{\color{blue}}
\nwc{\green}{\color{green}}
\newcommand{\RR}{\mathbb{R}}
\nwc{\suppx}{{\hbox{supp}(\mathbf x)}}
\nwc{\bn}{\mathbf{n}}

\title{Coherence-Pattern  Guided Compressive Sensing  with Unresolved Grids}
\author{Albert C.  Fannjiang*}
  \address{
   Department of Mathematics,
    University of California, Davis, CA 95616-8633}
%\email{fannjiang@math.ucdavis.edu}
  \thanks{
 *Corresponding author:  fannjiang@math.ucdavis.edu. 
The research supported in part by NSF Grant DMS 0908535}
\author{ Wenjing Liao }
%\email{wjliao@math.ucdavis.edu}
\maketitle

\begin{abstract}

Highly coherent sensing matrices arise  in discretization of 
continuum  imaging problems 
such as radar  and medical imaging when the grid spacing
is below the Rayleigh threshold.

Algorithms based on techniques of band exclusion (BE) and local optimization (LO) are proposed
to deal with such coherent sensing matrices. These techniques
are embedded in the existing compressed sensing algorithms
such as Orthogonal Matching Pursuit (OMP),
Subspace Pursuit (SP), Iterative Hard Thresholding (IHT),
Basis Pursuit (BP) and Lasso,  and result in the modified algorithms
BLOOMP, BLOSP, BLOIHT, BP-BLOT and Lasso-BLOT, respectively.

Under appropriate conditions, it is proved that BLOOMP
can reconstruct sparse, widely separated objects 
up to one Rayleigh length in the Bottleneck distance
{\em independent}   of the grid spacing. One of
the most distinguishing attributes of BLOOMP
is its capability of dealing with large dynamic ranges. 

The BLO-based algorithms are systematically tested 
with respect to  four
performance metrics: dynamic range, noise stability, 
sparsity  and resolution. 
With respect to  dynamic range and noise stability,
BLOOMP is the best performer. 
With respect to sparsity, BLOOMP is the best
performer for high dynamic range while for dynamic range
near unity BP-BLOT and Lasso-BLOT with the optimized regularization parameter have the best performance.  In the noiseless case, 
BP-BLOT has the highest resolving power  up to certain dynamic range. 

 The algorithms
BLOSP and  BLOIHT  are good alternatives  to 
 BLOOMP and BP/Lasso-BLOT: they are faster
than both BLOOMP and BP/Lasso-BLOT  and shares, to a lesser degree, BLOOMP's amazing attribute with respect to dynamic range.   

Detailed comparisons with existing algorithms such
as Spectral Iterative Hard Thresholding (SIHT) and the frame-adapted
BP are given. 
\end{abstract}

\section{Introduction}
Reconstruction  of a high-dimensional sparse signal from sparse linear measurements is
a fundamental problem relevant to  
imaging, inverse problems and  signal processing.  

Consider, for example,  the problem  of spectral estimation  in signal processing.  Let the uncontaminated signal $y(t)$ be a  linear
combinations of 
$s$ time-harmonic components
\[
\{e^{-i2\pi \om_j t}: j=1,...,s\}, 
\] 
namely
\beq
\label{50}
y(t)=\sum_{j=1}^{s} c_j e^{-i2\pi \om_j t}
\eeq
where $c_j$  are the amplitudes. Suppose that $y(t)$ is contaiminated by noise $n(t)$ and the received signal is 
 \beq
 \label{40}
 b(t)=y(t)+n(t). 
 \eeq
The task is to find out the frequencies $\Omega=\{\om_j\}$ and
the amplitudes $\{c_j\}$
by sampling $b(t)$ at discrete times.  
 
A standard approach to spectral estimation  is to turn the problem  into 
  the linear inversion problem as follows.
  To fix the idea, let  $t_k, k=1,...,N$ be the sample times 
  in the unit interval $[0,1]$. Set $\bb=(b(t_k))\in \IC^N$ to be the data vector. 
%Without any prior information about the frequency content,
We approximate the frequencies by the unknown closest subset of cardinality $s$  of a regular grid $\cG=
\{p_1,\ldots, p_M\}$   
and write the corresponding amplitudes as $\mbx=(x_j)\in \IC^M$
where the components of $\mbx$ equal the amplitudes $\{c_j\}$
whenever the grid points are the {\em nearest}  grid points to the 
frequencies $\{\om_j\}$  and zero otherwise. 
Typically the number of approximating grid points is far greater than
the number of frequencies, i.e. $M\gg s$.

Let the measurement matrix be 
\beq
\label{2}
\bA=\begin{bmatrix}
     \ba_{1} & \ldots & \ba_{M}
  \end{bmatrix}\in \IC^{N\times M}
  \eeq
  with
  \beq
\label{3}
 \ba_j={1\over \sqrt{N}}\lt( e^{-i2\pi t_kp_j}\rt) \in \IC^N,\quad j=1,...,M.
  \eeq
We cast  the spectral estimation problem into the form
\beq\label{1}
    \bA \mbx + \mbe = \bb
    \label{linearsystem}
\eeq
where the error vector 
 $\mbe=(e_k)\in \IC^N$ is the sum of  the external noise
 $\bn=(n(t_k))$ and the discretization or gridding  error $\bd=(\delta_k) \in \IC^N $
 due to approximating 
 the frequencies by the grid points in $\cG$. 
By definition, the gridding  error is given by 
\beq
\bd=\bb-\bn-\bA \mbx. 
\eeq
% alternative definition
%\beq
%\mbe_g&=&\min_{\bz}\|\bb-\hat\bA \bz\|_2
%\eeq
%where the components of $\mbx$ equal the amplitudes $\{c_j\}$
%whenever the grid points are the {\em nearest}  grid points to the 
%frequencies $\{\om_j\}$  and zero otherwise.  
Small gridding error requires  
that the objects are {\em a priori} close
 to the grid points. 
The gridding error is related to basis mismatch analyzed
in \cite{Chi2}.

\commentout{
$$ \begin{bmatrix}
     \ba_{1} &  \ldots & \ba_{s}  & \ba_{s+1} & \ldots & \ba_{M}
  \end{bmatrix}
  \begin{bmatrix}
     x_{1} \\
     \vdots \\
     x_{s} \\
     x_{s+1} \\
     \vdots \\
     x_{M}
  \end{bmatrix}
  + \mbe = \bb
$$
}
 %Here and below we shall assume that  $\bA$ is  column normalized, i. e. $\|\ba_{k}\|_2 = 1,k=1,...M$.

Sparse reconstruction with 
$N, s \ll M$,  where $s$ is the sparsity of   $\mbx$, has recently 
attracted a lot of attention in various areas thanks to
the breakthroughs in compressive sensing (CS)  \cite{CT, Don, Tib}. 
  %Without loss of generality, let $|x_{1}| \ge |x_{2}| \ge \ldots \ge |x_{s}|$ and $|x_{{s+1}}| = |x_{{s+2}}| = \ldots = |x_{M}| = 0$. Our objective is to recover the sparse $x$ from $\eqref{linearsystem}$.
The main thrust of CS is  the
$L^1$-minimization principle,  Basis Pursuit (BP)  and Lasso,  
%and the Dantzig selector
%or $L^1$-regularization principle (Lasso)
%the Least-Absolute Shrinkage and Selection Operator  (Lasso)) 
for solution characterization. Many $L^1$-based algorithms as well as
 the alternative,  greedy algorithms, which are
  not directly based on global optimization,  require either incoherence or Restricted Isometry Property (RIP)  to have good
  performances.

%The condition guaranteeing performance for OMP involves 
One commonly used characterization of  incoherence in CS is 
%the Mutual Incoherence Property (MIP) \cite{BDE}. 
%Typically MIP is formulated
 in terms of the mutual
coherence  $\mu$.  
Let  the pairwise coherence between the $k$-th and $j$-th columns be 
\begin{equation}
  \mu (k,l) = {|\lan \ba_{k},\ba_{l}\ran |\over |\ba_k| |\ba_l|}. 
  \label{innerproduct}
\end{equation}
The mutual coherence of $\bA$ is the maximum pairwise
coherence among all pairs of columns 
\begin{equation}
   \mu(\bA) = \max_{j \neq l} \mu (k,l).
   \label{coherence}
\end{equation}

\commentout{
Let the condition number of a $n\times m$ ($n\geq m$) matrix be defined as
the ratio of its largest singular value to its $m$-th  singular value. 
It can be shown that the condition number of any 
submatrix of at most $k$ columns is at most
\[
\sqrt{1+(k-1)\mu(\bA) \over 1-(k-1)\mu(\bA)}
\]
provided that $(k-1)\mu(\bA)<1$. In other words, for any given $k$
the smaller the mutual coherence the better conditioned 
any submatrix of at most $k$ columns. 
}

%Now for the spectral estimation problem,  suppose that
%the sample times $t_k$ are taken to be  i.i.d. uniform random variables in
%the set $[0,1]$. 
According to theory of optimal recovery \cite{Don1}, 
for time sampling in $[0,1]$, the minimum resolvable length in
the frequency domain is unity. This is  the Rayleigh threshold
and we shall refer to this  length  as the Rayleigh length (RL). 
Hence for the traditional  inversion methods to work, it is
essential that the grid spacing in $\cG$ is no less than 1 RL.
In the CS setting  the Rayleigh threshold is closely
related to the decay property of the mutual coherence 
 \cite{crs}. Moreover,  for $\cG\subset \IZ$ 
%the resolvable frequency grids  are
%subsets of $\IZ$ and if 
%\beq
%\label{9}
%\om_j\in \IZ
%\eeq
and uniformly randomly selected $t_k\in [0,1]$ 
 the corresponding matrix  $\bA$ is 
a random partial Fourier matrix 
which  has a decaying mutual coherence  
 $\mu=\cO(N^{-1/2})$ and satisfies RIP with high probability  \cite{CT, Rau}.
%guaranteeing recovery of sparsity $s=\cO(N^{1/2})$  by Orthogonal Matching Pursuit (see 
%Proposition \ref{prop1} below).  

%For instance, for the standard grid $\IZ$ and
%frequencies  randomly distributed over continuum, OMP with
%the sensing matrix $\hat\bA$  has a poor performance, as illustrated in Figure \ref{fig328-1} (left) even if these
%frequencies are widely separated (by at least 3 RLs). 

\begin{figure}[t]
\includegraphics[width=8cm]{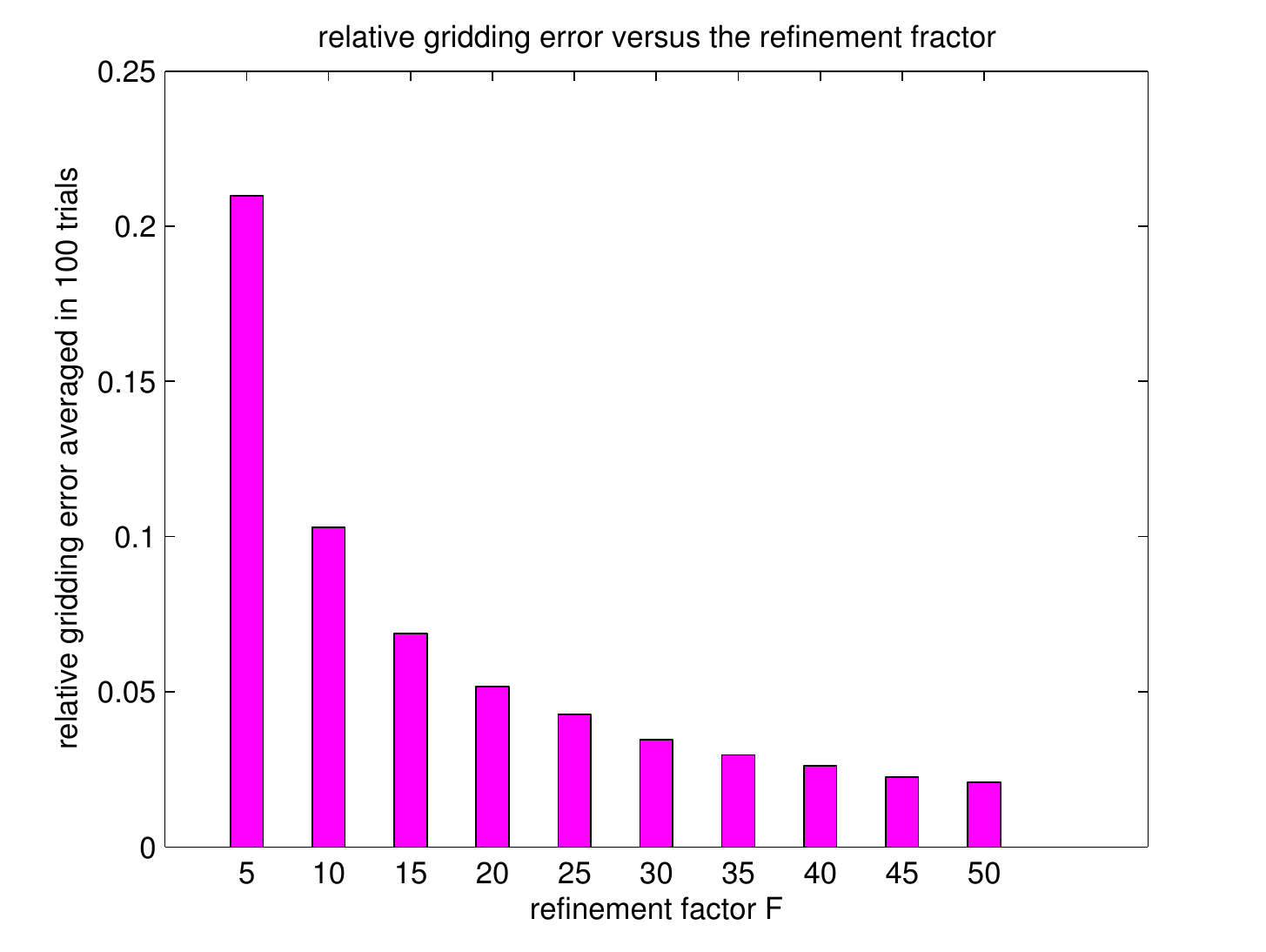}
\caption{%(Left) OMP reconstruction  with a resolved grid of 10 randomly located objects separated by at least 3 RLs. 
%The Bottleneck distance
 %between the true object support and the reconstruction
% is 357.5. 
The relative gridding error is roughly inversely proportional to the refinement factor.  }
\label{fig328-1}
\end{figure}

Without any prior information about the object support, the gridding error for the resolved grid, however, can be  as large
as the data themselves, creating a unfavorable condition for sparse reconstruction.  To reduce the gridding error, 
it is natural to  consider 
the fractional  grid  
\beq
\label{23}
\IZ/F=\{ j/F: j\in \IZ\}
\eeq
 with
some large integer $F\in \IN$ which is {\em the
refinement factor}. Figure \ref{fig328-1}  shows that the
relative gridding error $\|\bd\|_2/\|\bb\|_2$ is roughly inversely proportional to the refinement factor.
The mutual coherence, however,  increases 
 with $F$ as  the near-by columns of the sensing matrix
become highly correlated.

%Consequently the standard OMP
%does not generally perform well on the fractional grid.  
%When $F=1$, the grid is resolved
%and we call the grid spacing ($=1$) the {\em resolution length}. 
Figure \ref{fig1} shows the coherence pattern $[\mu(j,k)]$ of
a $100\times 4000$ matrix (\ref{3}) 
%which is the absolute value of the Gram matrix $\bA^{\star}\bA$, 
with $F=20$ (left panel). The bright diagonal band represents a heightened correlation (pairwise coherence) between
a column vector and its neighbors  on both sides (about 30). 
The right panel of Figure \ref{fig1} shows  a half cross section of the coherence band across two RLs.     
Sparse recovery with large $F$ exceeds the capability of currently known
algorithms as the condition number of the  $100\times 30$ 
submatrix corresponding to the coherence band in Figure \ref{fig1}
easily exceeds  $10^{15}$. The high condition number makes 
stable recovery  impossible.  

\commentout{
\begin{figure}[t]
  \includegraphics[width=5cm]{CoFrac10M2000N100.eps}
   \includegraphics[width=5cm]{CoFrac10M2000N500.eps}
   \includegraphics[width=6cm]{April22/CoherenceBand.eps}
 \caption{Coherence pattern $[\mu(j,k)]$ for the $100\times 2000$ (left) and the $500\times 2000$ matrices with $F=10$ (middle). 
 The off-diagonal elements tend to diminish as the row number increases. The coherence band near the diagonals, however, persists, and has the profile shown on the right panel.}
  \label{figurecoherencepattern}\label{fig1}
\end{figure}
}
\begin{figure}[t]
  \includegraphics[width=8cm]{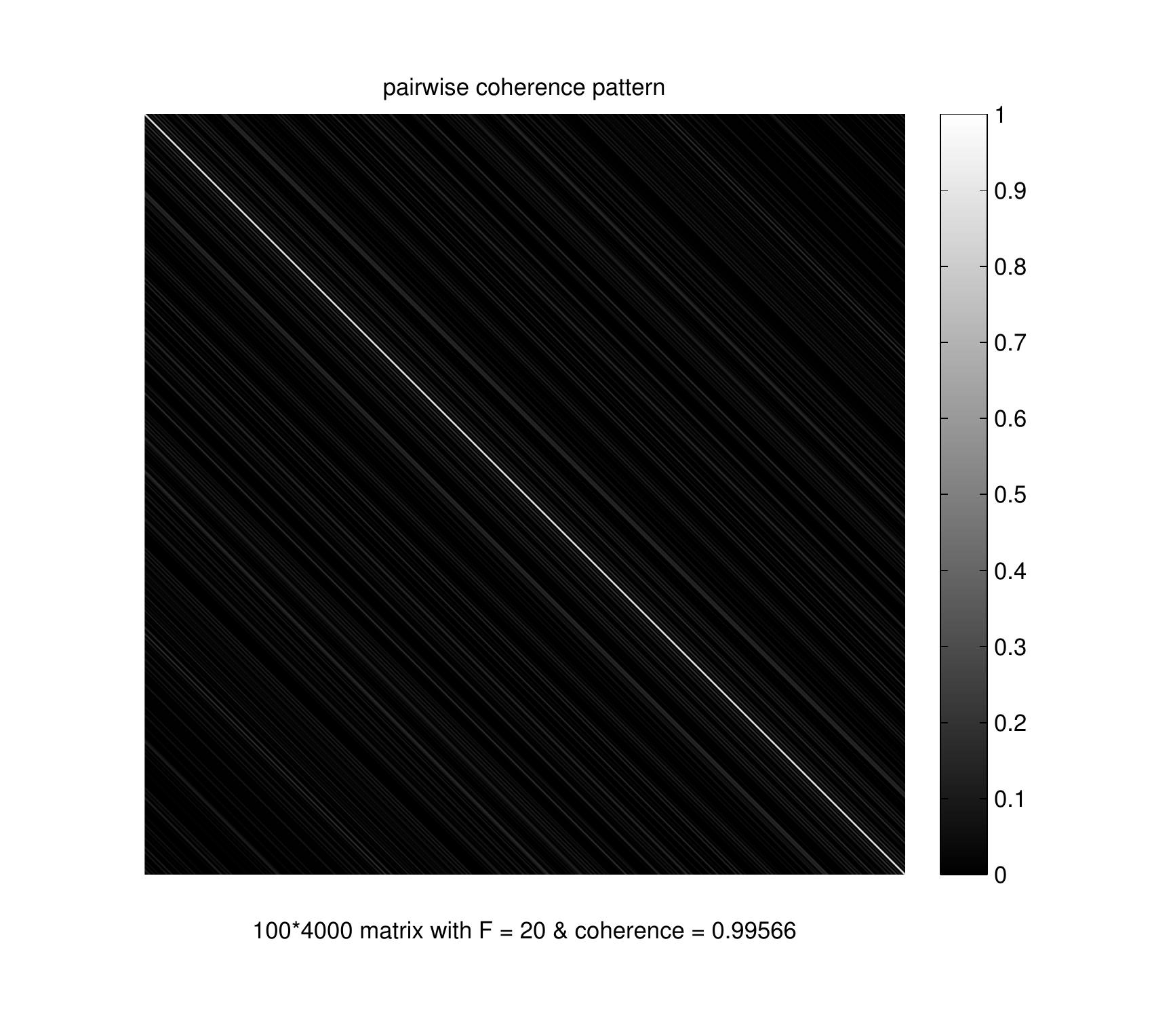}
   \includegraphics[width=8cm,height=7cm]{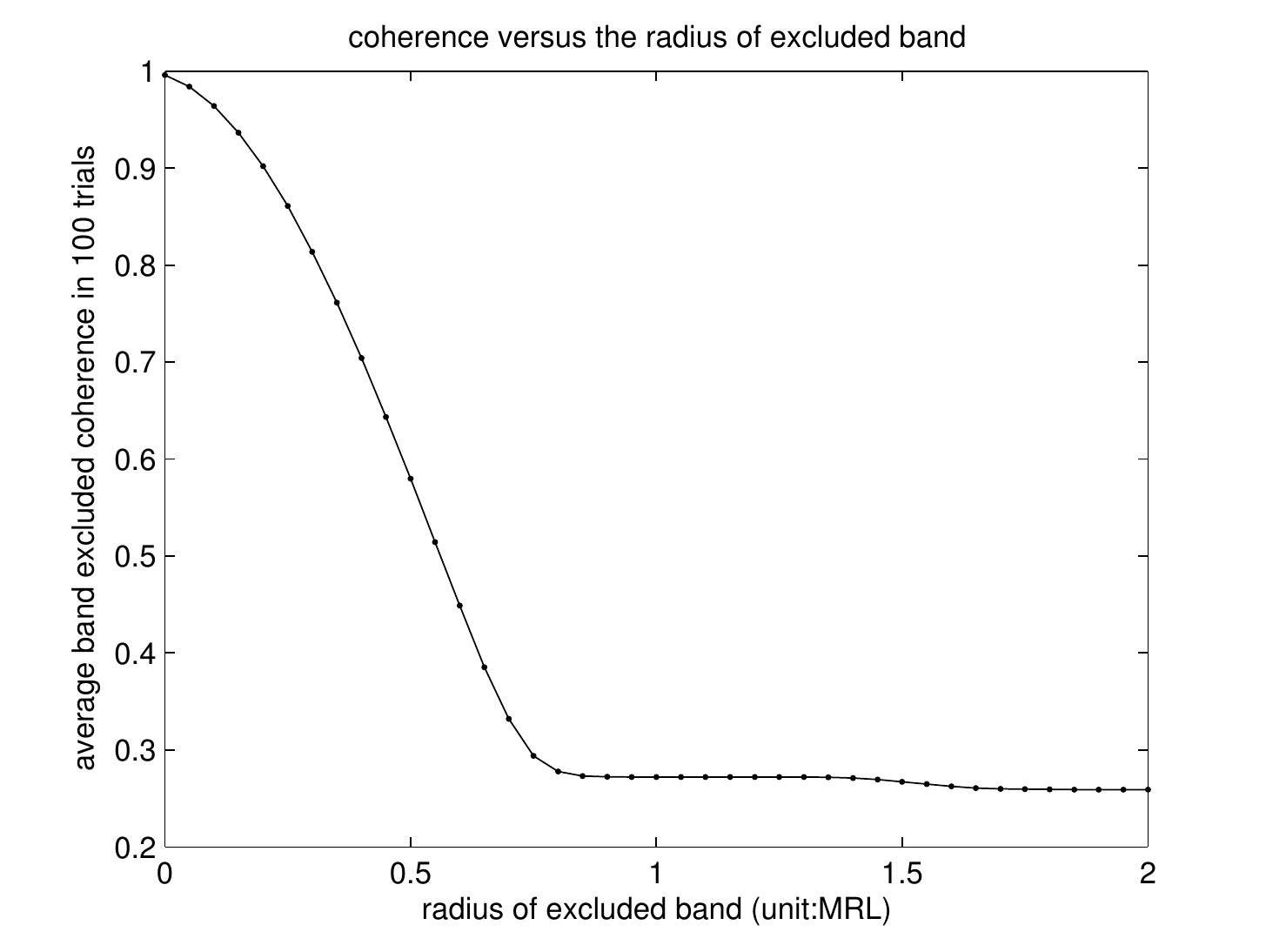}
 \caption{Coherence pattern $[\mu(j,k)]$ for the $100\times 4000$  matrix with $F=20$ (left).   The off-diagonal elements tend to diminish as the  row number increases. The coherence band near the diagonals, however, persists, and has the profile shown on the right panel where the vertical axis is the pairwise coherence and
 the horizontal axis is the separation between two columns in the unit of RL.}
  \label{figurecoherencepattern}\label{fig1}
\end{figure}

The difficulty with unresolved grids is not limited to
the problem of spectral estimation in signal processing.
Indeed, the issue  is intrinsic and  fundamental to  discretization of PDE-based inverse problems such as remote sensing and medical imaging \cite{CB, CK, Nat}. While  Figure \ref{fig1} is typical of
the coherence pattern from discretization of one-dimensional problem.  In two or three dimensions, the
coherent pattern is more complicated than Figure \ref{fig1}.
Nevertheless the coherence band typically 
reflects proximity in the physical space. 
The proximity between the  object support and its reconstruction
can be described by the Bottleneck  or the Hausdorff distance 
\cite{EI}. More generally, coherent bands can arise
in sparse and redundant  representation by  overcomplete dictionaries 
(see Section \ref{sec6} for an example). Under this circumstance, 
the Bottleneck or Hausdorff distance may not have a direct
physical meaning. 

In any case, the hope is that if the objects
are sufficiently  separated with respect to 
the coherence band,  then the problem
of  a huge  condition number associated with unresolved grids
can be somehow  circumvented and
the object support can be approximately reconstructed. 

%and is related to
%superresolution in theory of optimal  recovery \cite{Don1} (more on 
%the latter point in Section \ref{sec7}). 

Under this additional assumption of widely separated objects, we propose in  the present work
several algorithmic approaches to recovery with  unresolved grids
and provide some performance guarantee for these algorithms. 

The paper is organized as follows. In Section \ref{sec1} we introduce the technique of band exclusion (BE) to modify the Orthogonal Matching Pursuit (OMP)  and obtain 
a
performance guarantee for the improved  algorithm, called Band-excluded OMP (BOMP). In Section \ref{sec2} we introduce
the technique of Local Optimization (LO) and propose
the algorithms, Locally Optimized OMP (LOOMP) and
Band-excluded LOOMP (BLOOMP). In Section \ref{sec3} we introduce
the technique of Band-Excluded Thresholding (BET), which comes
in two form, Band-excluded Matched Thresholding (BMT)
and Band-excluded Locally Optimized Thresholding (BLOT) and propose
the algorithms, Band-excluded, Locally Optimized Subspace Pursuit (BLOSP), Band-excluded, Locally Optimized CoSaMP (BLOCoSaMP),
Band-excluded, Locally Optimized Iterative Hard Thresholding (BLOIHT) and BP/Lasso  with BLOT (BP/Lasso-BLOT). 
In Section \ref{sec5} we present numerical study of
the comparative advantages of various algorithms.
In Section \ref{sec6}  we compare the performance of our
algorithms with the existing  algorithms, Spectral Iterative Hard Thresholding (SIHT) and coherent-dictionary-based BP
recently proposed in \cite{DB} and \cite{CEN}, respectively.
We conclude in Section \ref{sec7}.  

\section{Band Exclusion  (BE)}\label{sec1}
The first technique that we introduce to take advantage of the
prior information of widely separated objects  is called Band Exclusion and 
can be easily embedded in  the greedy algorithm,
Orthogonal Matching Pursuit (OMP) \cite{DMA, PRK}. 

First let us recall a standard performance guarantee for  OMP  \cite{DET}.
%The major
%advantages of OMP  is its speed and
%ease of implementation. 
\begin{proposition}
 Suppose that the sparsity $s$ of the signal vector $\mbx$ satisfies 
\begin{equation}
   \mu(\bA)(2s-1) + 2 \frac{\|\mbe\|_2}{\xmin} < 1 
   \label{donohocoherence} 
\end{equation}
where $\xmin =\displaystyle \min_{k} |x_k| = |x_s|$. Denote by $\hat{\mbx}$, the output of the OMP reconstruction. Then
\begin{equation*}
    \text{supp}(\hat{\mbx}) = \text{supp}(\mbx)
\end{equation*}
where $\text{supp}(\mbx)$ is the support of $\mbx$.\label{prop1}
\end{proposition}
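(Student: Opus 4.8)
The plan is to prove Proposition \ref{prop1} by induction on the iterations of OMP, showing that at every step the algorithm selects a column index lying in the true support $\text{supp}(\mbx)$, and that after $s$ steps the full support has been recovered (and the residual is small enough that no spurious index is ever added). The key quantities to track are the residual vector $\mbr^{(n)}$ after $n$ iterations and its correlations $|\lan \ba_j, \mbr^{(n)}\ran|$ with the columns of $\bA$. The inductive hypothesis is that after $n$ steps the selected index set $S_n$ satisfies $S_n \subseteq \text{supp}(\mbx)$, so that the residual $\mbr^{(n)} = \bb - \bA\hat{\mbx}^{(n)}$ can be written as $\bA(\mbx - \hat{\mbx}^{(n)}) + \mbe$ with $\mbx - \hat{\mbx}^{(n)}$ supported on $\text{supp}(\mbx)$ and having at most $s - n$ nonzero entries (since the least-squares projection on $S_n$ kills those components).

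First I would set up the greedy selection criterion: OMP picks $j^\star = \arg\max_j |\lan \ba_j, \mbr^{(n)}\ran|$. To guarantee $j^\star \in \text{supp}(\mbx)$, I need a lower bound on $\max_{j \in \text{supp}(\mbx)} |\lan \ba_j, \mbr^{(n)}\ran|$ and an upper bound on $\max_{j \notin \text{supp}(\mbx)} |\lan \ba_j, \mbr^{(n)}\ran|$, and show the former strictly exceeds the latter. Writing $\mbr^{(n)} = \bA\bu + \mbe$ with $\bu$ supported on $T := \text{supp}(\mbx)$ and $|\text{supp}(\bu)| \le s$, the inner products are $\lan \ba_j, \mbr^{(n)}\ran = \sum_{k \in T} u_k \lan \ba_j, \ba_k\ran + \lan \ba_j, \mbe\ran$. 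Using the normalization $|\ba_j| = 1$ and the definition of mutual coherence, for $j \in T$ one isolates the diagonal term $u_j$ and bounds the off-diagonal sum by $\mu(s-1)\|\bu\|_\infty$, giving $\max_{j\in T}|\lan \ba_j,\mbr^{(n)}\ran| \ge (1 - \mu(s-1))\|\bu\|_\infty - \|\mbe\|_2$; for $j \notin T$ all $s$ terms are off-diagonal, yielding $\max_{j\notin T}|\lan \ba_j,\mbr^{(n)}\ran| \le \mu s \|\bu\|_\infty + \|\mbe\|_2$. The correct-selection condition $(1 - \mu(s-1))\|\bu\|_\infty - \|\mbe\|_2 > \mu s\|\bu\|_\infty + \|\mbe\|_2$ rearranges to $(1 - \mu(2s-1))\|\bu\|_\infty > 2\|\mbe\|_2$.

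The main obstacle is controlling $\|\bu\|_\infty$ from below throughout the iteration: one needs $\|\bu\|_\infty \ge \xmin$ or something comparable, but $\bu = \mbx - \hat{\mbx}^{(n)}$ is the \emph{residual} signal after a least-squares fit, not $\mbx$ itself, so its entries are not obviously bounded below by $\xmin$. The standard resolution is to observe that as long as some true-support index remains unselected, that component of $\bu$ is untouched by the projection only in a controlled way; more precisely one argues that $\|\bu\|_\infty$ cannot be too small relative to the smallest \emph{remaining} true coefficient, and invokes $(2s-1)\mu + 2\|\mbe\|_2/\xmin < 1$ (hypothesis \eqref{donohocoherence}) exactly at the point where $\|\bu\|_\infty$ is replaced by $\xmin$ in the worst case. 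I would lean on the known analysis of Donoho--Elad--Temlyakov \cite{DET} here, citing it for the precise bookkeeping, since the proposition is quoted from that source.

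Finally, I would close the induction: after at most $s$ correct selections $S_s = T$, at which point the least-squares solution on $T$ gives $\hat{\mbx}$ with $\bu = \mbx - \hat{\mbx}$ now equal to the least-squares error, whose support is contained in $T$; the residual is $\bA\bu + \mbe$ and one checks via the same coherence estimate (and a lower bound on the smallest singular value of $\bA_T$, which is $\ge \sqrt{1 - (s-1)\mu} > 0$ under \eqref{donohocoherence}) that OMP's stopping rule halts before selecting anything outside $T$, so $\text{supp}(\hat{\mbx}) \subseteq T$; combined with the fact that all of $T$ was selected, $\text{supp}(\hat{\mbx}) = \text{supp}(\mbx)$. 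The sole subtlety worth a remark is that $\hat{\mbx}$ may have fewer than $s$ nonzeros if some least-squares coefficients vanish, but generically and under the stated hypothesis equality of supports holds as claimed.
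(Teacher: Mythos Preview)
The paper does not give its own proof of Proposition~\ref{prop1}; it is explicitly quoted as ``a standard performance guarantee for OMP'' from \cite{DET}, and the paper moves on immediately to Theorem~\ref{thm1}. So there is no in-paper argument to compare against. Your sketch is precisely the standard inductive argument from \cite{DET}, and your instinct to cite that reference for the detailed bookkeeping is exactly what the authors themselves do.

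Two small corrections to your write-up. First, the claim that $\bu=\mbx-\hat\mbx^{(n)}$ has at most $s-n$ nonzero entries is wrong: the least-squares step makes the \emph{residual} $\mbr^{(n)}$ orthogonal to the columns indexed by $S_n$, but it does not zero out the $S_n$-components of $\bu$. The vector $\bu$ is supported on $T$ and may have up to $s$ nonzeros. This does not damage your bounds, since you correctly use $s-1$ and $s$ off-diagonal terms anyway. Second, the ``main obstacle'' you flag, namely lower-bounding $\|\bu\|_\infty$, has a cleaner resolution than you suggest: for any $j\in T\setminus S_n$ (nonempty while $n<s$) one has $u_j=x_j$ exactly, because $\hat\mbx^{(n)}$ is supported on $S_n$; hence $\|\bu\|_\infty\ge\xmin$ directly. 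One then checks (using orthogonality of $\mbr^{(n)}$ to the $S_n$-columns together with the hypothesis) that the index achieving $\|\bu\|_\infty$ lies in $T\setminus S_n$, which makes the comparison $(1-\mu(2s-1))\|\bu\|_\infty>2\|\mbe\|_2$ go through via $\|\bu\|_\infty\ge\xmin$ and \eqref{donohocoherence}. With these fixes your induction closes exactly as you describe.
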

 In the ideal  case where $\mbe=0$, (\ref{donohocoherence}) reduces
to 
\beq
\label{10.1}
\mu(\bA)<{1\over 2s -1} 
\eeq
which is near the threshold of OMP's capability for exact reconstruction of arbitrary objects of sparsity $s$. % To satisfy (\ref{10.1}) for large $s$,  the grid spacing must
%be greater than certain  minimum resolution length.

Intuitively speaking,   if the objects are {\em not}  in each other's coherence
band, then it should be possible to localize the
objects {\em approximately} within their respective
coherence bands, no matter how large the mutual coherence is.  

Let us first define precisely the notion of coherence band.
 Let $\eta>0$. Define the $\eta$-coherence band  of the index $k$ to be the set
\begin{equation}
   B_\eta(k) = \{i\ | \ \mu(i,k) > \eta\},
   \label{singleneighbor}
\end{equation}
and the $\eta$-coherence band of the index set $S$  to be the set
\begin{equation*}
   B_\eta(S) = \cup_{k \in S} B_\eta(k).
\end{equation*}
Due to the symmetry $\mu(i,k)=\mu(k,i),\forall i,k$,
$i\in B_\eta(k)$ if and only if $k\in B_\eta(i)$. 

Denote
\beq
   B^{(2)}_\eta(k) &\equiv& B_\eta(B_\eta(k))= \displaystyle \cup_{j\in B_\eta(k)} B_\eta(j)
   \label{doubleneighbor}\\
   B^{(2)}_\eta(S) &\equiv& B_\eta(B_\eta(S))= \cup_{k \in S} B^{(2)}_\eta(k).
\eeq

To imbed BE into OMP, 
we make the following change to  the matching step 
\[
i_{\rm max} = \hbox{arg}\min_{i}|\lan \br^{n-1},\ba_i\ran | , \quad i \notin B^{(2)}_\eta(S^{n-1}),\quad n=1,2,....
\]
meaning that  the double $\eta$-band of the estimated  support  in the previous iteration is avoided in the current search. This is natural if the sparsity pattern
of the object is such that
$B_\eta(j), j\in \hbox{supp}(\mbx)$ are pairwise disjoint. 
We call the modified algorithm  the Band-excluded Orthogonal Matching Pursuit (BOMP) which is formally stated in  {\bf Algorithm 1}.

\bigskip

\begin{center}
   \begin{tabular}{l}
   \hline
   
   \centerline{{\bf Algorithm 1.}\quad Band-Excluded Orthogonal Matching Pursuit (BOMP)} \\ \hline
   Input: $\bA, \bb,\eta>0$\\
 Initialization:  $\mbx^0 = 0, \br^0 = \bb$ and $S^0=\emptyset$ \\ 
Iteration: For  $n=1,...,s$\\
\quad {1) $i_{\rm max} = \hbox{arg}\min_{i}|\lan \br^{n-1},\ba_i\ran | , i \notin B^{(2)}_\eta(S^{n-1}) $} \\
  \quad      2) $S^{n} = S^{n-1} \cup \{i_{\rm max}\}$ \\
  \quad  3) $\mbx^n = \hbox{arg} \min_\bz \|
     \bA \bz-\bb\|_2$ s.t. \hbox{supp}($\bz$) $\in S^n$ \\
  \quad   4) $\br^n = \bb- \bA \mbx^n$\\
 %5)  Stop if $\|\br^n\|_2\leq\ep$.\\
 Output: $\mbx^s$. \\
 \hline
   \end{tabular}%\label{tab2}
\end{center}

\bigskip

 A main theoretical  result of the present paper is
  the following  performance guarantee for BOMP. 
\begin{theorem}
% Fix $\eta <<1$. The coherence pattern of $A$ is special in the way that for any index $k,k=1,...,M$, its neighborhood $N(k,\eta)$ and double neighborhood $N^2(k,\eta)$ only account for a small portion. In other words, for any $k=1,...,M$, there are only few indices producing large inner product with $k$. 
\label{thm1}

Let $\mbx$ be $s$-sparse. Let $\eta>0$ be fixed. 
Suppose that
\beq
\label{sep}
B_\eta(i)\cap B^{(2)}_\eta(j)=\emptyset, \quad \forall i, j\in \hbox{supp}(\mbx)
\eeq
and that
    \beq
    \eta(5s -4)\frac{\xmax}{\xmin} + \frac{5\|\mbe\|_2}{2\xmin} < 1 
    \label{RMIP}
    \eeq
    where
    \[
    \xmax = \max_{k} |x_k|,\quad  \xmin = \min_{k} |x_k|.
    \]
    Let $\hat\mbx$ be the BOMP reconstruction. 
Then $\hbox{supp}(\hat\mbx)\subseteq B_\eta(\hbox{supp}(\mbx))$
and moreover every nonzero component of $\hat\mbx$ is in
the $\eta$-coherence band of a unique nonzero component of $\mbx$. 
  \label{momp}
\end{theorem}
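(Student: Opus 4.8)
The plan is to run an induction on the iteration count $n$, showing that at each step BOMP selects an index lying in the $\eta$-coherence band of a genuine (and as yet unrecovered) support point of $\mbx$, and that after $s$ iterations every support point has been accounted for. The key structural facts I would use are: (i) the separation hypothesis \eqref{sep} guarantees that the bands $B_\eta(j)$, $j\in\hbox{supp}(\mbx)$, are pairwise disjoint, and in fact each is separated from the \emph{double} band of any other, so a single selected index cannot lie in two different bands and selecting one index from $B_\eta(j)$ does not exclude any other $B_\eta(j')$ from future searches; (ii) the quantity $(5s-4)$ in \eqref{RMIP} is exactly what is needed to control the accumulation of coherence-type cross terms over at most $s$ selected columns together with their bands. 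So the first concrete step is to record these combinatorial consequences of \eqref{sep}: that $B^{(2)}_\eta(S^{n-1})$, for any $S^{n-1}$ built from indices in the true bands, meets $B_\eta(j)$ only for those $j$ already "hit", leaving the other true bands available.

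Next I would set up the residual analysis. Write $\br^{n-1}=\bb-\bA\mbx^{n-1}$; since $\mbx^{n-1}$ is the least-squares solution on $S^{n-1}$, the residual is orthogonal to $\hbox{span}\{\ba_i:i\in S^{n-1}\}$. Decompose $\bb=\bA\mbx+\mbe$ and split $\mbx$ into the part whose bands have already been hit and the part $\mbx'$ supported on the not-yet-hit support points. The goal of the inductive step is the inequality
\[
\max_{i\in B_\eta(T)}|\lan\br^{n-1},\ba_i\ran| \;>\; \max_{i\notin B^{(2)}_\eta(S^{n-1})}|\lan\br^{n-1},\ba_i\ran|,
\]
where $T$ is the set of not-yet-hit true support points, so that the $\hbox{arg}\max$ (with the double-band exclusion) necessarily falls in $\bigcup_{j\in T}B_\eta(j)$. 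To get the left side, pick the largest-amplitude component of $\mbx'$, call its index $j_\star$; one shows $|\lan\br^{n-1},\ba_{j_\star}\ran|$ is at least roughly $|x_{j_\star}|$ minus error terms coming from (a) the off-band columns $\ba_i$ with $\mu(i,j_\star)\le\eta$, each contributing at most $\eta\,\xmax$, and (b) the external error $\mbe$, contributing at most $\|\mbe\|_2$ (using $\|\ba_i\|_2=1$), and (c) the already-recovered columns, whose contribution is killed by orthogonality of the residual to $S^{n-1}$ — this is where one must be slightly careful, since $\mbx^{n-1}$ need not agree with $\mbx$ on $S^{n-1}$, so one instead bounds things through the projection. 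For the right side, any $i\notin B^{(2)}_\eta(S^{n-1})$ has $\mu(i,j)\le\eta$ for every $j$ in a hit band, and by the separation hypothesis also $\mu(i,j)\le\eta$ for $j\in T$ except possibly the unique band containing $i$; summing these $\le s$ small cross terms plus the $\mbe$ term and comparing with the lower bound produces precisely the condition \eqref{RMIP} after collecting the combinatorial constants (the factor $5$ and the $5/2$ come from the worst-case count of such terms including the doubling inherent in $B^{(2)}_\eta$). Once the inductive step holds, after $s$ iterations all $s$ true support points have been hit, each by exactly one selected index lying in its band, giving $\hbox{supp}(\hat\mbx)\subseteq B_\eta(\hbox{supp}(\mbx))$ and the uniqueness statement.

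The main obstacle I anticipate is the bookkeeping in the residual lower bound, step two above: one must verify that at every iteration there is \emph{some} not-yet-hit support component whose correlation with the current residual survives after subtracting the contributions of the already-selected columns, and that this survives the double-band exclusion. This requires simultaneously tracking (a) that the least-squares fit on $S^{n-1}$ does not spuriously cancel the signal on $T$ — handled via the near-orthogonality of columns from disjoint bands and a Neumann-series / Gershgorin bound on the Gram matrix of $\{\ba_i:i\in S^{n-1}\}$, which is well-conditioned precisely because those indices sit in distinct, mutually separated bands — and (b) that the resulting error terms still fit under the budget \eqref{RMIP}. Optimizing the constants so that the clean bound $\eta(5s-4)\xmax/\xmin+\tfrac{5}{2}\|\mbe\|_2/\xmin<1$ emerges, rather than something messier, will take the most care; everything else is a fairly standard greedy-algorithm estimate in the spirit of the proof of Proposition \ref{prop1}.
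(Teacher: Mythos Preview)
Your approach is correct and matches the paper's: induction on the iteration count, with the crucial technical step being to bound the least-squares coefficients on $S^{n-1}$ via the near-orthogonality of the selected columns (the paper carries out your Gershgorin/Neumann-series idea elementwise, using the relations $\langle\br^{n-1},\ba_{I_j}\rangle=0$ to obtain $c_{\max}\le\tfrac{3}{2}\xmax+\tfrac{5}{4}\|\mbe\|_2$, and this bound is exactly where the constants $(5s-4)$ and $\tfrac{5}{2}$ in \eqref{RMIP} come from). One slip to fix: in your displayed target inequality the right-hand maximum should run over $i\notin B^{(2)}_\eta(S^{n-1})\cup B_\eta(T)$, since as written the right-hand index set contains the left-hand one and the strict inequality could never hold.
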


\begin{proof} %Without loss of generality let us assume that  $|x_{J_1}| \ge |x_{J_2}| \ge \ldots \ge |x_{J_s}|$.  Hence 
%$ \xmax = |x_{J_1}|$ and $ \xmin = |x_{J_s}|$. 
We  prove the theorem  by induction. 

Suppose  $\hbox{supp} (\mbx)=\{J_1,\ldots, J_s\}$. 
Let $J_{\rm max}\in \hbox{supp}(\mbx)$ be the index of
the largest component in absolute value of $\mbx$. 

In the first step,
\beq
 \label{14}  |\bb^{\star}\ba_{J_{\rm max}}| & =& |{x}_{J_{\rm max}} + \bx_{J_2}\ba_{J_2}^{\star}\ba_{J_{\rm max}} + ... + \bx_{J_s}\ba_{J_s}^{\star}\ba_{J_{\rm max}} + \mbe^{\star}\ba_{J_{\rm max}}| \\
   & \ge& x_{\rm max} - x_{\rm max}(s-1)\eta - \|\mbe\|_2\nn
\eeq
by assumption (\ref{sep}). 
On the other hand, $\forall l \notin B_\eta(\hbox{supp}(\mbx))$,
\beq
\label{15}  |\bb^{\star}\ba_l| & =& |\bx_{J_1}\ba_{J_1}^{\star}\ba_l + \bx_{J_2}\ba_{J_2}^{\star}\ba_l + ... + \bx_{J_s}\ba_{J_s}^{\star}\ba_l + \mbe^{\star}\ba_l| \\
  & \le& x_{\rm max} s\eta + \|\mbe\|_2\nn
\eeq
by using (\ref{sep}) again.

Hence,  if
 \[
 (2s-1)\eta + 2 \frac{\|\mbe\|_2}{x_{\rm max}} < 1,
 \]
 then the right hand side of (\ref{14}) is greater than
 the right hand side of (\ref{15}) which implies that
 the first  index selected by BOMP must belong to $B_\eta(\hbox{supp}(\mbx))$. 

Now suppose without loss of generality that the first $(k-1)$ indices $I_1,...,I_{k-1}$ selected by BOMP  are in $B_\eta(J_i), J_i \in \hbox{supp}(\mbx), i=1,...,k-1$,
respectively.  
% Let $J_1,...,J_{k-1},J_{k},...,J_s$ be the locations of $s$ targets. 
Write the residual as 
\begin{align*}
  \br^{k-1} & = \bb - c_{I_1}\ba_{I_1} - c_{I_2}\ba_{I_2} - ... - c_{I_{k-1}}\ba_{I_{k-1}}.
\end{align*}
First, we estimate the coefficients $c_{I_1},...,c_{I_{k-1}}$. Since $\lan\br^{k-1},\ba_{I_1}\ran  = 0$,
\begin{equation*}
  {c}_{I_1} = \bx_{J_1}\ba_{J_1}^{\star}\ba_{I_1} + \bx_{J_2}\ba_{J_2}^{\star}\ba_{I_1} + ... + \bx_{J_s}\ba_{J_s}^{\star}\ba_{I_1} + \mbe^{\star}\ba_{I_1} - c_{I_2}\ba_{I_2}^{\star}\ba_{I_1} - ... - c_{I_{k-1}}\ba_{I_{k-1}}^{\star}\ba_{I_1},
\end{equation*}
which implies
\beqn
  |c_{I_1}| \le \xmax + \xmax (s-1)\eta + \|\mbe\|_2 + \eta(|c_{I_2}|+|c_{I_3}|+...+|c_{I_{k-1}}|)
\eeqn
Likewise, we have
\beq
  |c_{I_j}| \le \xmax + \xmax (s-1)\eta + \|\mbe\|_2 + \eta
  \sum_{i\neq j} |c_{I_i}|,\quad j=1,...,k-1. \label{17}
\eeq

Let $\emax = \displaystyle \max_{j=1,...,k-1} |c_{I_{j}}|$.  Inequality 
(\ref{17}) implies  that
\beqn
  \emax \le \xmax + \xmax (s-1)\eta + \|\mbe\|_2 + \eta(k-2)\emax 
  \eeqn
and hence
\beqn
  \emax \le \frac{1}{1-\eta(k-2)}[\xmax + \xmax (s-1)\eta + \|\mbe\|_2]
\eeqn
Moreover, condition (\ref{RMIP}) implies that $\eta(s-1)<\frac{1}{5}$ and $\frac{1}{1-\eta(k-2)} \le \frac{5}{4}$. Hence
\beq
\label{20}
\emax \le \frac{5}{4}(\xmax +\frac{1}{5} \xmax  + \|\mbe\|_2) \le \frac{3}{2} \xmax + \frac{5}{4}\|\mbe\|_2.
\eeq

We claim  that $B^{(2)}_\eta(S^{k-1})$ and $\{J_k,...,J_s\}$ are
disjoint. 
%We prove the
%claim as follows. 

If the claim is not true, then there exists $J_i,$ for
some $i\in \{k,\ldots, s\}$,  $J_i\in B^{(2)}_\eta(I_l)$ for some
$l\in \{1,...,k-1\}$. Consequently, $J_i\in B^{(3)}_\eta(J_l)$ or equivalently $B_\eta(J_i)\cap B^{(2)}_\eta(J_l)\neq \emptyset$ %since $B_\eta(j)\subseteq B^{(2)}_\eta(J_i)$ and $B_\eta(I_l)\subseteq B^{(2)}_\eta(J_l)$, this would imply
%that $B^{(2)}_\eta(J_i)\cap B^{(2)}_\eta(J_l)\neq \emptyset$ 
 which
is contradictory to the assumption (\ref{sep}).

Now we show that the index selected  in the $k$-th step is in $B_\eta (\{J_k,...,J_s\})$. 

On the one hand, we have
\beq
  \label{21}|\br^{k-1 {\star}}\ba_{J_i}| &=& |\bx_{J_1}\ba_{J_1}^{\star}\ba_{J_i}+\ldots + \bx_{J_s}\ba_{J_s}^{\star}\ba_{J_i} + \mbe^{\star}\ba_{J_i}\\
  &&\quad\nn -c_{I_1}\ba_{I_1}^{\star}\ba_{J_i}-...-c_{I_{k-1}}\ba_{I_{k-1}}^{\star}\ba_{J_i}| \\
 &\ge &x_{\rm min} - \eta(s-1)\xmax - \|\mbe\|_2 - \eta(k-1)\emax. \nn
 %&\ge& \xmin  - \eta(s+\frac{3}{2}k-\frac{5}{2})\xmax - \frac{5}{4}\|\mbe\|_2\nn,\quad i=k,\ldots,s
\eeq
%by using (\ref{20}). 
One the other hand,  we have that $\forall l \notin B^{(2)}_\eta(S^{k-1}) \cup B_\eta(\{J_k,...,J_s\})$,
\beq
\label{22}  |\br^{k-1 {\star}}\ba_{l}|  &=& |\bx_{J_1}\ba_{J_1}^{\star}\ba_{l}+...+\bx_{J_k}\ba_{J_k}^{\star}\ba_{l} + ... + \bx_{J_s}\ba_{J_s}^{\star}\ba_{l} + \mbe^{\star}\ba_{l}\\
&&\quad  - c_{I_1}\ba_{I_1}^{\star}\ba_{l}-...-c_{I_{k-1}}\ba_{I_{k-1}}^{\star}\ba_{l}| \nn\\
  & \le& \eta s \xmax + \|\mbe\|_2 + \eta (k-1) \emax\nn 
 % & \le& \eta(s+\frac{3}{2}k-\frac{3}{2})\xmax + \frac{5}{4} \|\mbe\|_2\nn
\eeq
%by using (\ref{20}) again. 
in view of  $B_\eta(\{J_1,\ldots,J_{k-1}\})
\subseteq B^{(2)}_\eta(S^{k-1})$ as a result of
the induction assumption.

If the right hand side of (\ref{21}) is greater than the right hand
side of (\ref{22}) or equivalently
\beq
\label{31}
\eta(2s + 3k -4)\frac{\xmax}{\xmin} + 
(2+{5\over 2} \eta (k-1))\frac{\|\mbe\|_2}{\xmin} < 1
\eeq
then the $k$-th index selected by BOMP  must be in $B_\eta(\{J_k,...,J_s\})$ because
\[
B^{(2)}_\eta(S^{k-1}) \cap \{J_k,...,J_s\}=\emptyset
\]
and because the $k$-th selected index does not belong in $B^{(2)}_\eta(S^{k-1})$ according to  the band-exclusion rule. 
 Condition (\ref{RMIP}) implies (\ref{31}) by 
setting  the maximal $k=s$ in (\ref{31}) and noting
that $\eta(5s -4)<1$ under (\ref{RMIP}). 
\end{proof}

\begin{remark}
\label{rmk1}

In the case of the matrix 
(\ref{3}),  if every two indices in $\suppx$ is more than
one RL apart, then $\eta$ is small for sufficiently
large $N$, cf. Figure \ref{fig1}. 
  
 When the dynamic range ${\xmax}/{\xmin} = \cO(1)$, 
 Theorem \ref{thm1} guarantees approximate recovery
 of  $\cO(\eta^{-1})$ sparsity pattern by BOMP.

\end{remark}

\begin{remark}\label{rmk2}
The main difference between Theorem $\ref{thm1}$ and 
Proposition \ref{prop1} lies in the role
played by  the dynamic range $\xmax/\xmin$
and condition (\ref{sep}).

First, numerical evidence points  to degradation in BOMP's performance
for large dynamic ranges (Figure \ref{figB}). This is consistent with
the prediction of (\ref{RMIP}). 

Secondly, condition  (\ref{sep}) means 
that BOMP can resolve 3 RLs.
Numerical experiments show that 
BOMP can resolve 
 objects separated by close to 1  RL  when the dynamic
 range is close to 1 (Figure \ref{fig422}).

\end{remark}

\section{Local Optimization (LO)}\label{sec2}

As our numerical experiments show, the main shortcoming  with
BOMP is in its failure to perform even when the dynamic range is  only moderate. 

To overcome this problem, we now introduce 
the second technique: 
the {\em Local Optimization} (LO).  

LO is a residual-reduction technique  applied
to the current estimate $S^k$ of the object support.  
To this end,  we minimize  the residual  ${\|\bA \hat\mbx-\bb\|_2}$ by varying  one location at a time 
while all other locations held fixed.
In each step we consider  $\hat\mbx$ whose support
differs from $S^n$ by at most one index  in the  coherence band of $S^n$ but whose amplitude is chosen to minimize
the residual. The search is local in the sense that
  during the search in the coherence band of one nonzero component
the locations of other nonzero components are fixed.
The amplitudes of the improved estimate  is carried out by solving the least squares problem. Because of
the local nature of the LO step, the computation is not
expensive. 

\bigskip

\begin{center}
   \begin{tabular}{l}
   \hline  
   \centerline{{\bf Algorithm 2.}\quad  Local Optimization (LO)}  \\ \hline
    Input:$\bA,\bb, \eta>0,  S^0=\{i_1,\ldots,i_k\}$.\\
Iteration:  For $n=1,2,...,k$.\\
\quad 1) $\mbx^n= \hbox{arg}\,\,\min_{\bz}\|\bA \bz-\bb\|_2, \hbox{supp}(\bz)=(S^{n-1} \backslash \{i_n\})\cup \{j_n\}, $ for some $  j_n\in B_\eta(\{i_n\})$.\\
 \quad 2) $S^n=\hbox{supp}(\mbx^n)$.\\
    Output:  $S^k$.\\
    \hline
   \end{tabular}
\end{center}

\bigskip

Embedding LO in BOMP gives rise to the Band-excluded, Locally
Optimized Orthogonal Matching Pursuit (BLOOMP).

\bigskip

\begin{center}
   \begin{tabular}{l}
   \hline
   
   \centerline{{\bf Algorithm 3.} Band-excluded, Locally Optimized Orthogonal Matching Pursuit (BLOOMP)} \\ \hline
   Input: $\bA, \bb,\eta>0$\\
 Initialization:  $\mbx^0 = 0, \br^0 = \bb$ and $S^0=\emptyset$ \\ 
Iteration: For  $n=1,...,s$\\
\quad {1)  $i_{\rm max} = \hbox{arg}\min_{i}|\lan \br^{n-1},\ba_i\ran | , i \notin B^{(2)}_\eta(S^{n-1}) $} \\
  \quad      2) $S^{n} = \hbox{LO}(S^{n-1} \cup \{i_{\rm max}\})$ where $\hbox{LO}$ is the output of Algorithm 2.\\
  \quad  3) $\mbx^n = \hbox{arg}  \min_\bz \|
     \bA \bz-\bb\|_2$ s.t. \hbox{supp}($\bz$) $\in S^n$ \\
  \quad   4) $\br^n = \bb- \bA \mbx^n$\\
 %5)  Stop if $\|\br^n\|_2\leq\ep$.\\
 Output: $\mbx^s$. \\
 \hline
   \end{tabular}%\label{tab2}
\end{center}

\bigskip

We now give a condition under which LO does not spoil the
BOMP reconstruction of Theorem \ref{thm1}. 
\begin{theorem}
\label{thm2}
Let $\eta>0$ and let $\mbx$ be a $s$-sparse vector such
that (\ref{sep}) holds. 
Let $S^0$ and $S^k$ be the input and output, respectively,  of the LO algorithm. 

If
\beq
x_{\rm min}> (\ep+2(s-1)\eta) \lt({1\over 1-\eta}+\sqrt{{1\over (1-\eta)^2}+{1\over 1-\eta^2}}\rt)\label{93}
\eeq
and each element of $S^0$  is in the $\eta$-coherence
band of a unique nonzero component of $\mbx$, then
 each element of $S^k$ remains in the $\eta$-coherence
band of a unique nonzero component of $\mbx$. 

\end{theorem}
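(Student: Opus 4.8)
The plan is to induct on the sweep index $n=1,\dots,k$ of the LO algorithm, the inductive claim being that after the first $n-1$ substitutions every index of $S^{n-1}$ still lies in the $\eta$-band of a distinct nonzero component of $\mbx$. Write $S^{n-1}=T\cup\{i_n\}$, where $T$ is the set of the $k-1$ indices untouched at step $n$ and $i_n\in B_\eta(J_m)$ for the (unique) nonzero component $J_m$ to which $i_n$ is currently attached; LO replaces $i_n$ by some $j_n\in B_\eta(\{i_n\})$, and we must show $j_n\in B_\eta(J_m)$. Several facts are immediate from (\ref{sep}): since $i_n\in B_\eta(J_m)$ we have $j_n\in B^{(2)}_\eta(J_m)$, which by (\ref{sep}) is disjoint from $B_\eta(J_i)$ for every other component and from every band that contains a frozen index; hence $j_n$ can neither migrate into another component's band nor coincide with a member of $T$, so the substitution keeps the support of size $k$ with its component-to-index assignment intact. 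It remains only to exclude the possibility that $j_n$ escapes all bands, i.e. $\mu(j_n,J_i)\le\eta$ for every $i$.

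Before doing that I would record the coherence estimates (\ref{sep}) buys for free: any two columns drawn from the bands of two distinct support points are $\le\eta$-coherent; in particular the Gram matrix of $T$ has unit diagonal and off-diagonal entries $\le\eta$, so (Gershgorin) it is invertible and well conditioned, and the column $\ba_{J_m}$, whose band contains no frozen index, is $\le\eta$-coherent with every column of $T$, whence $\|P_T\ba_{J_m}\|=O(\sqrt{s}\,\eta)$ and $\|Q_T\ba_{J_m}\|$ is close to $1$ (here $P_T$ is the orthogonal projection onto $\mathrm{span}\{\ba_t:t\in T\}$ and $Q_T=I-P_T$).

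Now suppose for contradiction that $j_n$ escapes all bands, and exploit the optimality built into LO. The least-squares residual over a support $T\cup\{j\}$ equals $\|Q_T\bb\|^2-|\ba_j^{\star}Q_T\bb|^2/\|Q_T\ba_j\|^2$, and since $j=J_m$ is an admissible choice (by symmetry of $\mu$, $i_n\in B_\eta(J_m)\Leftrightarrow J_m\in B_\eta(i_n)$), minimality of the residual at $j=j_n$ forces
\[
\frac{|\ba_{j_n}^{\star}Q_T\bb|}{\|Q_T\ba_{j_n}\|}\ \ge\ \frac{|\ba_{J_m}^{\star}Q_T\bb|}{\|Q_T\ba_{J_m}\|}.
\]
For the right side, $\ba_{J_m}^{\star}\bb=x_{J_m}+O((s-1)\eta\,\xmax)+\ba_{J_m}^{\star}\mbe$, while the projection correction $\ba_{J_m}^{\star}P_T\bb$ is again of order $\eta\times(\text{amplitudes})+\eta\,\ep$ because the frozen least-squares coefficients are bounded exactly as the $c_{I_j}$ are bounded in the proof of Theorem~\ref{thm1}; with $\|Q_T\ba_{J_m}\|\le1$ this yields a lower bound of the form $\xmin-(\ep+2(s-1)\eta)\cdot\tfrac{1}{1-\eta}$. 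For the left side, $j_n$ escaping all bands makes $|\ba_{j_n}^{\star}Q_T\bb|$ pure crosstalk plus noise, of order $\ep+2(s-1)\eta$, and $\|Q_T\ba_{j_n}\|$ is bounded below of order $\sqrt{1-\eta^2}$. Comparing the two bounds and clearing denominators turns the displayed inequality into $\xmin^2\le \tfrac{2\xmin(\ep+2(s-1)\eta)}{1-\eta}+\tfrac{(\ep+2(s-1)\eta)^2}{1-\eta^2}$, which is precisely the negation of (\ref{93}) (solve the quadratic in $\xmin$). This contradiction gives $j_n\in B_\eta(J_m)$ and closes the induction.

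The step I expect to be the real work is the bookkeeping for the frozen block $T$: showing that projecting off $\mathrm{span}(T)$ disturbs $\ba_{J_m}$, $\ba_{j_n}$ and $\bb$ only at the $O(\eta)$ level — i.e. that the frozen columns act as an almost-orthonormal system almost orthogonal to $\ba_{J_m}$ — and tracking the constants sharply enough to reproduce exactly the two terms $\tfrac{1}{1-\eta}$ and $\tfrac{1}{1-\eta^2}$ in (\ref{93}). A subtler point is ruling out an optimizing $j_n$ that escapes all bands yet lies nearly in $\mathrm{span}(T)$, so that $\|Q_T\ba_{j_n}\|$ is tiny and the denominator on the left becomes dangerous; here one uses that for such a $j_n$ the new direction $Q_T\ba_{j_n}$ is nearly orthogonal to $Q_T\ba_{J_m}$, which carries the uncaptured mass $\approx x_{J_m}$, so the substitution cannot drive the residual below that of the $J_m$-substitution.
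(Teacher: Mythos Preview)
Your overall strategy---compare the least-squares residual for the candidate $j_n$ with the residual obtained by inserting the true index $J_m$, and show the latter is strictly smaller---is the same as the paper's. The technical route, however, is different: the paper does \emph{not} use the projection formula $\|Q_T\bb\|^2-|\ba_j^\star Q_T\bb|^2/\|Q_T\ba_j\|^2$. Instead it writes $\bb=x_{J_m}\ba_{J_m}+\bA\mbx'+\mbe$, bounds $r$ above by simply taking the $J_m$-coefficient equal to $x_{J_m}$, and bounds $r'$ below via the law of cosines applied to the split $\big(\bA(\bz-\mbx')-\mbe\big)+\big(x_{J_m}\ba_{J_m}-c\,\ba_j\big)$. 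The point is that the second piece $\|x_{J_m}\ba_{J_m}-c\,\ba_j\|^2\ge |x_{J_m}|^2+|c|^2-2\eta|c||x_{J_m}|$ is a genuine \emph{quadratic} in the free amplitude $c$, and minimizing it over $c$ is exactly what produces the specific form $(1-\eta^2)|x_{J_m}|^2-2(1+\eta)A|x_{J_m}|-A^2>0$, i.e.\ condition~(\ref{93}). This sidesteps entirely the issue you flag about $\|Q_T\ba_{j_n}\|$ possibly being small: no projection onto $\mathrm{span}(T)$ is ever analyzed.

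The concrete gap in your sketch is the passage from your quotient comparison to (\ref{93}). You state \emph{linear} bounds: the $J_m$-quotient $\gtrsim \xmin-\tfrac{A}{1-\eta}$ and the $j_n$-quotient $\lesssim \tfrac{A}{\sqrt{1-\eta^2}}$, with $A=\ep+2(s-1)\eta$. But a linear comparison yields only $\xmin\le \tfrac{A}{1-\eta}+\tfrac{A}{\sqrt{1-\eta^2}}$; ``clearing denominators'' does not manufacture the quadratic $\xmin^2\le \tfrac{2A\xmin}{1-\eta}+\tfrac{A^2}{1-\eta^2}$ you claim (indeed, squaring your linear bounds gives a quadratic whose constant term has the \emph{wrong sign}). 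To get a quadratic in $\xmin$ from the projection formula you would have to expand $|\ba_j^\star Q_T\bb|^2$ itself as a quadratic in $x_{J_m}$ and compare leading/cross/constant coefficients---this is a different and considerably heavier calculation than what you outline, and it is not at all clear the constants land on (\ref{93}) exactly, especially once the factors coming from $(\bA_T^\star\bA_T)^{-1}$ (which scale like $1/(1-(k-2)\eta)$, not $1/(1-\eta)$) are tracked. Your acknowledged ``subtler point'' about small $\|Q_T\ba_{j_n}\|$ is another symptom of the same issue: the projection route forces you to control quantities the paper's law-of-cosines argument never touches.
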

\begin{proof}
Because the iterative nature of Algorithm 2, it is sufficient to show that
each element of $S^1$ is in the $\eta$-coherence
band of a unique nonzero component of $\mbx$. 

Suppose $J_1\in \supp{(\mbx)}$ and $i_1\in B_\eta (J_1)$.  
Let 
\beqn
r&=&\min_{\bz}\|\bA \bz-\bb\|_2, \quad \hbox{supp}(\bz)=(S^{0} \backslash \{i_1\})\cup \{J_1\} \\
r'&=&\min_{\bz}\|\bA \bz-\bb\|_2, \quad \hbox{supp}(\bz)=(S^{0} \backslash \{i_1\})\cup \{j\},\quad  j\in B_\eta(i_1)\backslash  B_\eta(J_1).
\eeqn
We want to show that $r<r', \forall j\in B_\eta(i_1)\backslash  B_\eta(J_1)$ so that the LO step is certain to pick a new index 
within the $\eta$-coherence band of $J_1$, reducing the residual
in the meantime.
For the subsequent analysis, we fix $ j\in B_\eta(i_1)\backslash  B_\eta(J_1)$. 

Reset the $J_1$ component of $\mbx$ to zero
and denote the resulting vector by $\mbx'$. Hence the sparsity of $\mbx'$ is $s-1$.  It follows from the definition of $r$ that
\[
r\leq \min_{\bz}\|\bA \bz-\bA \mbx'-\mbe\|_2, \quad \hbox{supp}(\bz)=\{i_2,\ldots,i_k\}.
\]
We also have
\beqn
r'&=&\min_{\bz} \|\bA(\bz-\mbx')-\mbe+ x_{J_1} \ba_{J_1}-c \ba_j\|_2,\quad \supp(\bz)=\{i_2,\ldots,i_k\},\quad c\in \IC
\eeqn
and hence by the law of cosine
\beq
\label{91}
r' &\geq&\min_{\bz, c}\sqrt{\|\bA(\bz-\mbx')-\mbe\|^2_2+ \|x_{J_1} \ba_{J_1}-c \ba_j\|^2_2-2|\lan \bA(\bz-\mbx')-\mbe, x_{J_1} \ba_{J_1}-c \ba_j\ran | }
\eeq
where $\supp(\bz)=\{i_2,\ldots,i_k\}$.

Because of (\ref{sep}),  $j, J_1\not\in B_\eta( \supp(\mbx')\cup \{i_2,\ldots,i_k\})$. By the definition of $\eta$-coherence band,
we have 
\[
|\lan \bA(\bz-\mbx')-\mbe, x_{J_1} \ba_{J_1}-c \ba_j\ran |  \leq
(\ep+\eta(s+k-2))(|x_{J_1}|+|c|)
\]
and hence by (\ref{91})
\beqn
\label{92}
r' &\geq&\min_{\bz, c}\sqrt{\|\bA(\bz-\mbx')-\mbe\|^2_2+ |x_{J_1}|^2 +|c|^2-2\eta|cx_{J_1}|
-2(\ep+\eta(s+k-2))(|x_{J_1}|+|c|)}\\
&\geq & \sqrt{\min_{\bz}\|\bA(\bz-\mbx')-\mbe\|_2^2
+\min_{c\in \IC}\lt[ |x_{J_1}|^2 +|c|^2-2\eta|cx_{J_1}|
-2(\ep+\eta(s+k-2))(|x_{J_1}|+|c|)\rt]}. \nn
\eeqn
To prove $r<r'$, it suffices to show
\beqn
&&\min_{c\in \IC}\lt[ |x_{J_1}|^2 +|c|^2-2\eta|cx_{J_1}|
-2(\ep+\eta(s+k-2))(|x_{J_1}|+|c|)\rt]\\
&=&(1-\eta^2)|x_{J_1}|^2-2(1+\eta)(\ep+\eta(s+k-2))|x_{J_1}|-
(\ep+\eta(s+k-2))^2>0
\eeqn
which leads to the inquality
\[
|x_{J_1}|>  (\ep+(s+k-2)\eta) \lt({1\over 1-\eta}+\sqrt{{1\over (1-\eta)^2}+{1\over 1-\eta^2}}\rt).
\]
Considering the worst case scenario, we replace $|x_{J_1}|$ by $x_{\rm min}$ and
$k$ by $s$ to  obtain the condition (\ref{93}). 

\end{proof}
%The LO step can be
%repeated multiple times, each time with
%a different order of  the coherence bands. 

\begin{corollary}
Let $\hat \mbx$ be the output of BLOOPM. 
Under the assumptions of Theorems \ref{thm1} and \ref{thm2},
 $\hbox{supp}(\hat\mbx)\subseteq B_\eta(\hbox{supp}(\mbx))$
and moreover every nonzero component of $\hat\mbx$ is in
the $\eta$-coherence band of a unique nonzero component of $\mbx$. 
\end{corollary}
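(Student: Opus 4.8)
The plan is to run an induction on the iteration counter $n$ of BLOOMP, carrying the invariant that the estimated support $S^{n}$ consists of exactly $n$ indices, one lying in the $\eta$-coherence band of each of $n$ distinct nonzero components of $\mbx$. The base case $n=0$ is trivial since $S^{0}=\emptyset$. Granting the invariant for $n-1$, I would analyze one BLOOMP iteration in two stages, corresponding to the matching step (shared with BOMP) and the LO step (Algorithm~2).

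For the matching step, the key observation is that after the least-squares update in the previous iteration the residual $\br^{n-1}$ is orthogonal to every column $\ba_{i}$, $i\in S^{n-1}$, and that $S^{n-1}$ meets each of $n-1$ distinct bands $B_{\eta}(J_{1}),\dots,B_{\eta}(J_{n-1})$ in exactly one index. These are precisely the hypotheses under which the estimates in the proof of Theorem~\ref{thm1} were derived: the coefficient bound culminating in (\ref{20}), the disjointness of $B^{(2)}_{\eta}(S^{n-1})$ from $\{J_{n},\dots,J_{s}\}$, and the two-sided inequalities (\ref{21})--(\ref{22}). Since (\ref{RMIP}) forces (\ref{31}) for every $k\le s$, the same argument applies verbatim and shows that the newly selected index $i_{\max}\notin B^{(2)}_{\eta}(S^{n-1})$ lies in $B_{\eta}(\{J_{n},\dots,J_{s}\})$, i.e.\ in a band not yet represented in $S^{n-1}$ (there is at least one such band, as only $n-1<s$ of the $s$ bands are represented). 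Hence $S^{n-1}\cup\{i_{\max}\}$ has $n$ indices meeting $n$ distinct bands, one apiece.

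Next I would feed $S^{n-1}\cup\{i_{\max}\}$ into the LO algorithm and invoke Theorem~\ref{thm2} with input set $S^{0}:=S^{n-1}\cup\{i_{\max}\}$; its cardinality is $n\le s$, so the worst-case substitution $k=s$ already built into (\ref{93}) covers it. Since (\ref{sep}) and (\ref{93}) hold and each element of the input is in the $\eta$-band of a unique nonzero component of $\mbx$, Theorem~\ref{thm2} yields that each element of the LO output $S^{n}$ remains in the $\eta$-band of a unique nonzero component; because (\ref{sep}) makes these bands pairwise disjoint, no two indices of $S^{n}$ can coincide, so $|S^{n}|=n$ and the invariant is restored. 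The subsequent least-squares amplitude computation and residual update (steps~3--4 of Algorithm~3) do not alter $S^{n}$ and re-establish $\br^{n}\perp\ba_{i}$ for $i\in S^{n}$, which is exactly what the next iteration's matching-step analysis needs. After $n=s$ iterations we obtain $\hat\mbx=\mbx^{s}$ with $\supp(\hat\mbx)=S^{s}\subseteq B_{\eta}(\supp(\mbx))$ and every nonzero component of $\hat\mbx$ in the $\eta$-band of a unique nonzero component of $\mbx$, as claimed.

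The only genuinely delicate point is the interleaving itself: one must verify that inserting the LO step between the matching step and the least-squares step preserves exactly the two structural facts on which the Theorem~\ref{thm1} estimates rely --- orthogonality of the working residual to the currently selected columns (true because both the LO sub-iterations and step~3 re-solve a least-squares problem) and the ``one index per band'' structure of the current support (this is the content of Theorem~\ref{thm2}, which is why its hypotheses are imposed). Once these are pinned down, Theorems~\ref{thm1} and~\ref{thm2} dovetail mechanically, and what remains is only the bookkeeping of cardinalities and the relabeling of which band each selected index belongs to.
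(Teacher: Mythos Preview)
The paper states this corollary without proof, regarding it as an immediate combination of Theorems~\ref{thm1} and~\ref{thm2}. Your induction, which interleaves the matching-step estimates from the proof of Theorem~\ref{thm1} with the LO guarantee of Theorem~\ref{thm2} and correctly isolates the two structural facts (residual orthogonality to the current support columns and the one-index-per-band property) that make the interleaving go through, is exactly the natural way to spell this out and is correct.
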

Even though we can not 
improve the performance guarantee for BLOOMP, in practice the LO technique 
greatly enhances the success probability of recovery that
% often there is little difference
%between  performances of 
%the Locally Optimized OMP (LOOMP) and
%the Locally Optimized BOMP (BLOOMP). 
%Moreover, 
BLOOMP has
the best performance  among all the algorithms tested with respect to noise stability and dynamic range  (see Section \ref{sec5}).   In particular,  the LO step greatly enhances
the performance of BOMP  w.r.t. dynamic range.

\commentout{ 
 For the purpose of comparison, we shall also consider
\begin{center}
   \begin{tabular}{l}
   \hline
   \centerline{{\bf Algorithm 3.} Locally Optimized Orthogonal Matching Pursuit (LOOMP)} \\ \hline
   Input: $\bA, \bb$\\
 Initialization:  $\mbx^0 = 0, \br^0 = \bb$ and $S^0=\emptyset$ \\ 
Iteration: For  $n=1,...,s$\\
\quad {1)  $i_{\rm max} = \hbox{arg}\min_{i}|\lan \br^{n-1},\ba_i\ran | , i \notin S^{n-1} $} \\
  \quad      2) $S^{n} = \hbox{LO}(S^{n-1} \cup \{i_{\rm max}\})$ where $\hbox{LO}$ is the output of Algorithm 2.\\
  \quad  3) $\mbx^n = \hbox{arg} \min_\bz \|
     \bA \bz-\bb\|_2$ s.t. \hbox{supp}($\bz$) $\in S^n$ \\
  \quad   4) $\br^n = \bb- \bA \mbx^n$\\
 %5)  Stop if $\|\br^n\|_2\leq\ep$.\\
 Output: $\mbx^s$. \\
 \hline
   \end{tabular}%\label{tab2}
\end{center}
}

%In the sequel, we use the acronyms  OMP-LO, BOMP-LO
%and the like to denote the algorithms with
%the LO step applied at the final iteration  (only). 

\section{Band-Excluded  Thresholding (BET)}\label{sec3}

The BE technique can be extended and applied
to selecting $s$ objects all at once in what is
called the Band-Excluded Thresholding (BET).

We consider two forms of BET. The first is the Band-excluded Matched  Thresholding (BMT) which  is the band-exclusion version of
the One-Step Thresholding (OST) recently shown to possess compressed-sensing
capability under incoherence conditions  \cite{BCJ}.
% and
%for this association, BMT can also be called
%the Band-excluded OST (BOST). 

For the purpose of comparison with BOMP,  we give a performance 
guarantee for BMT under  similar but weaker conditions than (\ref{sep})-(\ref{RMIP}).  

\bigskip
\begin{center}
   \begin{tabular}{l}\hline
    \centerline{{\bf Algorithm 4.}\quad Band-excluded Matched Thresholding (BMT) } \\ \hline
    Input: $\bA, \bb,\eta>0$. \\
    Initialization: $S^0=\emptyset$.\\
   Iteration:  For $k=1,...,s$,\\
 \quad  1) $i_k=\hbox{arg}\max_j |\lan \bb,\ba_{j} \ran |, \forall j\notin B^{(2)}_\eta(S^{k-1})$. \\
 \quad 2) $S^k=S^{k-1}\cup\{i_k\}$\\
  Output $\hat \mbx=\hbox{arg}\min_{\bz}\|\bA \bz-\bb\|_2$ s.t.
    $\hbox{supp}(\bz)\subseteq S^s$\\
    \hline
   \end{tabular}
\end{center}

\begin{theorem}
 Let $\mbx$ be $s$-sparse. Let $\eta>0$ be fixed. 
Suppose that
\beq
\label{sep2}
B_\eta(i)\cap B_\eta(j)=\emptyset, \quad \forall i, j\in \hbox{supp}(\mbx)
\eeq
and that
    \beq
    \eta(2s -1)\frac{\xmax}{\xmin} + \frac{2\|\mbe\|_2}{\xmin} < 1
    \label{RMIP2}
    \eeq
    where
    \[
    \xmax = \max_{k} |x_k|,\quad  \xmin = \min_{k} |x_k|.
    \]
    Let $\hat\mbx$ be the BMT reconstruction. 
Then $\hbox{supp}(\hat\mbx)\subseteq B_\eta(\hbox{supp}(\mbx))$
and moreover every nonzero component of $\hat\mbx$ is in
the $\eta$-coherence band of a unique nonzero component of $\mbx$. 
\label{thm3}
\end{theorem}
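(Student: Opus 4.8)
The plan is to adapt the inductive argument used for Theorem~\ref{thm1}, exploiting the one structural simplification of BMT over BOMP: BMT correlates against the \emph{fixed} data vector $\bb$ at every iteration rather than against a shrinking residual, so the basic correlation estimates are identical at each step and no error is amplified from one iteration to the next. This is what should produce the cleaner coefficient $2s-1$ in (\ref{RMIP2}) in place of the $5s-4$ of (\ref{RMIP}).

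First I would normalize so that $|\ba_j|=1$ and write $\supp(\mbx)=\{J_1,\dots,J_s\}$. Expanding $\langle\bb,\ba_j\rangle=\sum_m x_{J_m}\langle\ba_{J_m},\ba_j\rangle+\langle\mbe,\ba_j\rangle$ and using (\ref{sep2}) — which in particular forces the pairwise coherences $\mu(J_m,J_{m'})\le\eta$ for $m\neq m'$ — one records the two elementary bounds
\[
|\langle\bb,\ba_j\rangle|\le s\eta\,\xmax+\|\mbe\|_2\qquad\text{for every }j\notin B_\eta(\supp(\mbx)),
\]
\[
|\langle\bb,\ba_{J_m}\rangle|\ge |x_{J_m}|-(s-1)\eta\,\xmax-\|\mbe\|_2\ge \xmin-(s-1)\eta\,\xmax-\|\mbe\|_2 ,
\]
and observes that (\ref{RMIP2}) says precisely that the right-hand side of the second inequality strictly exceeds that of the first. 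Since neither bound involves the iteration index, the gap separating ``support'' columns from ``far'' columns is the same throughout the run.

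Then I would induct on $k=1,\dots,s$ with the hypothesis that $S^{k-1}$ consists of $k-1$ indices lying in $k-1$ distinct single bands $B_\eta(J_m)$, $m\in M_{k-1}$. For $k=1$ the global maximizer $i_1$ must lie in $B_\eta(\supp(\mbx))$ by the two bounds and (\ref{RMIP2}), and (\ref{sep2}) makes the containing band unique. For the inductive step, three points: (i) each already chosen $i\in S^{k-1}$ with $i\in B_\eta(J_m)$ satisfies $B_\eta(J_m)\subseteq B^{(2)}_\eta(i)$, so the band-exclusion rule removes every captured band from the admissible set and $i_k$ cannot land in a band already used; (ii) at least one not-yet-captured support index $J_m$, $m\notin M_{k-1}$, survives the exclusion, i.e.\ $J_m\notin B^{(2)}_\eta(S^{k-1})$; (iii) hence the maximal correlation over the admissible set is at least $\xmin-(s-1)\eta\,\xmax-\|\mbe\|_2>s\eta\,\xmax+\|\mbe\|_2$, so $i_k\in B_\eta(\supp(\mbx))$, and by (i) it sits in a fresh band $B_\eta(J_{m'})$, $m'\notin M_{k-1}$, unique by (\ref{sep2}); set $M_k=M_{k-1}\cup\{m'\}$. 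After $s$ iterations $S^s$ meets each of the $s$ disjoint bands $B_\eta(J_1),\dots,B_\eta(J_s)$ in exactly one index, so since $\hat\mbx$ is the least-squares solution supported on $S^s$ we get $\supp(\hat\mbx)\subseteq S^s\subseteq B_\eta(\supp(\mbx))$ with every nonzero component in a unique support band.

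The main obstacle is step (ii): one must rule out that the double $\eta$-bands of the already-selected indices together swallow \emph{every} remaining uncaptured support index. This is exactly where (\ref{sep2}) does the work — each selected index lies inside one already-captured single band, so its double band extends only a controlled amount past that band, and the pairwise disjointness of the \emph{single} bands of the support keeps the uncaptured indices spread out enough that one must escape the exclusion zone. By contrast Theorem~\ref{thm1} uses the stronger (\ref{sep}), under which the corresponding (strictly stronger) statement that $B^{(2)}_\eta(S^{k-1})$ is disjoint from the whole uncaptured support is immediate from the $B^{(3)}_\eta$ bookkeeping already appearing in that proof; here one only needs, and can only afford, the weaker ``at least one survives.''
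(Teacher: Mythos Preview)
Your two correlation inequalities and the overall architecture match the paper's proof exactly: the paper derives the same lower bound at each $J_m$ and the same upper bound off $B_\eta(\hbox{supp}(\mbx))$, observes that (\ref{RMIP2}) separates them, and then dispatches everything else in a single sentence appealing to ``step ii) of BMT and (\ref{sep2}).'' You go further by making the induction explicit and correctly isolating your step~(ii) as the crux. But you do not actually prove~(ii), and the intuitive justification in your final paragraph does not go through.

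Concretely, suppose the pairwise coherence has an interval structure on a fine grid, say $\mu(p,q)>\eta$ exactly when $|p-q|<1$. Take $s=2$ with $J_1=0$ and $J_2=2.7$; then $B_\eta(J_1)=(-1,1)$ and $B_\eta(J_2)=(1.7,3.7)$ are disjoint, so (\ref{sep2}) holds. At $k=1$ your bounds force only $i_1\in B_\eta(\hbox{supp}(\mbx))$, so the algorithm may legitimately select $i_1=0.9\in B_\eta(J_1)$. Then $B^{(2)}_\eta(i_1)=(-1.1,2.9)\ni J_2$: the sole uncaptured support index is excluded at $k=2$, and ``at least one survives'' already fails. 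The admissible set does still meet $B_\eta(J_2)$, namely in $(2.9,3.7)$, but your lower bound is available only at the support indices $J_m$ themselves, not at neighboring grid points, so nothing in the argument forces $i_2$ into $B_\eta(\hbox{supp}(\mbx))$. This is a genuine gap that (\ref{sep2}) alone does not close --- the stronger separation (\ref{sep}) would, via exactly the $B^{(3)}_\eta$ bookkeeping you mention --- and the paper's own one-line conclusion is no more explicit on this point than your sketch.
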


\begin{proof}Let  $\hbox{supp} (\mbx)=\{J_1,\ldots, J_s\}$. 
Let $J_{\rm max}\in \hbox{supp}(\mbx)$ be the index of
the largest component of $\mbx$ in absolute value. 

On the one hand, for $k=1,...,s$,
\beq
 \label{25}   |\bb^{\star}\ba_k| & = &|\bx_1\ba_1^{\star}\ba_k + ... + \bx_{k-1}\ba_{k-1}^{\star}\ba_k + {x}_k +
    \bx_{k+1}\ba_{k+1}^{\star}\ba_k + ... + \bx_s\ba_s^{\star}\ba_k + \mbe^{\star}\ba_k| \\
    & \ge& x_{\rm min} - (s-1)\eta \xmax - \|\mbe\|_2.\nn
\eeq

On the other hand, $\forall l \notin B_\eta(\suppx)$,
\beq
\label{26}
  |\bb^{\star}\ba_l| & =& |\bx_1\ba_1^{\star}\ba_l + \bx_2\ba_2^{\star}\ba_l + ... + \bx_s\ba_s^{\star}\ba_l + \mbe^{\star}\ba_l| \\
  & \le& \xmax s\eta + \|\mbe\|_2.\nn
\eeq
Therefore, the condition (\ref{RMIP2}) 
implies that  the right hand side of (\ref{25}) is greater than
the right hand side of (\ref{26}). This means $|\bb^*\ba_k|> |\bb^*\ba_l|, \forall k=1,..,s, \forall l\not\in B_\eta(\suppx)$ and hence 
the $s$ highest points of $|\bb^*\ba_k|$ are in
$B_\eta(\suppx)$. 

From  step ii) of BMT and (\ref{sep2}) it follows the second half of the statement, namely 
every nonzero component of $\hat\mbx$ is in
the $\eta$-coherence band of a unique nonzero component of $\mbx$.
\end{proof}

%\begin{remark}

%It is natural to ask if BOMP can perform well under (\ref{sep2}).
% The locally optimized version, BLOOMP, however,
%performs well under (\ref{sep2}), see Figure \ref{fig422}. 

%In light of this, it is natural to enhance BMT with  LO. The resulting
%algorithm, BMT-LO, still does not perform as well
%as BOMP, especially with respect to dynamic range.
 
 Condition (\ref{sep2}) roughly means that the objects
are separated by at {\bf two}  RLs which
is weaker than (\ref{sep}). 
In numerical simulations, however, BOMP performs far better than BMT.
In other words,  BMT is not a stand-alone algorithm
but should instead be imbedded in other algorithms such
as Subspace Pursuit (SP)  \cite{DM},  the Compressive Sampling Matching Pursuit (CoSaMP)  \cite{NT}  and  the Normalized Iterative Hard Thresholding (IHT)
\cite{BD2}. This gives rise to Band-excluded Subspace Pursuit (BSP), Band-excluded Compressive Sampling Matching Pursuit (BCoSaMP) and Band-excluded
Normalized Iterative Hard Thresholding (BNIHT)
which we demonstrate their performance numerically.

In addition to BMT,  the second form of BET, namely  the  Band-excluded, Locally Optimized Thresholding (BLOT),
can further enhance  the performance in reconstruction with
unresolved grids. 

\bigskip

\begin{center}
   \begin{tabular}{l}
   \hline
   
   \centerline{{\bf Algorithm 5.}\quad Band-excluded, Locally Optimized  Thresholding (BLOT)}  \\ \hline
    Input: $\mbx=(x_1,\ldots, x_M)$, $\bA, \bb, \eta>0$.\\
    Initialization:  $S^0=\emptyset$.\\
Iteration:  For $n=1,2,...,s$.\\
\quad 1) $i_n= \hbox{arg}\,\,\min_j |x_j|, j\not\in B^{(2)}_\eta(S^{n-1}) $.\\
 \quad 2)  $S^n=S^{n-1}\cup\{i_n\}$.\\
    Output: $\hat\mbx=\hbox{arg}\min \|\bA\bz-\bb\|_2$, $\hbox{supp}(\bz)\subseteq \hbox{LO}(S^s)$ where $\hbox{LO}$ is the output of Algorithm 2.\\
     \hline
    %$\hat\mbx= \mbx_{S^s}$ where $\mbx_{S^s}$ is the
    %Hadamard 
    %product of $\mbx$ with the indicator function of $S^s$.\\ .\\
   \end{tabular}
\end{center}

\bigskip

Now we state the algorithm Band-excluded, Locally Optimized
Subspace Pursuit (BLOSP). The Band-excluded Locally Optimized CoSaMP (BLOCoSaMP)  is  similar and omitted here. 

\bigskip

\begin{center}
   \begin{tabular}{l}
   \hline
   
      \centerline{{\bf Algorithm 6.} \quad Band-excluded, Locally Optimized Subspace Pursuit  (BLOSP)}  \\ \hline
   Input: $\bA,\bb, \eta>0$.\\
    Initialization: $\mbx^0=0,  \br^0 = \bb$\\
Iteration: For $n=1,2,...$,\\
\quad 1)  $\tilde{S}^{n} = \hbox{supp}(\mbx^{n-1}) \cup \hbox{supp} (\hbox{BMT}(\br^{n-1}))$\\
\quad \quad  where 
 $\hbox{ BMT}(\br^{n-1})$ is the output of
    Algorithm 4 with data $\br^{n-1}$.\\
\quad 2) $\tilde{\mbx}^n = \text{arg} \min \|\bA\bz - \bb\|_2$ s.t. $\hbox{supp}(\bz)\subseteq \tilde S^n $. \\
\quad 3) 
%$S^n = \hbox{LO}(\hbox{supp} (\hbox{BT} (\tilde{\mbx}^n))) $ where $\hbox{BT}(\tilde{\mbx}^n)$ is the output of Band-excluded Thresholding\\
$S^n =\hbox{supp}( \hbox{BLOT} (\tilde{\mbx}^n))$ where $\hbox{BLOT}(\tilde{\mbx}^n)$ is the output of Algorithm 5.\\
\quad 4) $\br^n = \min_{\bz}\|\bA \bz-\bb\|_2, \hbox{supp}(\bz)\subseteq S^n $.\\
 \quad 5)  If $\|\br^{n-1}\|_2 \le \epsilon$ or $\|\br^n\|_2 \ge \|\br^{n-1}\|_2$, then quit and set $S=S^{n-1}$; otherwise continue iteration.\\
    Output: $\hat \mbx=\hbox{arg}\min_{\bz}\|\bA \bz-\bb\|_2$ s.t.
    $\hbox{supp}(\bz)\subseteq S$.\\
    \hline
   \end{tabular}
\end{center}

\bigskip

\commentout{

Our  numerical experiments show that given the same number of measurements BLOSP and BCoSaMP have
a higher probability of success than BLOOMP. 
A simple way of taking advantage of both algorithms is
to use the output of BLOOMP as the initial guess
of BLOSP and BCoSaMP. 
}
%The hybrid algorithms (acronym  BLOOMP-BLOSP, BLOOMP-BCoSaMP)
%has a better performance than ???
 
\bigskip

Embedding BLOT in NIHT turns out to have a nearly identical 
performance to embedding BLOT in the Iterative
Hard Thresholding (IHT) \cite{BD}. Since the latter
is simpler to implement and more efficient to compute,
we state the resulting algorithm,  the Band-excluded, Locally Optimized IHT (BLOIHT), below.  

\bigskip

\begin{center}
   \begin{tabular}{l}
   \hline
   
   \centerline{{\bf Algorithm 7.}\quad Band-excluded, Locally Optimized Iterative Hard Thresholding (BLOIHT)}  \\ \hline
    Input: $\bA, \bb, \eta>0$. \\
    Initialization:  $\hat \mbx^0=0,\br^0=\bb$.\\
Iteration:  For $n=1,2,...$,  \\
\quad 1) $\mbx^n= \hbox{BLOT} (\mbx^{n-1}+\bA^*\br^{n-1})$ where $\hbox{BLOT}$ denotes the output of Algorithm 5. \\
 \quad 2)  If $\|\br^{n-1}\|_2 \le \epsilon$ or $\|\br^n\|_2 \ge \|\br^{n-1}\|_2$, then quit and set $S=S^{n-1}$; otherwise continue iteration.\\
    Output:  $\hat\mbx$.\\
    \hline
   \end{tabular}
\end{center}

\bigskip

%If we apply the LO step at the final step to the output of BIHT
%then we obtain the algorithm BIHT with Local Optimization (BIHT-LO). 

In addition, the technique BLOT can be used to enhance the recovery capability with unresolved grids  of 
 the $L^1$-minimization principles, Basis Pursuit (BP)
\beq
\min_{\bz}  \|\bz\|_1,\quad\hbox{subject to}\quad 
\bb=\bA \bz.
\eeq
 and the Lasso
 \beq
\min_{\bz} {1\over 2} \|\bb-\bA \bz\|_2^2+\lambda \sigma \|\bz\|_1,
\eeq
where $\sigma$ is the standard deviation of
the each noise component and $\lambda$ is the regularization parameter.  
%and the Dantzig selector
In this case, BLOT  is applied to  the
BP and Lasso estimates $\hat \mbx$ to produce 
a $s$-sparse reconstruction. The resulting algorithms 
are  called  BP-BLOT and 
Lasso-BLOT, respectively.

The thresholded Lasso, the Lasso followed by a hard  thresholding,  has been considered previously (see \cite{MY} and references therein). The novelty of our version  lies in the
 BE and LO steps which greatly
enhance the performance in dealing with unresolved grids.  %The thresholded Lasso has
%better performance than the Lasso.  

\section{Numerical study}\label{sec5}

\begin{figure}
\includegraphics[width=8cm]{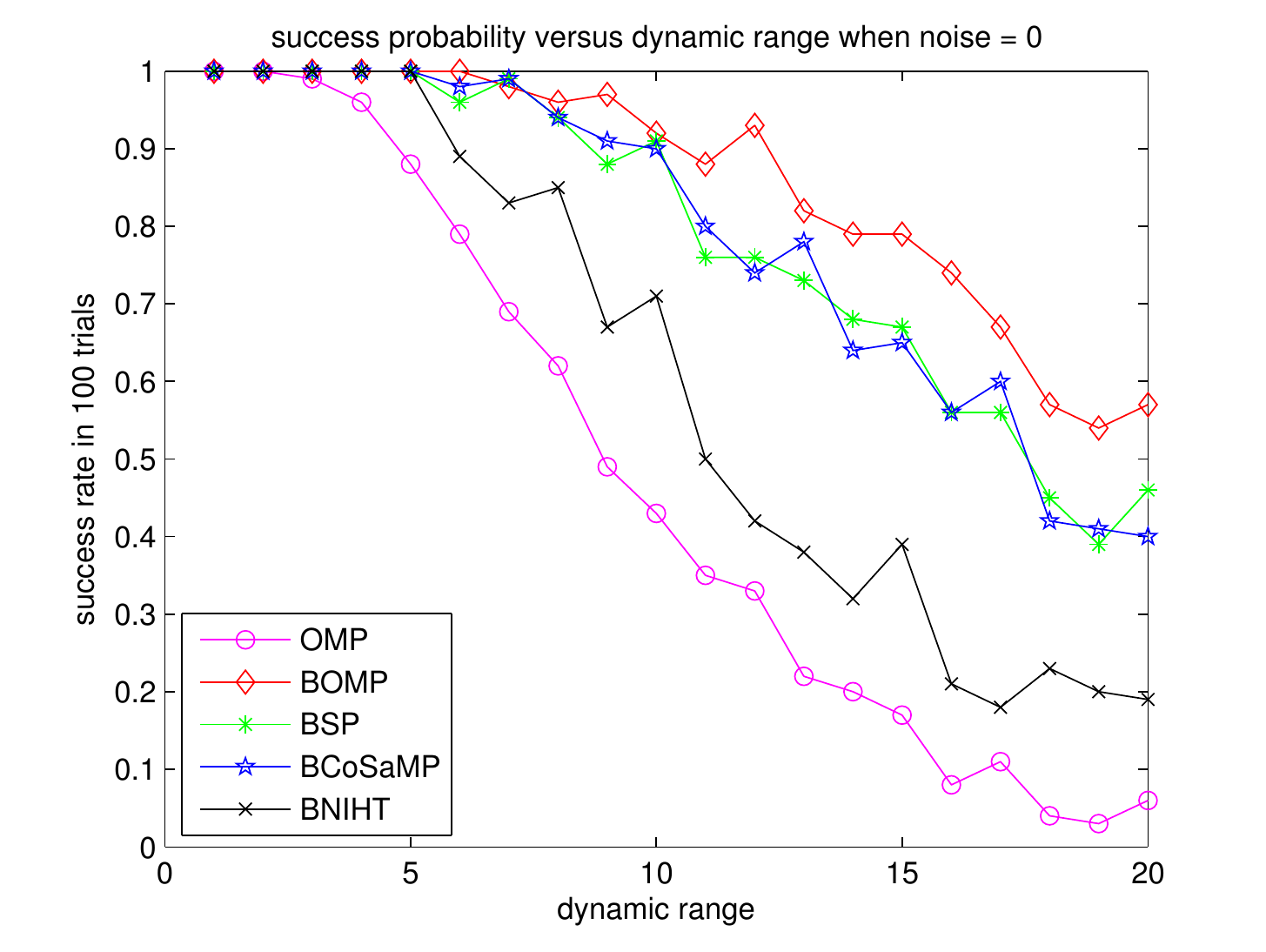}
\includegraphics[width=8cm,height=6cm]{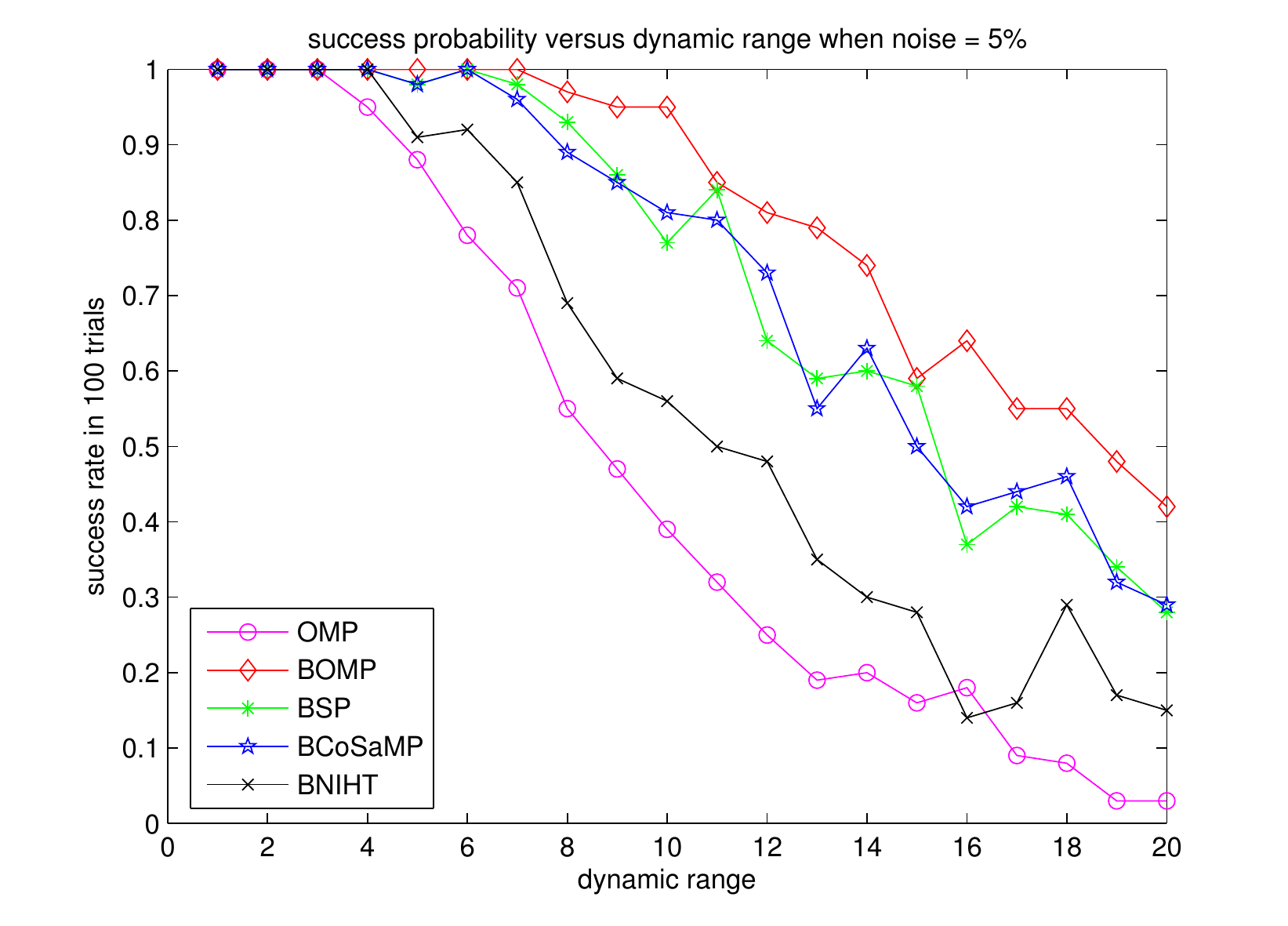}
\includegraphics[width=8cm]{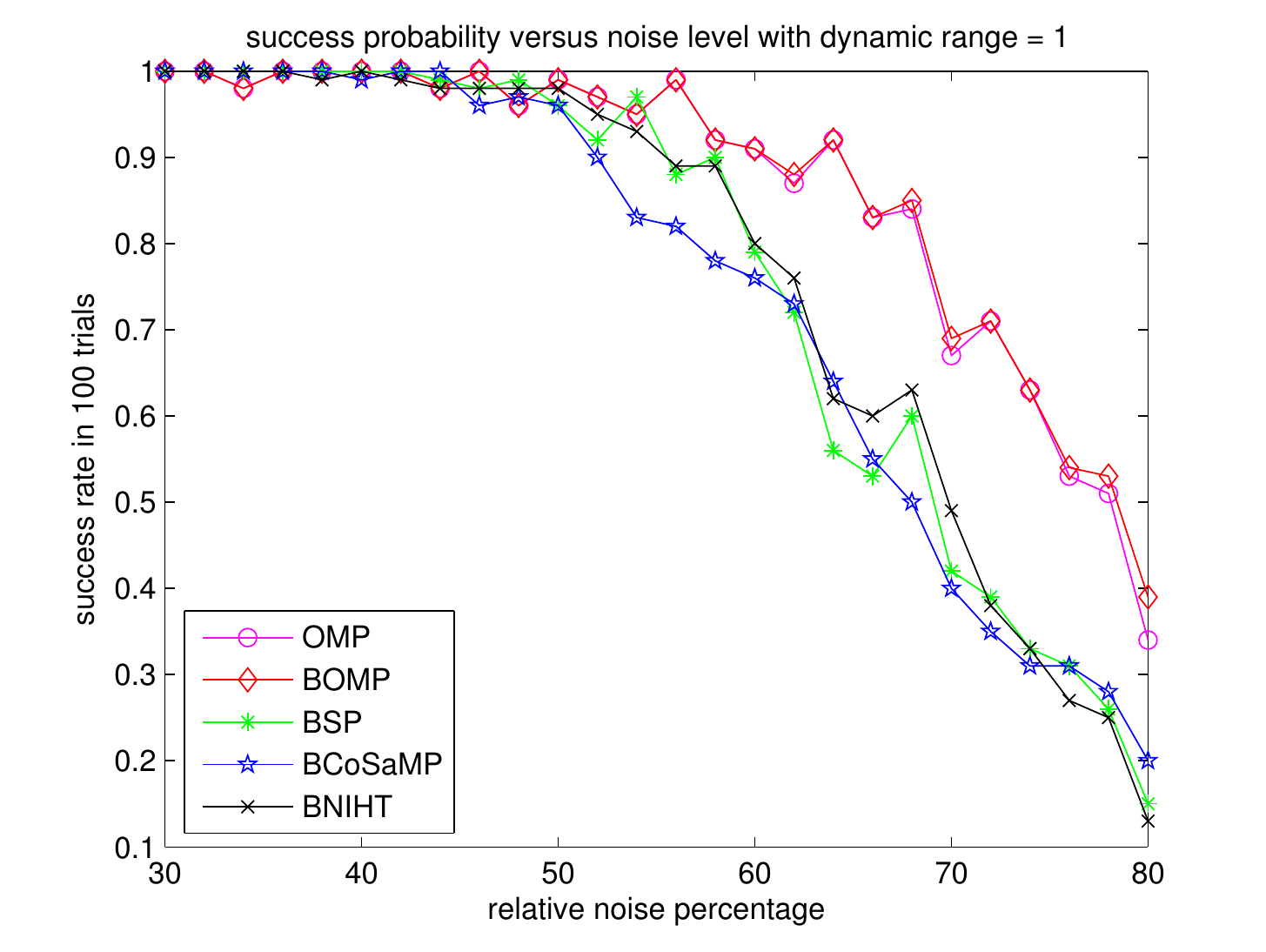}
\includegraphics[width=8cm, height=6cm]{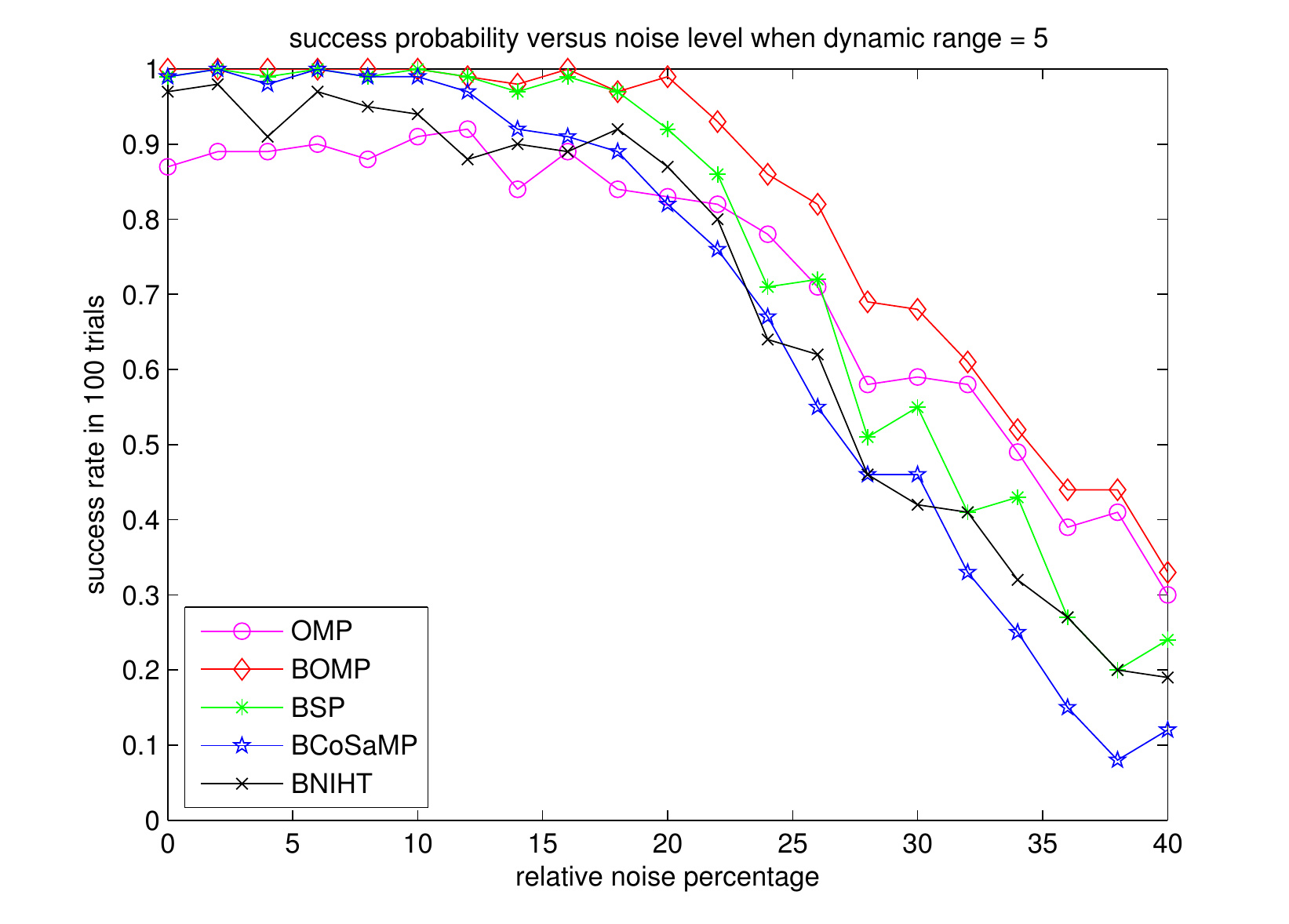}
\includegraphics[width=8cm,height=6cm]{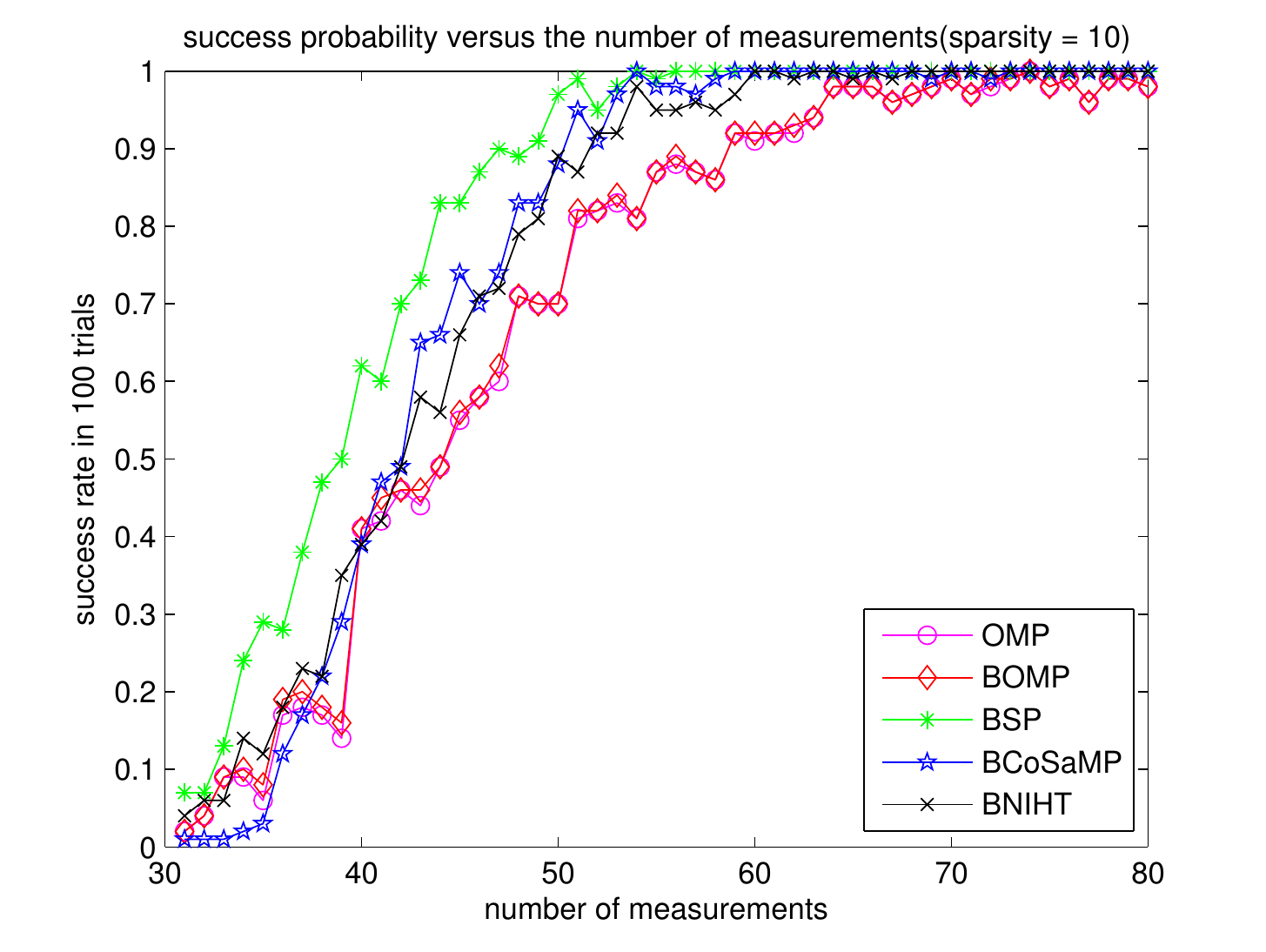}
\includegraphics[width=8cm,height=6cm]{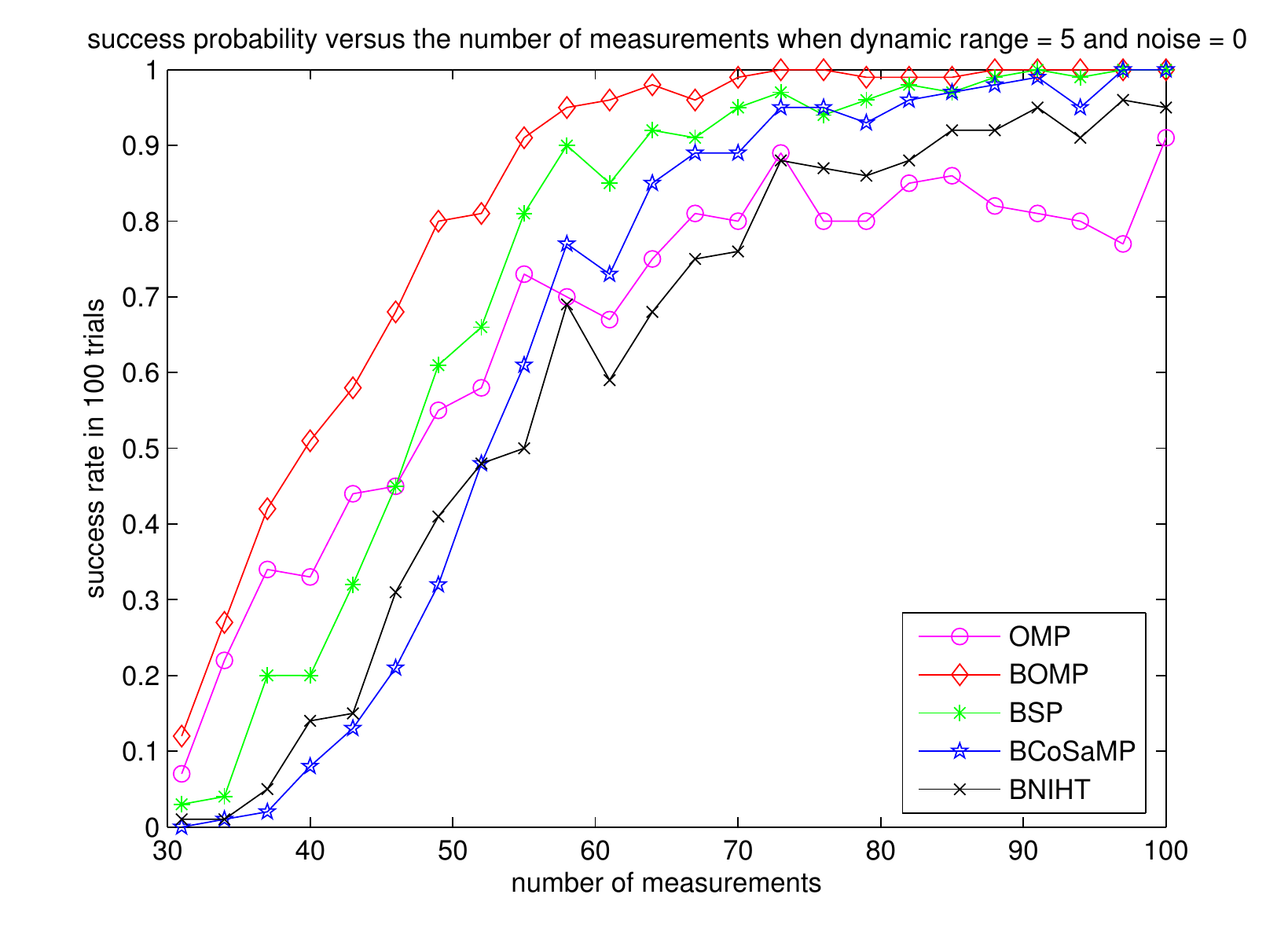}
  \caption{Success probability of various band-excluded algorithms versus dynamic range for $0\%$ (top left) and $5\%$ (top right) noise, versus noise level for dynamic range 1 (middle left) and 5 (middle right)
  and versus number of noiseless measurements for dynamic range 1 (bottom left) and 5 (bottom right).}
  \label{figB}
 \end{figure}

We test our various band-exclusion algorithms on the matrix
 \beq
 \label{10.5}
 \frac{e^{-2\pi i (l-1) \xi_k/F}}{\sqrt{N}},\quad k=1,...,N,\,\, l=1,...,RF
 \eeq
  where $\xi_k$ is uniformly and independently distributed in $(0,1)$. When $F=1$, 
 $\bA$ is the random partial Fourier matrix analyzed  in \cite{Rau} and,  with sufficient number of samples,   successful recovery with $\bA$  is guaranteed with high probability. 
 For large $F$, however, $\bA$ has a high mutual coherence. 
Unless otherwise stated, we use $N=100$, $M=4000$ and $F=20$
and  the i.i.d. Gaussian noise $\mbe\sim N(0,\sigma^2 I)$ in
our simulations. 
Recall  the  decay profile of pairwise coherence as the index separation increases in Figure $\ref{fig1}$ (right). 
 The $\eta$-coherence band is about  $0.7$ RL  in half width with
 $\eta=0.3$.   

We compare performance
 of various algorithms in terms of success probability versus 
  dynamic range,  noise level, number of measurements and resolution. 
 Unless otherwise stated, we use in our simulations 10 randomly phased and located  objects, separated by at least 3 RLs. 
 A reconstruction is counted as a success  if every reconstructed  object
is within 1 RL of 
the object support. This is equivalent to
the criterion that the Bottleneck distance between
the true support and the reconstructed support
is less than 1 RL. 

For two subsets $A$ and $B$ in $\IR^d$ of the same  cardinality,
the Bottleneck distance $d_B(A,B)$ is defined as follows.
Let $\cM$ be the collection of all one-to-one mappings  from $A$ to $B$. Then
\[
d_B(A,B)=\min_{f\in \cM}\max_{a\in A} |a-f(a)|.
\]
For subsets in one dimension, the Bottleneck distance can be
calculated easily. Let $A=\{a_1,\ldots,a_n\}$ and $B=\{b_1,\ldots, b_n\}$ be listed in the ascending order. 
Then 
\[
d_B(A,B)=\max_{j} |a_j-b_j|.
\]
In higher dimensions, however, it is more costly to compute
the Bottleneck distance \cite{EI, IV}. The Bottleneck distance
is a stricter metric than the Hausdorff distance which does
not require one-to-one correspondence between the two target
sets. 

In the first set of experiments, we test 
various greedy algorithms equipped with
the BE step (only).
%$N=200$, $M=10000$ and $F=20$.
This includes BOMP, Band-excluded 
Subspace Pursuit (BSP), Band-excluded CoSaMP (BCoSaMP) and Band excluded
Normalized Iterative Hard Thresholding (BNIHT).   For comparison, we also
show  the performance of OMP without BE.

As shown in Figure \ref{figB}, BOMP has the best performance with respect  to dynamic range, noise and, in the case of higher dynamic range ($\geq 3$, bottom right panel),  number of measurements.  In the case of dynamic range equal to 1, 
BSP is  the best performer in terms
of number of measurements  followed closely by BCoSaMP
and BNIHT (bottom panel).  In the case of  dynamic range equal to 1, BOMP and OMP have almost identical performance
with respect to noise (middle left) and number of measurements (bottom left). 
The performance of BSP and BCoSaMP, however, 
depends crucially on the BE step without which both SP
and CoSaMP fail catastrophically (not shown).

%Our numerical experiments (not shown) also  indicate  that  the BE step is particularly important  for recovery success with unresolved grids by  algorithms employing
%hard  thresholding such as SP, CoSaMP, IHT and the  thresholded BP and Lasso. 

In the next set of experiments, we test BLO-equipped algorithms.  
For the purpose of comparison, we also test the algorithm, Locally Optimized OMP (LOOMP) which is the same as Algorithm 3
but without BE.

Lasso-BLOT
is implemented with the regularization parameter
\beq
\label{10.3}
\lambda=0.5\sqrt{\log{M}}
\eeq
or
\beq
\label{10.4}
\lambda=\sqrt{2\log{M}}
\eeq
which is proposed in \cite{CDS}. Other larger values have been proposed
in \cite{CP, CP2}. Our numerical experiments indicate
that for matrix (\ref{10.5}) with large $F$ the choice
(\ref{10.3}) is nearly optimal among all $\lambda/\sqrt{\log{M}}\leq 10$
and relative noise up to $5\%$. The superiority of the choice (\ref{10.3}) to (\ref{10.4}) (and other choices) manifests clearly across all performance figures  involving 
both of them.

\begin{figure}
\includegraphics[width=8cm]{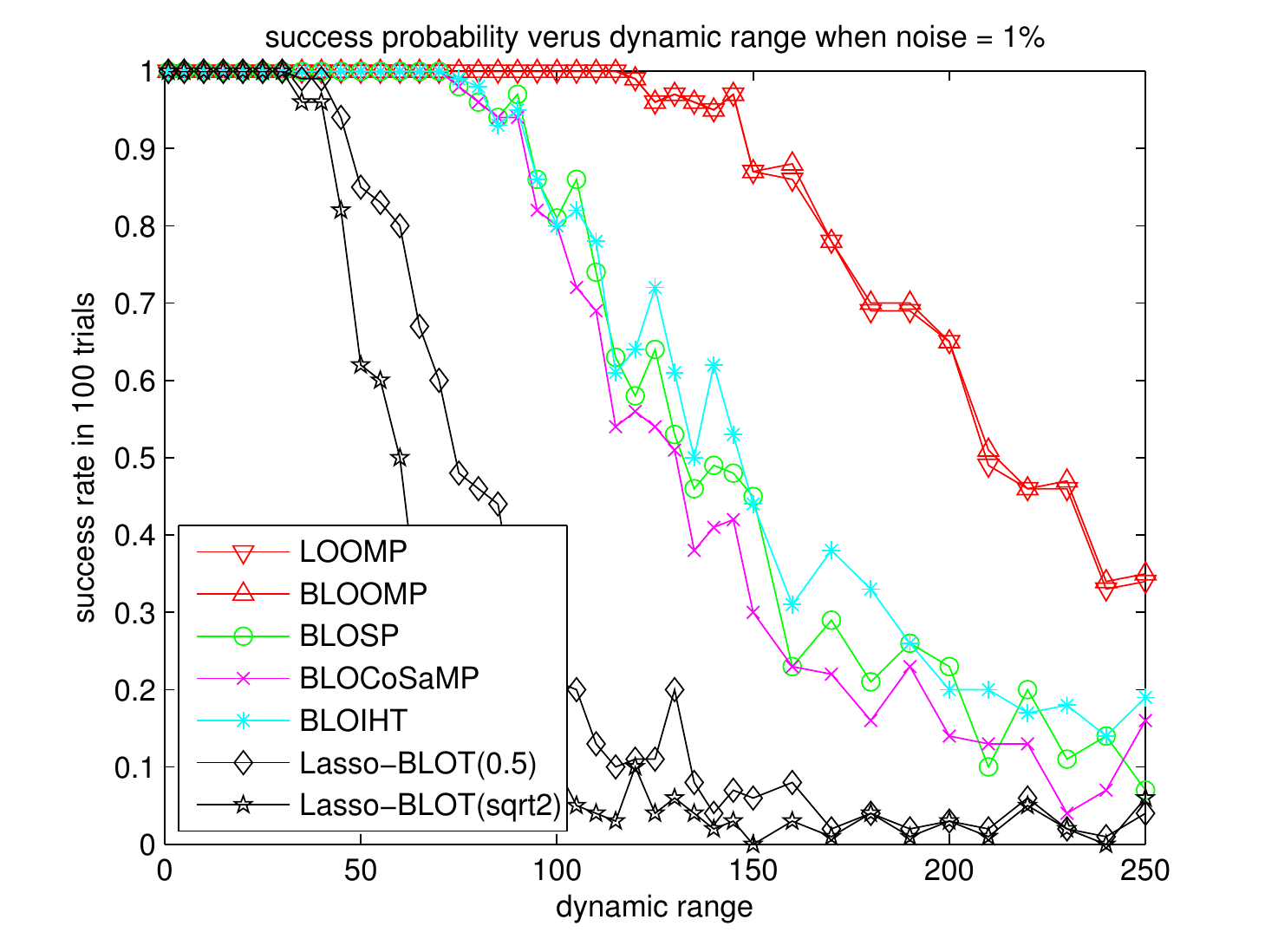}
\includegraphics[width=8cm]{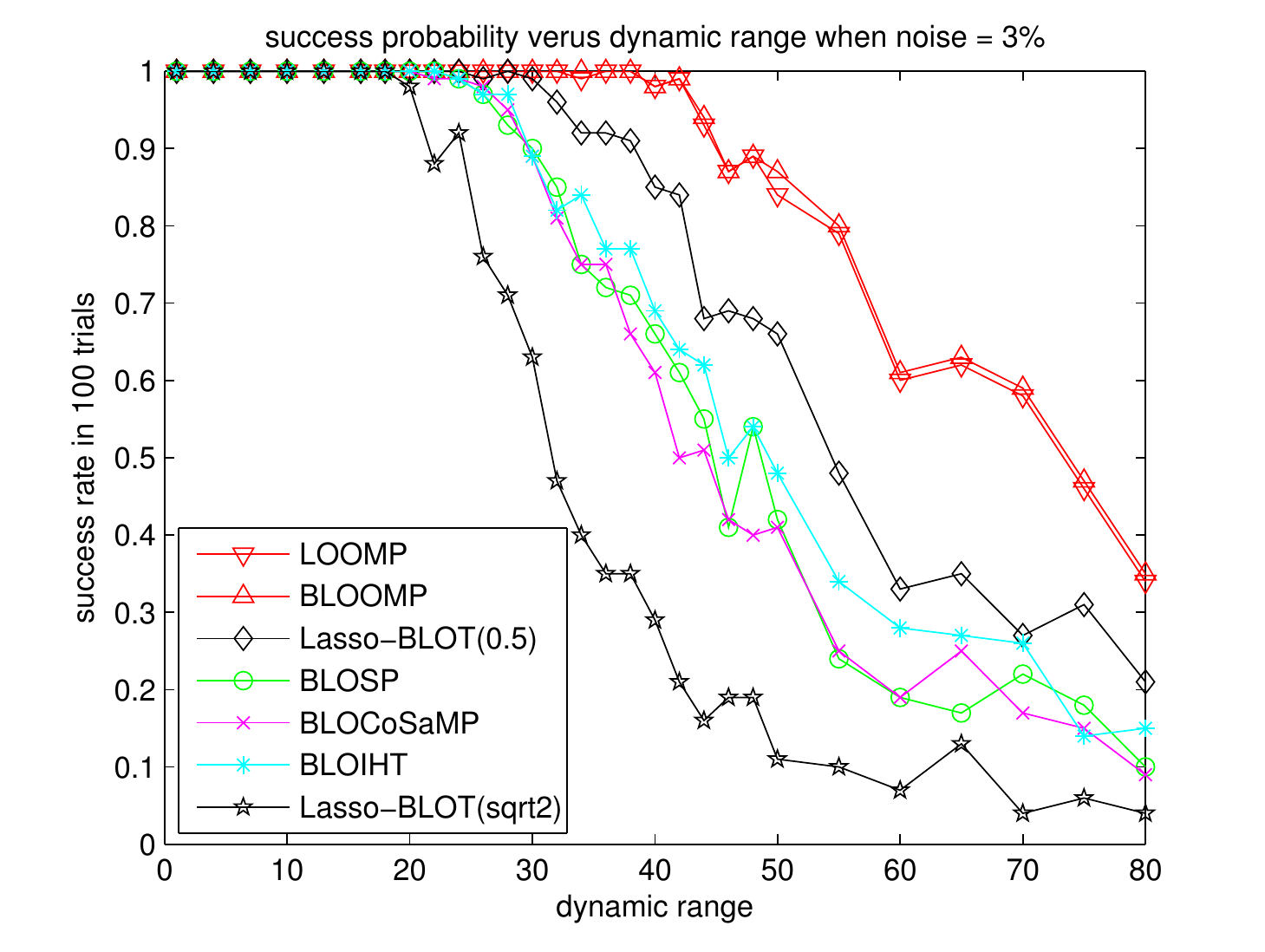}
%\includegraphics[width=8cm,height=7cm]{May20/SuccessRangeNoise10.eps}
%\includegraphics[width=15cm,height=10cm]{April9/all-figures/SuccessRangeNoise5Fourier.eps}
%\caption{Success probability versus dynamic range with $1\%$ (top) and $5\%$ (bottom) noise}
\caption{Success probability versus dynamic range with
$1\%$ (left) and $3\%$ noise (right).}
\label{fig44}
\label{fig42}
\end{figure}
\begin{figure}
\includegraphics[width=8cm]{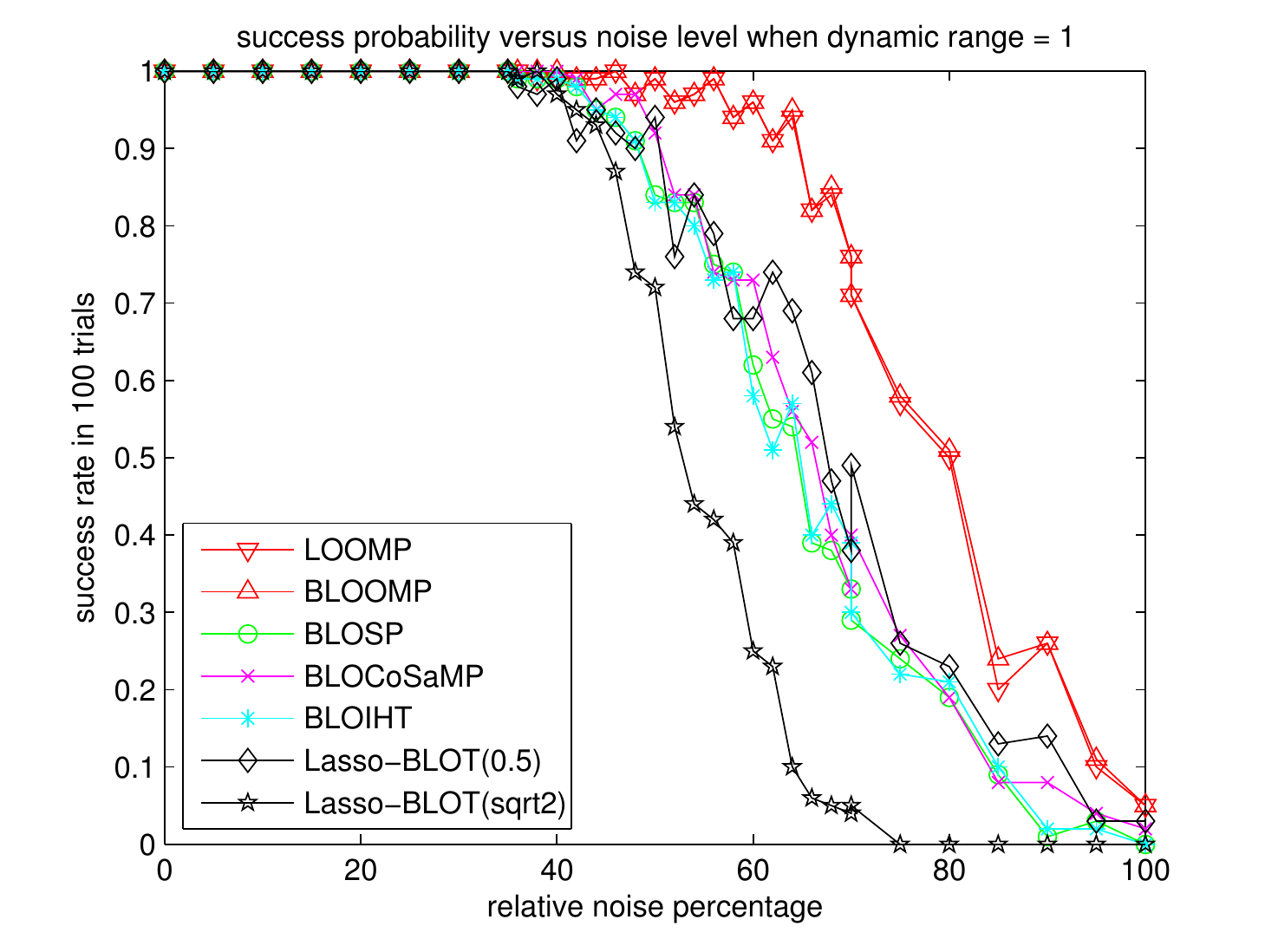}
\includegraphics[width=8cm]{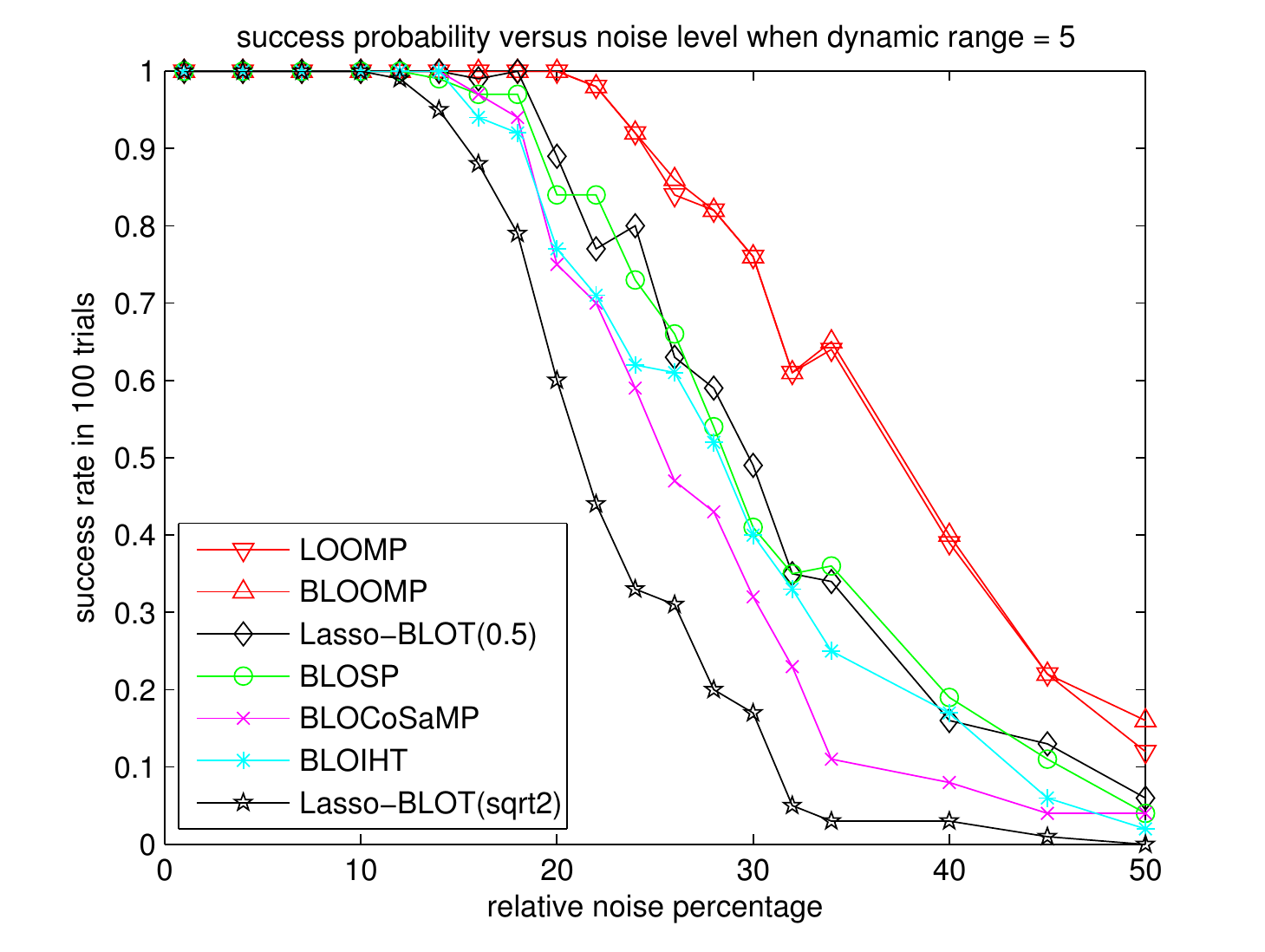}
\caption{Success probability versus relative noise level with
dynamic range 1 (top) and 5 (bottom). }
\label{fig49-1}
\end{figure}

FIgure \ref{fig42} shows success probability versus
dynamic range in the presence  of noise. The top performers are LOOMP and
BLOOMP both of which 
can handle large dynamic range.  In the noiseless case,
the success rate for LOOMP,
BLOOMP, BLOSP, BLOOMP and BLOCoSaMP stays near unity 
for dynamic range up to as high as $10^{14}$. 
%LOOLS is the locally optimized Orthogonal Least Squares
%(OLS) \cite{CBL, Nat} and BLOOLS is the locally optimized, 
%band-excluded OLS. 
%The second tier performer is the BP-BLOT.
With $1\%$ noise (left panel), BLOSP, BLOCoSaMP and BLOIHT 
perform better than Lasso-BLOT with either (\ref{10.3}) or
(\ref{10.4}) while 
with  $3\%$ noise (right panel), 
BLOSP, BLOCoSaMP and BLOIHT performance curves have dropped below
that of Lasso-BLOT with (\ref{10.3}). But the noise stability of 
Lasso-BLOT never catches up with that of LOOMP/BLOOMP even
as the noise level increases 
as can be seen in Figure \ref{fig49-1}.

 Figure \ref{fig49-1} shows
that  LOOMP and BLOOMP remain  the top performers with
respect to noise while Lasso-BLOT with (\ref{10.4})
has the worst performance. Lasso-BLOT with (\ref{10.3}), however,
is a close second in noise stability. As seen 
in Figures \ref{fig42} and \ref{fig49-1}, the performance of Lasso-BLOT
depends significantly on the choice of the regularization parameter. 
%with
%BOMP-LO  trailing slightly behind. 
%From  Figure \ref{fig49-1}, we see that when the dynamic range is one
%band exclusion is not essential for OMP-LO to
%achieve the top performance. 

With respect to number of measurements (Figure \ref{fig411-1}), 
BP/Lasso-BLOT with (\ref{10.3}) is  the best performer,  followed closely by 
BLOSP and BLOCoSaMP  for dynamic range 1 (left panels) while
for  dynamic  range 10, BLOOMP and LOOMP perform  significantly better than the rest (right panels). As clear from
the comparison of the top left and right panels of Figure \ref{fig411-1}, 
at  low level of noise
the performance of BLOOMP and LOOMP improves significantly
as the dynamic range increases from 1 to 10. In the meantime,  
 the performance of BLOSP, BCoSaMP and BLOIHT improves
slightly while  the performance of BP/Lasso-BLOT deteriorates. At $10\%$ noise, however,  the performance of BLOOMP and LOOMP
is roughly unchanged as dynamic range increases while the performance of 
all other algorithms deteriorates significantly (bottom right). 

\commentout{
Comparing the performance of BLOIHT in Figures \ref{fig42}, 
\ref{fig49-1} and \ref{fig411-1} and the performance of BIHT
in Figure \ref{figB}, we see how much improvement has
been gained by the local optimization step. 
On the other hand,  
embedding  the LO step in  BIHT-BMT and BNIHT  
% to form 
%the algorithm BLOIHT-BMT and BLONIHT 
does not
further improve performance over that of  BLOIHT. 
}

 \begin{figure}
\commentout{
\includegraphics[width=8cm,height=6cm]{May15/SuccessMeasurementRange1Noise0.eps}\\
\includegraphics[width=8cm,height=6cm]{May8/SuccessMeasurementRange1Noise0.eps}
\includegraphics[width=8cm,height=6cm]{May8/SuccessMeasurementRange1Noise5.eps}
\includegraphics[width=8cm,height=6cm]{May8/SuccessMeasurementRange5Noise1.eps}
\includegraphics[width=8cm,height=6cm]{May8/SuccessMeasurementRange20Noise0.eps}
}
\includegraphics[width=17cm]{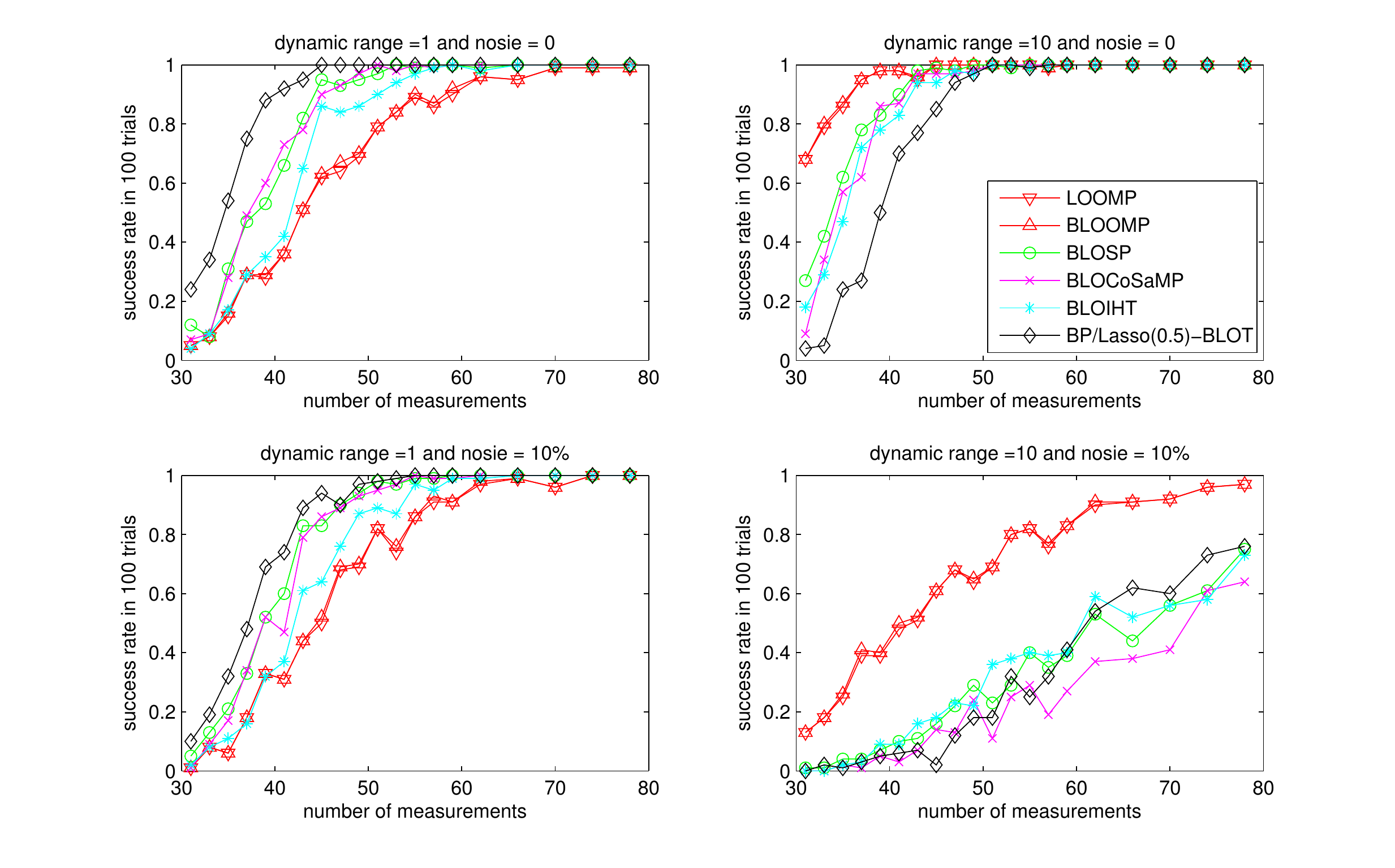}
\caption{Success probability versus the number of measurements with dynamic range 1, $0\%$ noise (top, left),
dynamic range 1, $10\%$ noise (bottom, left), dynamic range 10, $0\%$ noise (top, right)
and with dynamic range 10, $10\%$ noise (bottom, right).}
\label{fig411-1}
\end{figure}

Next we compare   the resolution performance 
of the various algorithms for 10 randomly phased objects of unit  dynamic range in the absence of noise. The 10 objects are consecutively   located  and separated by  equal length varying from $0.1$ to 3 RLs. The whole object support is, however, randomly 
shifted for each of the 100  trials.  For closely  spaced
objects, it is necessary to modify the band exclusion and local optimization rules:  If $h$ is the object spacing, 
we use $h/2$ to replace 2 RLs of the original BE rule
and 1 RL of  the original LO rule.

\begin{figure}
\includegraphics[width=8cm]{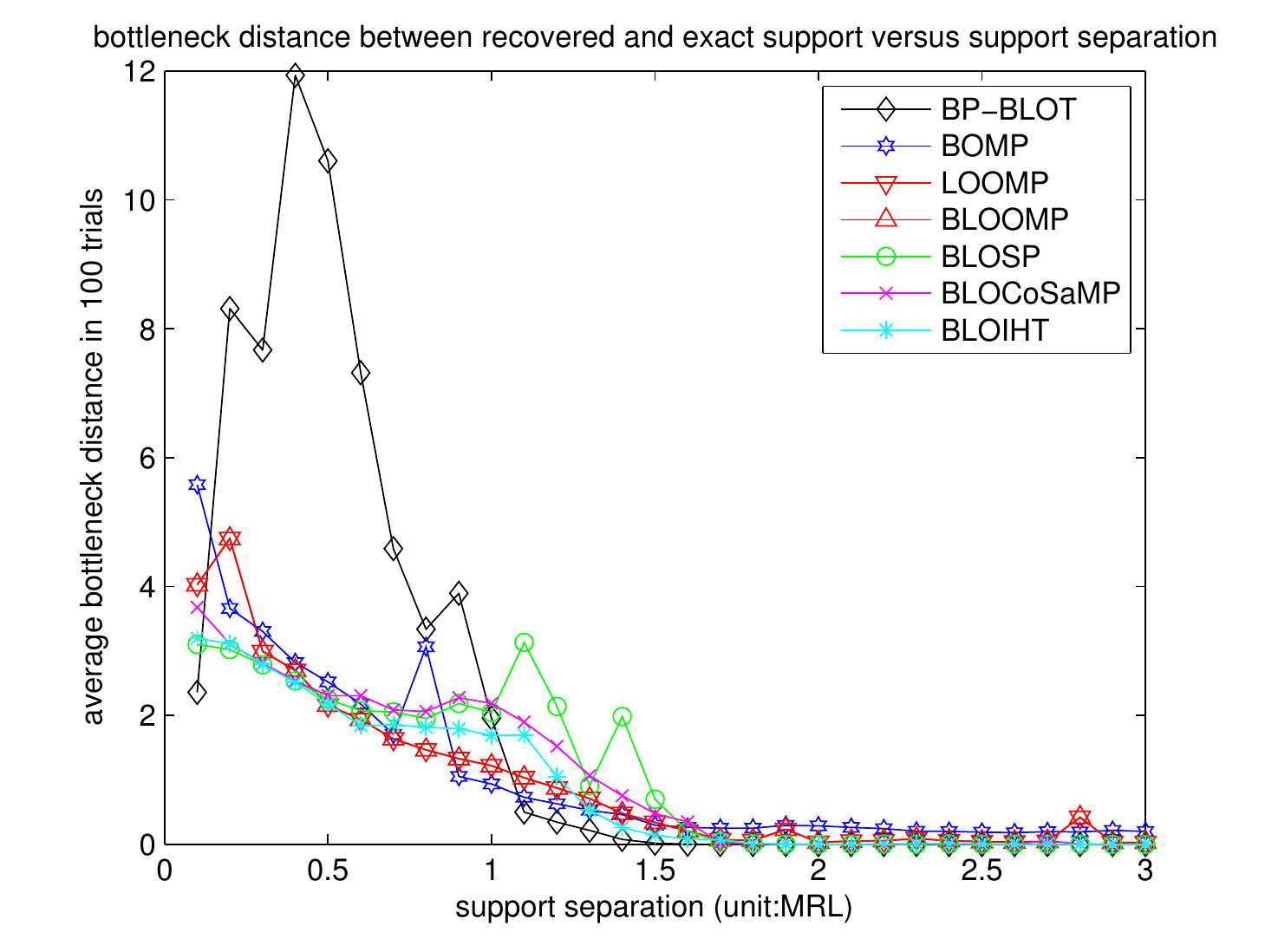}
\includegraphics[width=8cm]{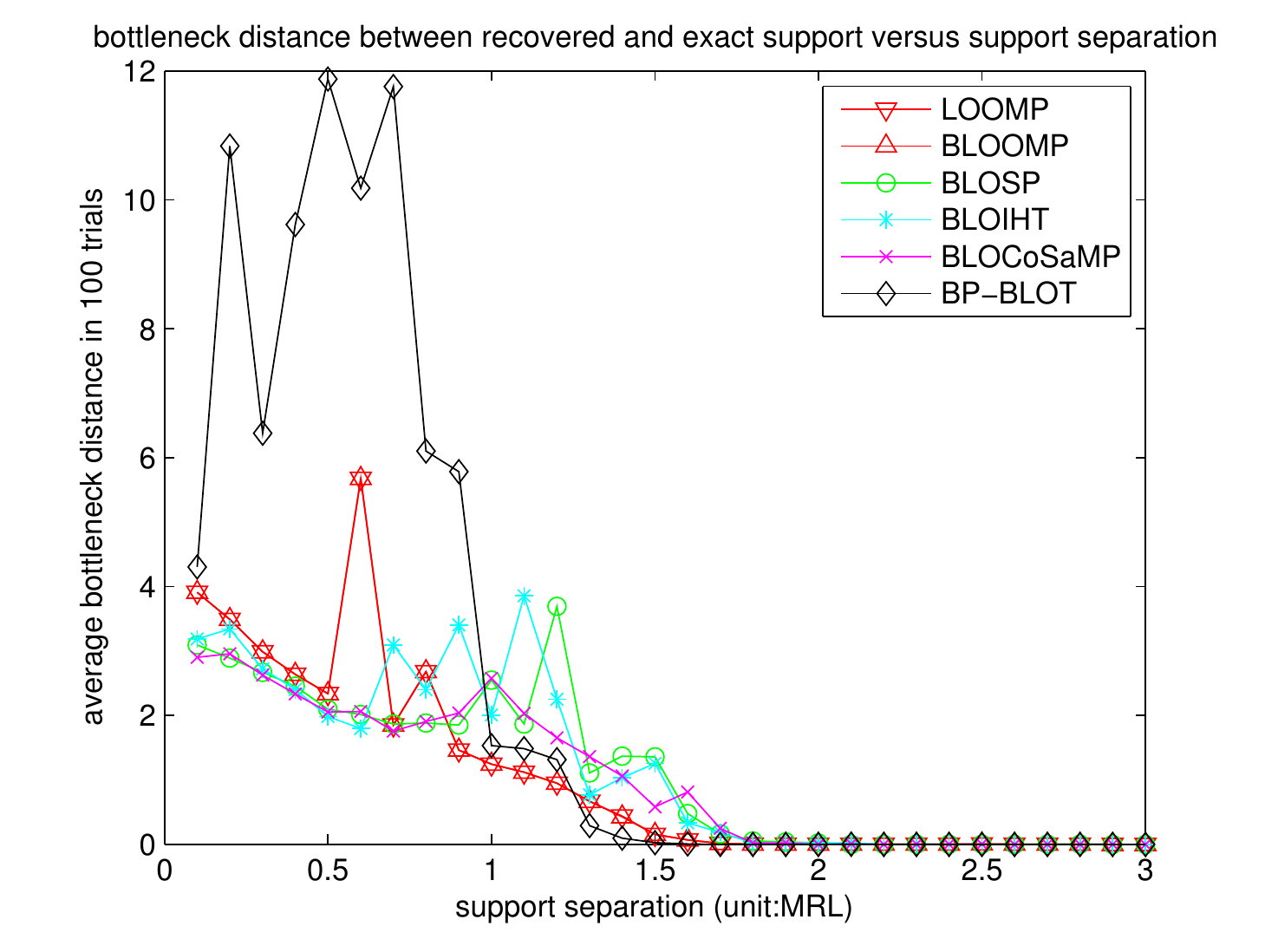}
\includegraphics[width=8cm]{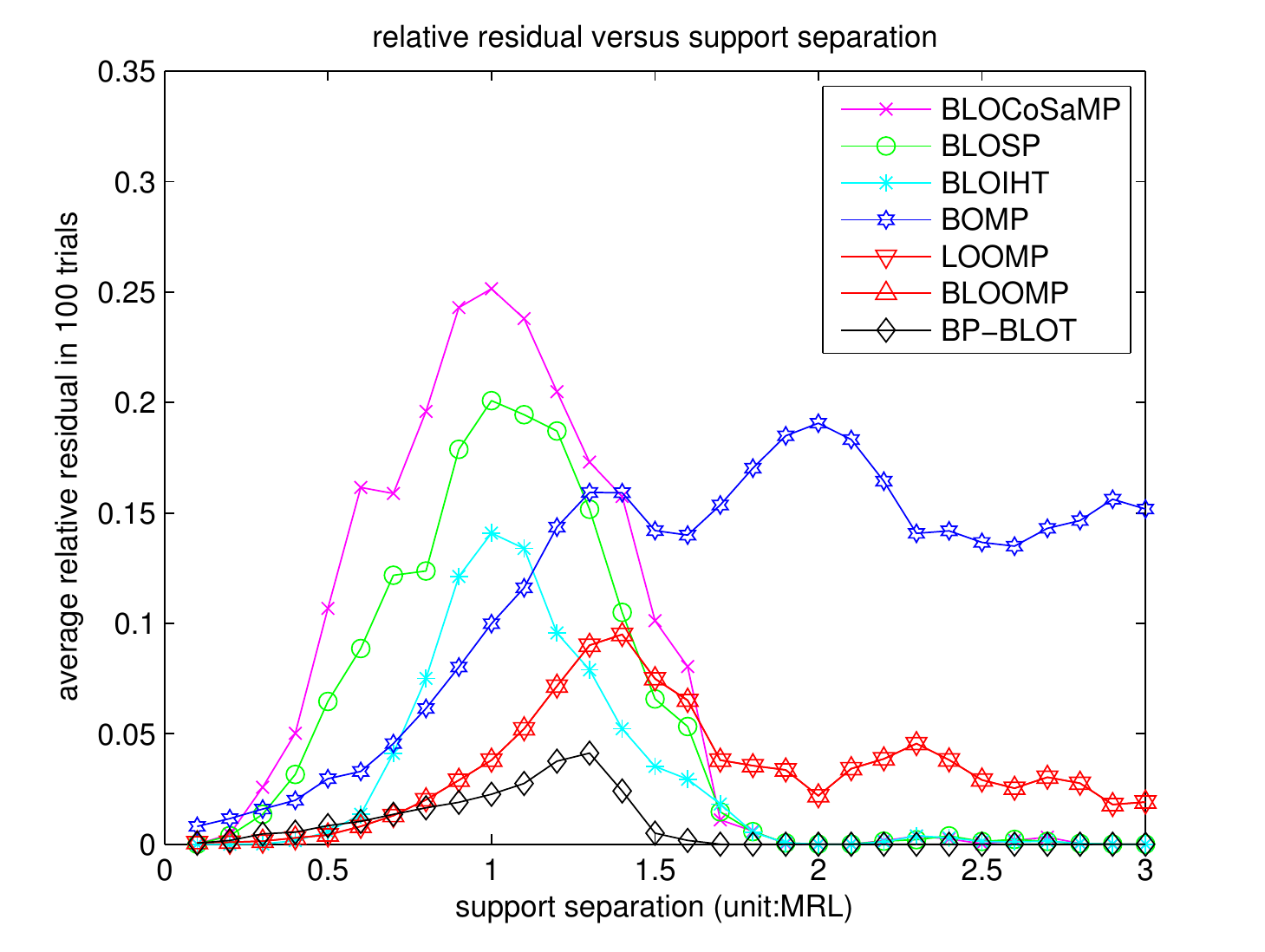}
\includegraphics[width=8cm]{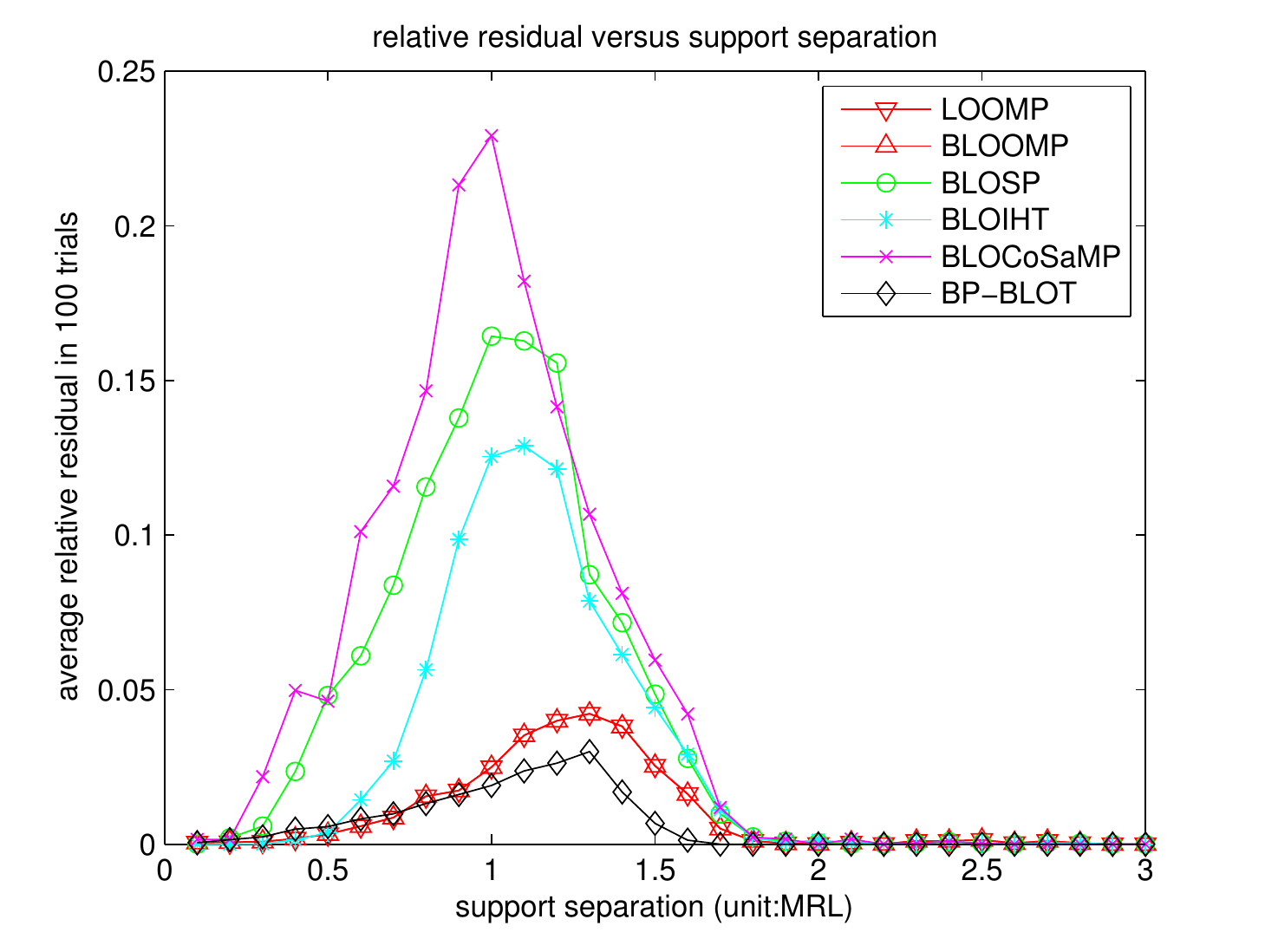}
  \caption{The average Bottleneck distance for dynamic range 1 (top left) and 10 (top right), and the
  relative residual for dynamic range 1 (bottom left) and 10 (bottom right)  versus separation of
  objects}
  \label{fig422}
\end{figure}

 Figure \ref{fig422} shows 
the averaged Bottleneck distance between
the reconstruction and the true object support (top panels) and
the residual (bottom panels) as a function of
the object spacing.  For this class of objects, 
 BP-BLOT has the best resolution for dynamic range up to 10  followed closely  by
BLOIHT for dynamic range 1 and by BLOOMP/LOOMP for dynamic
range 10. The {\em high precision} (i.e. nearly zero Bottleneck distance) resolution ranges from about 1.5 RLs for  BP-BLOT to  about 1.7 RLs for the rest. Consistent with what we have
seen in Figure \ref{fig411-1}, the resolving power of BLOOMP/LOOMP
improves significantly as the dynamic range increases while that of
BP-BLOT deteriorates. Note that in the case of unit dynamic range,
BOMP recovers the support as well as BLOOMP/LOOMP does.
BOMP, however, produces a high level of residual  error 
even when objects are widely separated.  There is
essentially no difference between the resolving power of BLOOMP
and LOOMP. 
%The {\em low precision} resolution ranges from
%about 1.2 RL for BP-BLOT to 
%about 1.5 RLs for BLOSP, BLOIHT and LOOMP/BLOOMP. 

It is noteworthy that the relative residuals of  all tested algorithms peak at separation  between  1 and 1.5 RLs  and decline  to zero as 
the object separation decreases. In contrast, the average Bottleneck distances  increase as the separation decreases
except for BP-BLOT when  the separation drops below
0.5 RL. When the separation drops below 1 RL,  the Bottleneck
distance between the objects and the reconstruction indicates
that the objects are not well recovered by any of the
algorithms (top panels). The vanishing residuals in this regime  indicates  nonuniqueness of 
sparse solutions.

Figures \ref{fig44}-\ref{fig422} show negligible  difference between the performances of LOOMP and BLOOMP.  %Figure \ref{fig47-1} shows an instance where  band exclusion enhances reconstruction. 
%The band exclusion step
%repels the mis-construct of LOOMP (the second spike from the right in the left panel) to the right location (the first spike from the
%left in the right panel). 
To investigate their distinction more closely, we test 
the stability with respect to the gridding error for
various $F$'s (cf. Figure \ref{fig328-1}).  We consider 
randomly phased objects of dynamic range 10 that 
are randomly located in $[0,1000)$ and separated by at least $3$ RLs.
We compute the reconstruction error in the Bottleneck distance and $L^2$-norm averaged over 100 trials with $5\%$ external noise
and show the result in  Figure \ref{fig328-2}. 
Evidently the advantage of BLOOMP over LOOMP 
lies in the cases when  the refinement factor $F$ is less
than 10 and the gridding error is sufficiently large. When $F\geq 10$, the difference between their performances  with respect to gridding error 
 is  negligible. On the other hand, for $F=5$ both
 BLOOMP and  LOOMP's reconstructions  would have
 been considered a failure given the magnitudes of error
 in the Bottleneck distance. 
 %Recall  that
 %their performances with respect to external noise
 %are identical (cf. Figure \ref{49-1}). 

%However, larger refinement factor yields finer grid, so the computational complexity increases. There is a tradeoff between the precision and the computing time. 

%Similar effects hold true for LOOLS versus BLOOLS. 

%In Figure \ref{fig47-2} and \ref{fig49-2}  we
%see that LOSP and LOIHT completely miss the objects (left panel)
%while BLOSP and BLOIHT perfectly reconstruct the objects (right panel). 
%Likewise, Figure \ref{fig49-3} and \ref{fig49-4} 
%shows how the band exclusion step enhances the
%performance of thresholded Lasso and BPDN (middle panels)
%to achieve perfect localization of the objects (right panels).

\commentout{
\begin{figure}
\includegraphics[width=15cm, height=8cm]{March26/LOOMPandLOBOMP.eps}
\caption{The red asterisks represent the absolute values of  randomly phased objects of dynamic range 20 and the blue
stars are the reconstructions by LOOPM (left) and BLOOMP (right) with $5\%$ noise. }
\label{fig47-1}
\end{figure}
}

\begin{figure}
\includegraphics[width=8cm]{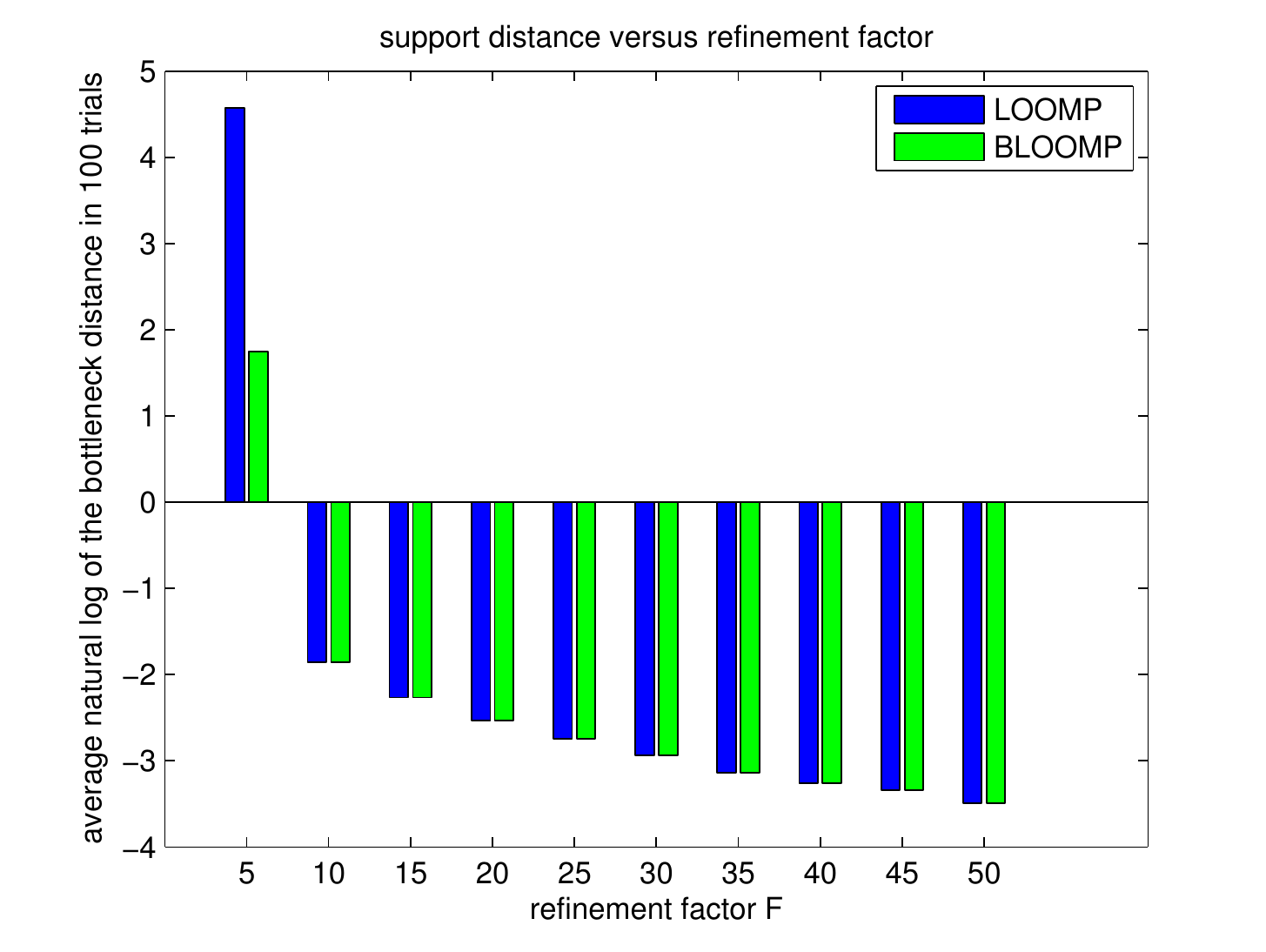}
\includegraphics[width=8cm]{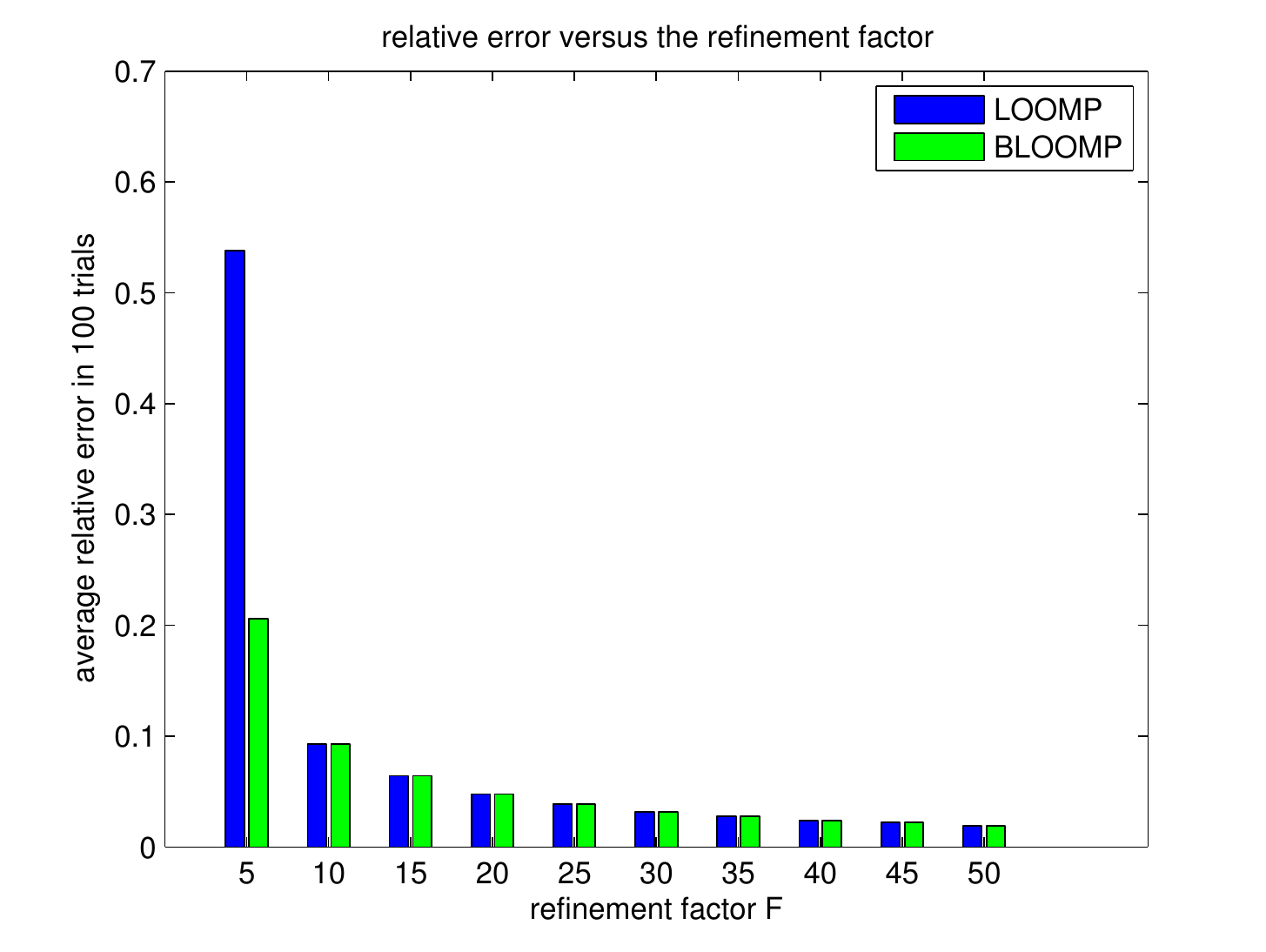}
\caption{The reconstruction error of LOOMP and  BLOOMP as measured by the Bottleneck
distance (left, semi-log) and  the relative $L^2$-norm (right)
%and the relative error (right) 
for 10 objects of dynamic range 10
as a function of $F$.}
\label{fig328-2}
\end{figure}
%\begin{figure}
%\includegraphics[width=15cm]{March28/error.pdf}
%\caption{The relative errors of the reconstructions
%(LOOMP, left panel; BLOOMP, right panel) as a function of 
%the refinement factor $F$ of the grid.}
%\label{fig328-3}
%\end{figure}

\section{Comparison with other algorithms in the  literature}\label{sec6}
The present work is inspired by the performance guarantee established in \cite{music} that
the MUSIC algorithm
aided by BMT 
produces  a support  estimate that is within 1 RL of  the locations of
sufficiently separated objects.

In comparison to  other CS literature on  coherent and redundant dictionary, our work resembles those of  \cite{CEN}
and \cite{DB}, albeit with a different   perspective. In fact, the algorithms developed
in \cite{CEN} and \cite{DB} can not
be applied to the spectral estimation problem formulated
in the Introduction. 
The algorithms developed here, however, can be
applied to their frame-based setting and this is what
we will do below for the purpose of comparison.

Following \cite{DB} we consider the following problem  
\beq
\label{101}
\bb= \bPhi \by + \mbe
\eeq
where $\bPhi$ is a $N \times R$ i.i.d Gaussian matrix of mean $0$ and variance $\sigma^2$.  
The signal to be recovered is given by $\by =\bPsi \mbx$ where $\bPsi$ is the  over-sampled, redundant DFT frame
\beq
\label{102}
\Psi_{k,j} = \frac{1}{\sqrt{R}} e^{-2\pi i \frac{(k-1)(j-1)}{RF}},\quad k=1,...,R,\quad j=1,...,RF.
\eeq
As before $F$ is  the refinement factor. Combining
(\ref{101}) and (\ref{102}) we have the same  form (\ref{linearsystem}) with $
\bA=\bPhi\bPsi
$ whose coherence pattern is shown in Figure \ref{fig10} similar to
Figure \ref{fig1}.

In the simulation, we take $N=100, R = 200, F = 20$ and $\sigma = \frac{1}{\sqrt{N}}$ so that $\bA\in \IC^{100\times 4000}$ as before.  We use randomly located and phased $\mbx=(x_j)$ which are  well separated in the sense  that $|x_j - x_k| \geq 3, \forall j \neq k$.

 The algorithm proposed in \cite{DB}, Spectral Iterative Hard Thresholding (SIHT),  relies on a measurement  matrix $\bPhi$ satisfying 
some form of RIP  and so does 
 the frame-adapted BP  proposed in \cite{CEN} 
 \beq
 \label{112}
 \text{min} \|\Psi^{\star}\bz \|_1 \qquad \text{ s.t } \|\Phi \bz -\bb \|_2 \le \|\mbe\|_2
\eeq
which, in addition, requires the analysis coefficients $\bPsi^*\by$ to be $s$-sparse
or $s$-compressible. 
%whose performance guarantee, in addition,  requires  a tight frame $\bPsi$ and sparse or compressible  $\bPsi^*\by$. 
The RIP is satisfied by i.i.d. Gaussian matrices of proper sizes. Their approaches are not  applicable, however, when the sensing matrix $\bA$ can not be
decomposed into a product $\bPhi\bPsi$ where $\bPhi$ 
has RIP. 

\begin{figure}
\includegraphics[width=8cm]{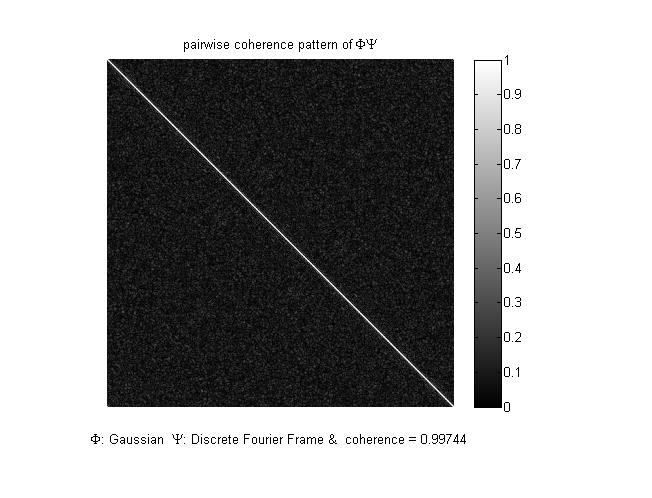}
\includegraphics[width=8cm]{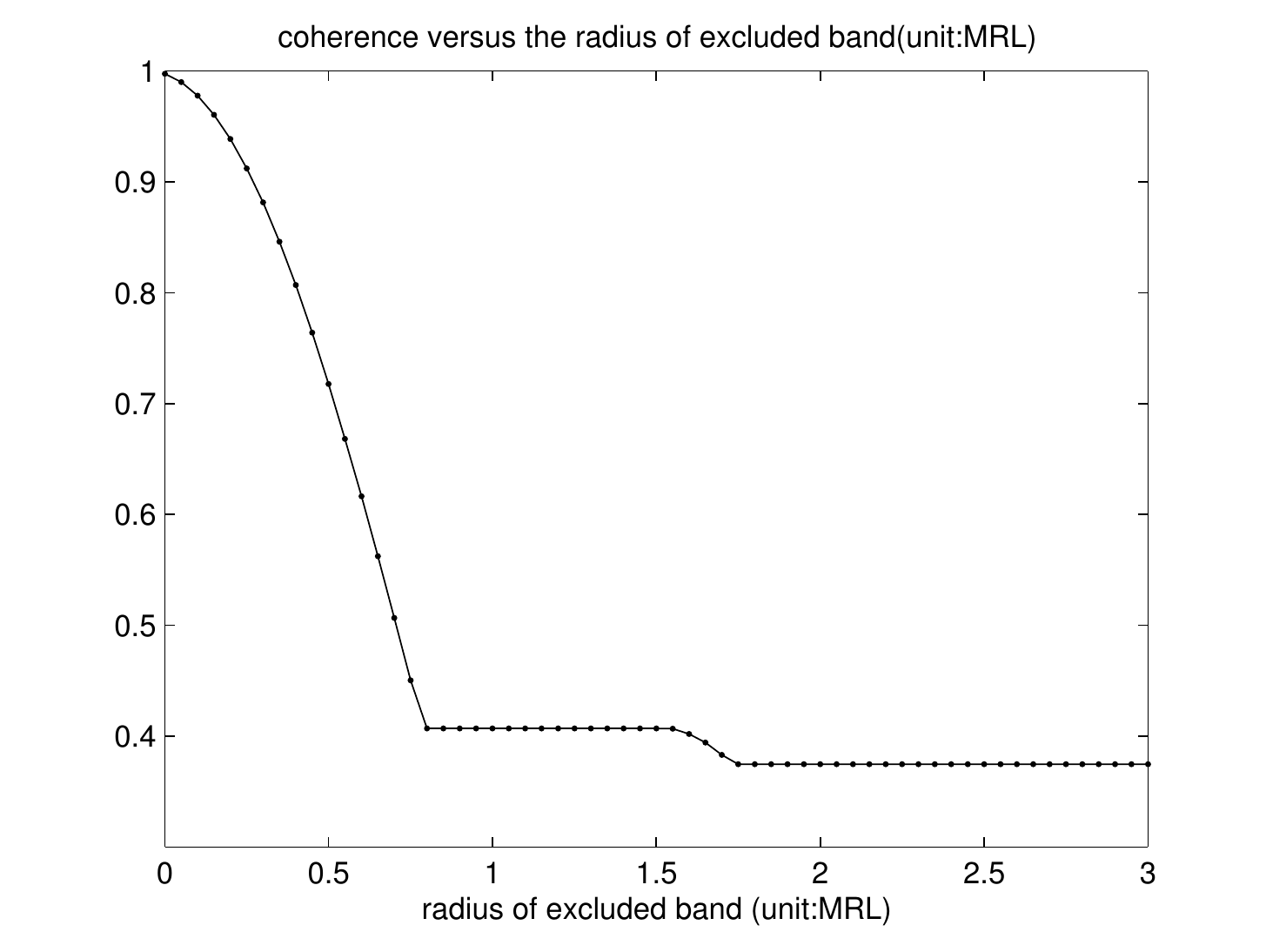}
\caption{The coherence pattern of $\bA=\bPhi\bPsi$}
\label{fig10}
\end{figure}

In the language of digital signal processing  (\ref{112}) is  a  $L^1$-analysis method 
while the standard BP or Lasso and the BLOT-version are {$L^1$-synthesis} methods. Both methods are based on the same
principle of representational sparsity.  In principle, the synthesis approach (such as the BP, Lasso and all BLO-based algorithms) is more general than the analysis approach since every analysis method can be
recast as a synthesis method while many synthesis formulations 
have no equivalent analysis form. 
In practice, however, each of them performs the best on different
types of signals 
\cite{EMR}.

For example, the bottom right panel of Figure \ref{fig19}  shows the absolute values of
the component of the vector  $\bPsi^*\by$ in
the order of descending magnitude. Clearly $\bPsi^*\by$ is
neither sparse nor  compressible. The shape of the curve can
be roughly understood as follows.  The DFT frame $\bPsi$ has a coherence band of roughly 1.5 RLs, corresponding to 30 columns, 
and since the 10 components in $\mbx$ are widely separated  $\bPsi^*\by$ has about $300$ significant components. 
The long tail of the curve is due to the fact that the pairwise
coherence of  
$\bPsi$ decays slowly  as the separation increases. 
Therefore the analysis approach (\ref{112}) would require a far higher number of 
measurements than 100  for accurate reconstruction. 

For the analysis approach, the main quantity of interest is $\by$. So 
in our comparison experiments, we measure
the performance  by the relative error $\|\hat\by -\by\|_2 / \|\by\|_2$ 
as in \cite{CEN,DB}.
We compute  the averaged relative errors as  dynamic range, noise level and the number of measurements vary.   In each of the 100 trials, 10 randomly phased and located 
objects (i.e. $\mbx$) of dynamic range 1 and i.i.d.   Gaussian $\Phi$ are generated. 

\begin{figure}
\includegraphics[width=8cm]{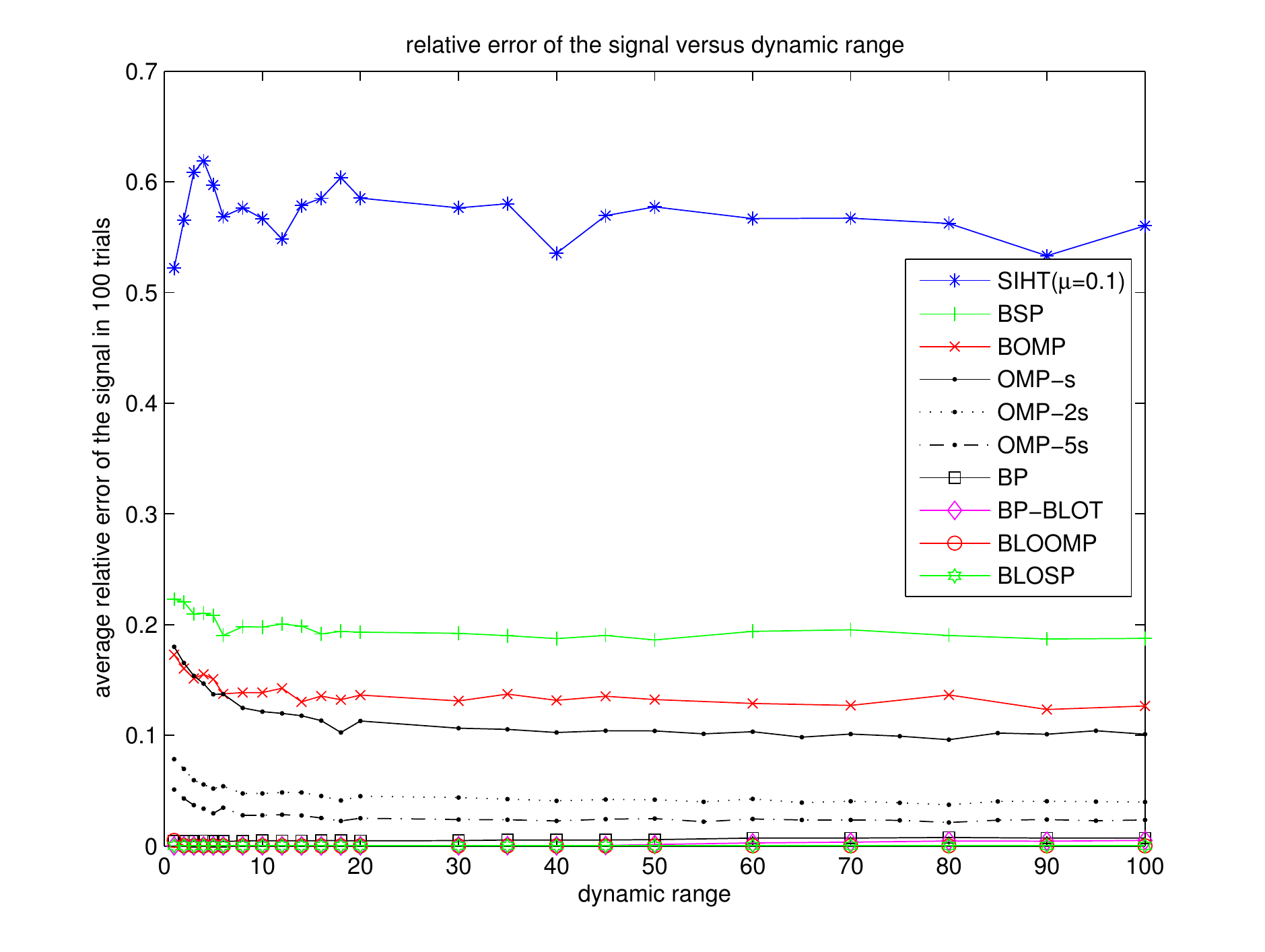}
\includegraphics[width=8cm]{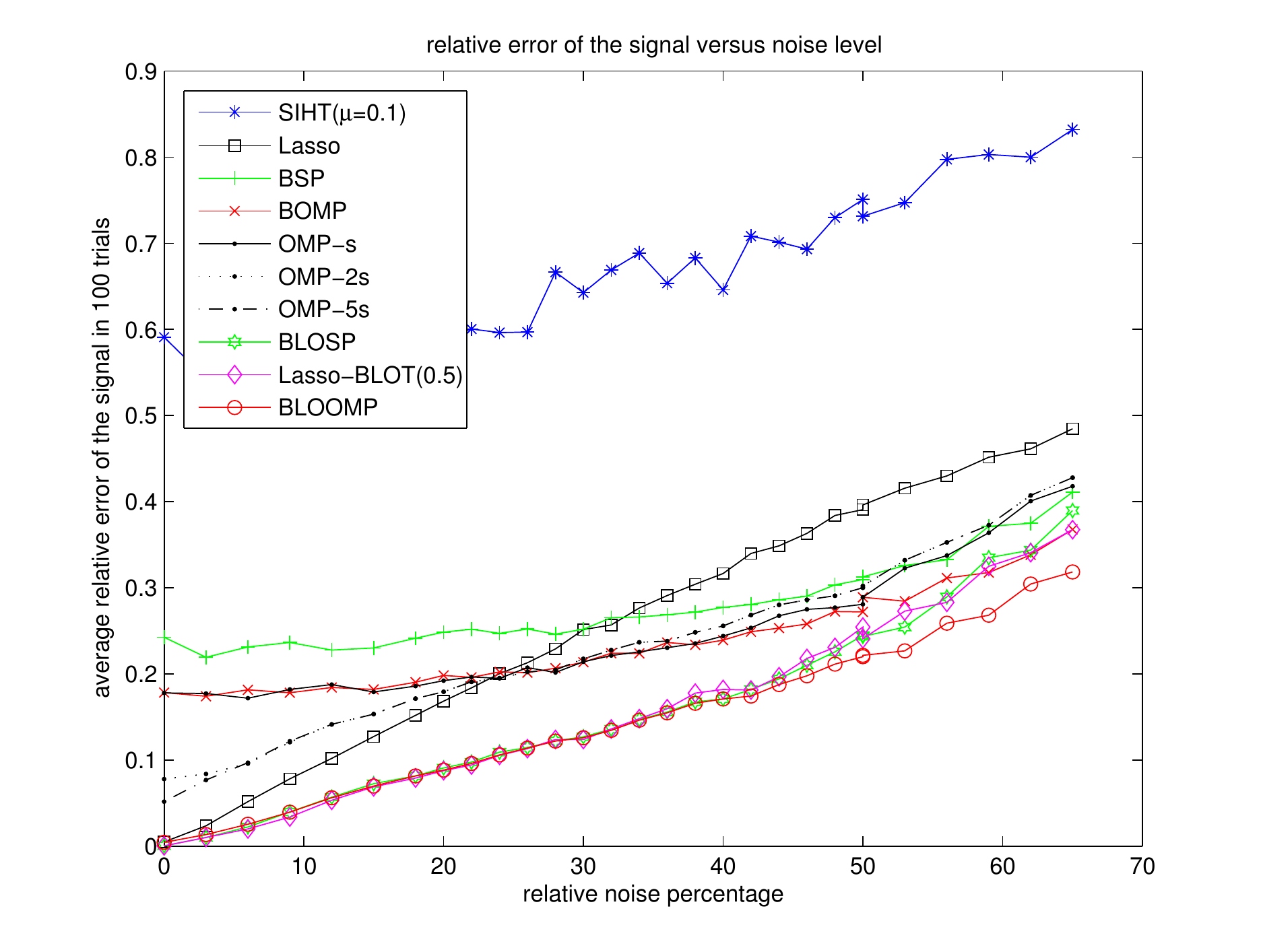}
\includegraphics[width=8cm]{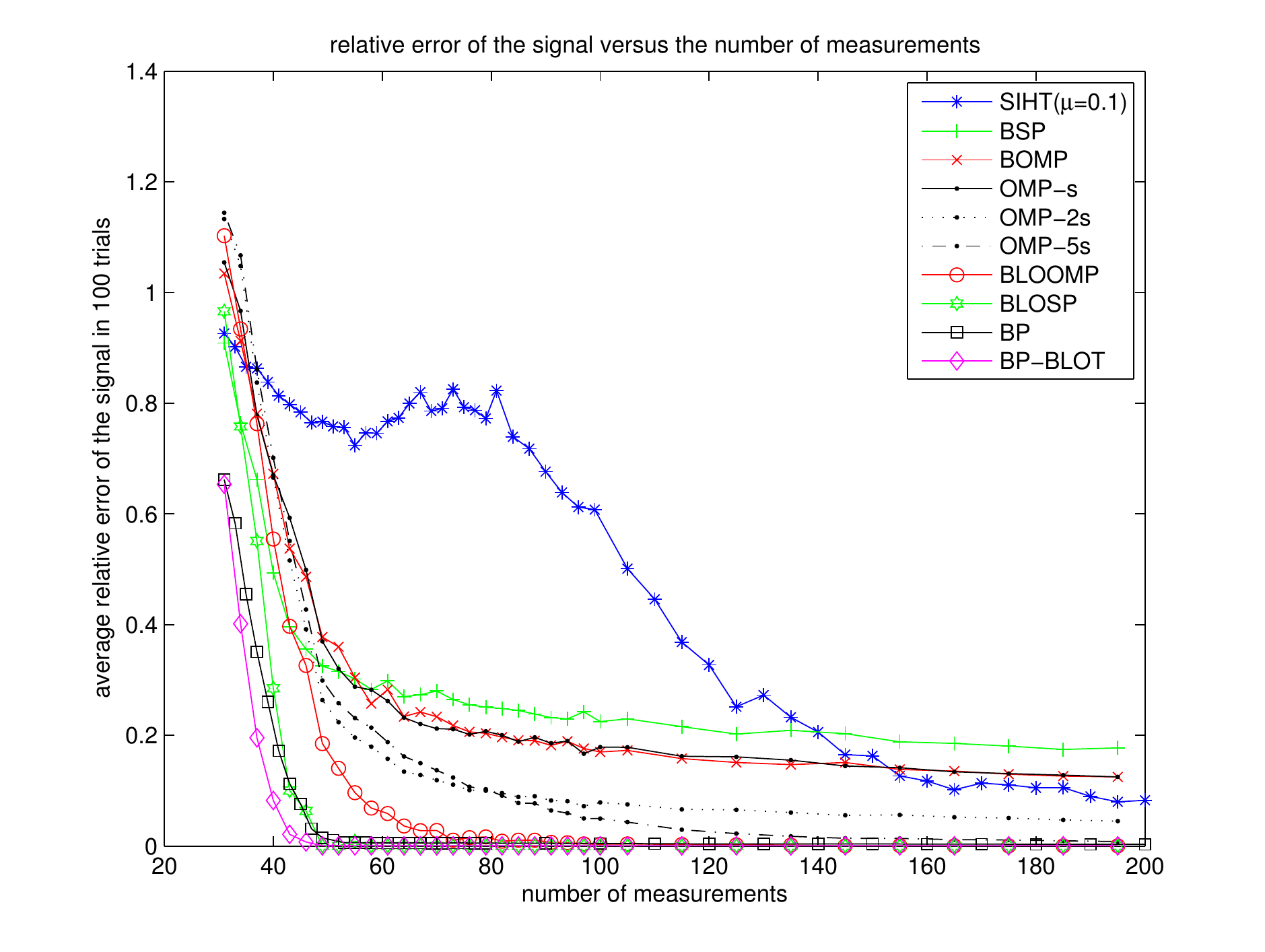}
\includegraphics[width=8cm]{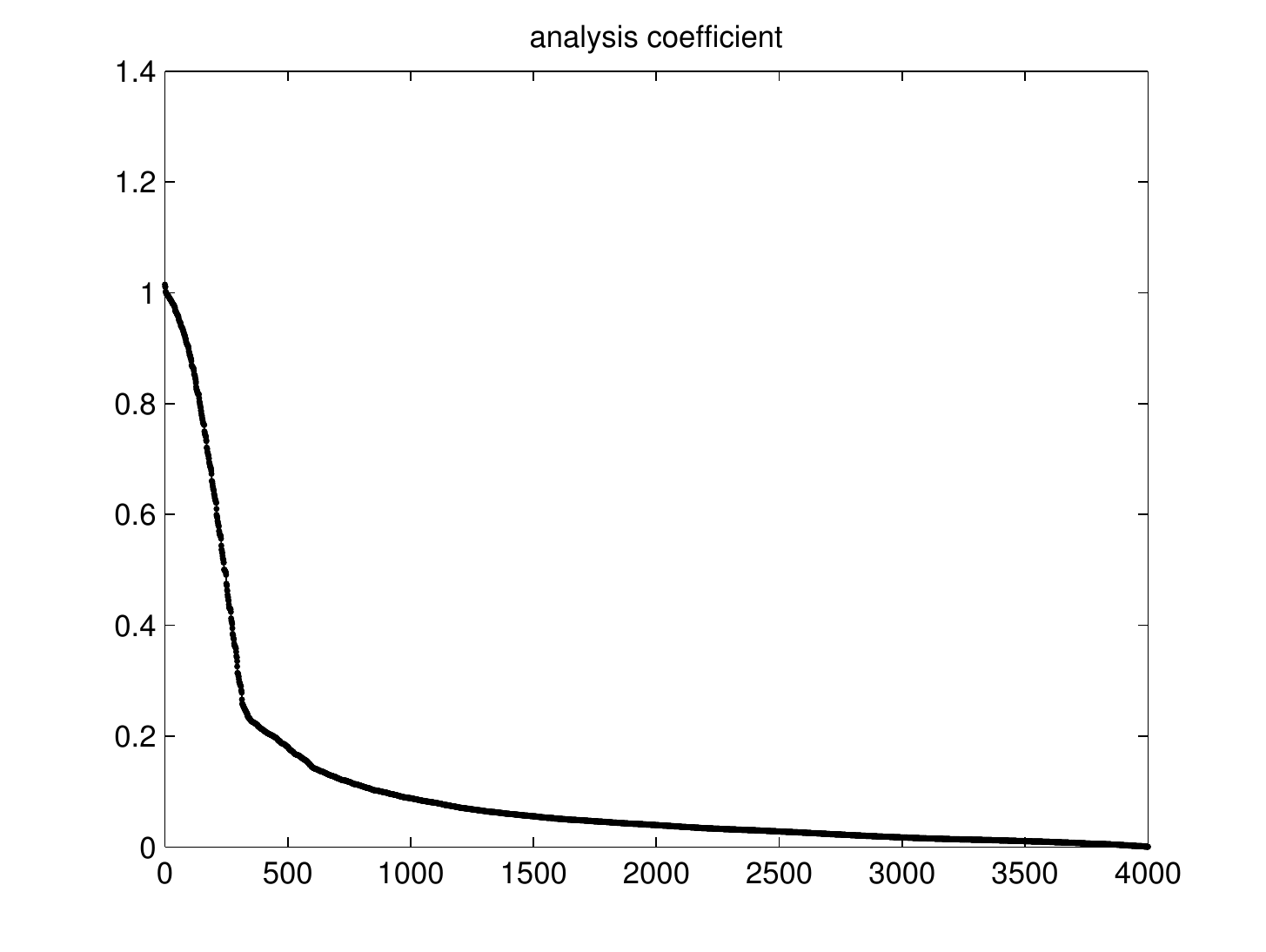}
\caption{Relative errors versus dynamic range (top left), relative noise (top right) and number of measurements (bottom left). The bottom right panel shows the magnitudes  of a realization of the coefficient vector $\bPsi^*\by$ in the descending order
of magnitude.  }
\label{fig19}
\end{figure}

The results are shown in Figure \ref{fig19}.   Consistently across the
top left, top right and bottom left panels, the smallest error is achieved by BP-BLOT and Lasso-BLOT with (\ref{10.3}) with respect to 
dynamic range (top left), noise (top right)
and number of measurements (bottom left). The latter two plots
are for dynamic range 1. 
BLOOMP and BLOSP perform among the best with respect
to dynamic range (top left) and noise (top right)
and can achieve the minimum error  with increasing number of measurements (bottom left).   
The SIHT algorithm requires much higher number of
measurements to get its error down (bottom left)
and produces the highest level of error w.r.t. dynamic
range (top left) and noise (top right). 

In Figure \ref{fig19}, we include the performance curves of 
OMP with various sparsities (s, 2s, 5s) as well as 
the standard BP/Lasso without BLOT.  Not surprisingly, the relative error of OMP reconstruction
decreases with increasing sparsity. 

It is noteworthy that the BLOT technique reduces the BP/Lasso reconstruction errors:
BP without BLOT produces $0.5\%$ relative error with respect to 
dynamic range (top left, not clearly visible)
while  BP-BLOT, along with BLOSP and BLOOMP, produces  $10^{-16}$ relative error.  Moreover,  Lasso with the optimized parameter (\ref{10.3}) but without BLOT produces significantly higher errors 
than Lasso-BLOT, BLOOMP and BLOSP (top right).

\section{ Conclusions and Discussions}\label{sec7}

We have developed and tested various algorithms for sparse recovery  with
highly coherent sensing matrices arising in discretization of 
 imaging problems in continuum 
such as radar  and medical imaging when the grid spacing
is below the Rayleigh threshold \cite{Don1}.

We have introduced two essential  techniques to deal with
unresolved grids: band exclusion and local optimization. We have embedded these techniques in various CS algorithms and 
performed systematic tests on them. When embedded in OMP, both  BE and
LO steps manifest their advantage in dealing with larger
dynamic range. When embedded in SP, CoSaMP, IHT,  BP and
Lasso the effects are more dramatic. 

We have studied these modified  algorithms from four
performance metrics: dynamic range, noise stability, 
sparsity  and resolution. 
With respect to  the first two metrics (dynamic range and noise stability),
BLOOMP is the best performer. 
With respect to  sparsity, BLOOMP is the best
performer for high dynamic range while for dynamic range
near unity BP-BLOT and Lasso-BLOT with the optimal regularization parameter have the best performance. 
BP-BLOT also has the highest resolving power
up to certain dynamic range. Lasso-BLOT's performance, however,
is sensitive to the choice of regularization parameter

One of the 
most  surprising  attributes of BLOOMP is 
improved  performance  with respect to sparsity
at {\em larger}  dynamic range and low noise. 

 The algorithms
BLOSP, BLOCoSaMP and  BLOIHT  are good alternatives  to 
 BLOOMP and BP/Lasso-BLOT: they are faster
than both BLOOMP and BP/Lasso-BLOT  and shares, to a lesser degree, BLOOMP's desirable attribute with respect to dynamic range.   

Comparisons  with existing algorithms (SIHT and frame-adapted
BP) demonstrate the superiority of BLO-enhanced   algorithms    for reconstruction of sparse objects separated above the Rayleigh length. 

Finally to add to the debate of analysis versus synthesis  \cite{EMR},   the performance of BLO-based algorithms
for sparse, widely separated objects 
are independent of the refinement factors $F$ representing  redundancy, and,  since the discretization error decreases
with $F$, the reconstruction errors of the BLO-based synthesis 
methods  also decrease with $F$
in stark contrast to the examples presented in \cite{EMR}
which show that the synthesis approach degrades
with redundancy.

\commentout{
Finally, reconstruction with unresolved grids
is the first step toward superresolution as analyzed  in \cite{Don1},
even though our proposed algorithms can not resolve
length below 1 RL in terms of  the Bottleneck distance
which is a worst case metric. 
As Donoho says \cite{Don1}, 
``We do not exhibit a practical method for achieving stable recovery, but instead exhibit inequalities which show that, in the sense of the theory of optimal recovery, the object admits of stable reconstruction."
}

\bibliographystyle{plain}	% (uses file "plain.bst")
\bibliography{myref}		% expects file "myrefs.bib"
 
\end{document}